\providecommand{\BBb}[1]{{\mathbb{#1}}}
\providecommand{\cal}[1]{{\mathcal{#1}}}   
\newcommand{\ang}[1]{\langle#1\rangle}
\newcommand{\Bcirc}{\overset{\lower 1.5pt%
              \hbox{$@,@,@,@,@,\scriptscriptstyle\circ$}}B{}}
\newcommand{\Binfty}{\overset{\lower 1.5pt%
              \hbox{$@,@,@,@,@,\scriptscriptstyle\infty$}}B{}}
\newcommand{\bigdot}{\mathbin{\raise.65\jot\hbox{$\scriptscriptstyle\bullet$}}}
\newcommand{\B}{{\BBb B}}
\newcommand{\C}{{\BBb C}}
\newcommand{\Cn}{{\BBb C}^n}
\newcommand{\ch}{\operatorname{ch}}
\newcommand{\coker}{\operatorname{coker}}
\newcommand{\disc}{\operatorname{disc}}
\newcommand{\dual}[2]{\langle\,#1,\,#2\,\rangle}
\newcommand{\dv}{\operatorname{div}}
\newcommand{\erd}{\overset{\lower 1pt\hbox{\large.}}{e}
                  \overset{\lower 1pt\hbox{\large.}}{r}}
\newcommand{\ess}{\operatorname{ess}}
\newcommand{\Fcirc}{\overset{\lower 1.5pt%
               \hbox{$@,@,@,@,@,\scriptscriptstyle\circ$}}F{}}
\newcommand{\fracc}[2]{{
                \textstyle\frac{#1}{\raise 1pt\hbox{$\scriptstyle #2$}}}}
\newcommand{\fracci}[2]{{\frac{#1}{\raise 1pt\hbox{$\scriptscriptstyle #2$}}}}
\newcommand{\lap}{\operatorname{\Delta}}
\newcommand{\loc}{\operatorname{loc}}
\newcommand{\mlap}{-\!\operatorname{\Delta}}
\newcommand{\nrm}[2]{\|#1\|_{#2}}
\newcommand{\norm}[2]{\mathinner{\|}#1\,|#2\|}
\newcommand{\op}[1]{\operatorname{#1}}
\newcommand{\N}{\BBb N}
\renewcommand{\Re}{\operatorname{Re}}
\newcommand{\R}{{\BBb R}}
\newcommand{\Rn}{{\BBb R}^{n}}
\newcounter{enmcount}\renewcommand{\theenmcount}{{\rm\arabic{enmcount}}}
\renewenvironment{enumerate}{%
\begin{list}{{\llap{\rm(\theenmcount)}}}{\setlength{\labelwidth}{\leftmargin}%
\usecounter{enmcount}}}{\end{list}}
\newcounter{rmcount}\renewcommand{\thermcount}{{\rm\roman{rmcount}}}
\newenvironment{rmlist}{%
\begin{list}{{\rm(\thermcount)}}{\setlength{\labelwidth}{\leftmargin}%
\usecounter{rmcount}}}{\end{list}}
\newcounter{Rmcount}\renewcommand{\theRmcount}{{\rm\Roman{Rmcount}}}
\newenvironment{Rmlist}{%
\begin{list}{{\rm(\theRmcount)}}{\setlength{\labelwidth}{\leftmargin}%
\usecounter{Rmcount}}}{\end{list}}
\newcommand{\scal}[2]{(\,#1\,|\, #2\,)}
\newcommand{\Set}[2]{\bigl\{\,#1\bigm| #2\,\bigr\}}
\newcommand{\supp}{\operatorname{supp}}
\renewcommand{\hat}[1]{\overset{{\scriptscriptstyle \wedge}}{#1}}
\numberwithin{equation}{section}
\newtheorem{thm}{Theorem}
\numberwithin{thm}{section}
\newtheorem{prop}[thm]{Proposition}
\newtheorem{lem}[thm]{Lemma}
\newtheorem{cor}[thm]{Corollary}
\theoremstyle{definition}
 \numberwithin{exercise}{section}
\theoremstyle{remark}
\newtheorem{rem}[thm]{Remark}
\title[Spectral properties of Witten-Laplacians]{{\sloppy
On the spectral properties of Witten-Laplacians,\linebreak[4] 
their range projections and\linebreak[4]
Brascamp--Lieb's inequality}} 
\author[Jon Johnsen]{{Jon Johnsen}$^*$}
\address{
Department of Mathematical Sciences, Aalborg University
 Fredrik~Bajers Vej 7E,\newline
 DK--9220~Aalborg~O, Denmark}
\email{jjohnsen@math.auc.dk}
\thanks{$^*$Supported by TMR grant FMRX-CT960001 of 
the European Commision,
`PDE and QM' at Universit\'e de Paris-Sud, France; partly by the Danish
Natural Sciences Research Council, grant 9700987.
\\[2\baselineskip]{\tt Appeared in Integral equations and operator theory, {\bf 36\/} (2000), 288--324}}
\subjclass{82B26; secondary 47G30, 47N30}
\begin{document}
\sloppy 

\newcommand{\A}{A_0}
\renewcommand{\AA}{A_1}
\newcommand{\Ak}{A_k}
\newcommand{\cov}{\op{cov}}
\newcommand{\var}{\op{var}}
\newcommand{\W}{\lap^{(0)}_{\Phi}}
\newcommand{\WW}{\lap^{(1)}_{\Phi}}
\newcommand{\Wk}{\lap^{(k)}_{\Phi}}

\begin{abstract}
  A study is made of a recent integral identity of B.~Helffer and
   J.~Sj{\"o}strand, which for a not yet fully determined class of  probability 
   measures yields a formula for the covariance of two functions (of
   a stochastic variable); in comparison with the
     Brascamp--Lieb inequality, this formula is a more flexible and in some contexts
     stronger means for the analysis of correlation asymptotics in statistical
     mechanics. Using a fine version of the
     Closed Range Theorem, the identity's validity is shown to be equivalent
   to some explicitly given 
     spectral properties of Witten--Laplacians on Euclidean space, and
     the formula is moreover deduced from the obtained abstract expression
     for the range projection. As a corollary, a generalised version of
     Brascamp--Lieb's inequality is obtained. For a certain class of
   measures occuring in statistical mechanics, explicit criteria for the
     Witten-Laplacians are found from the Persson--Agmon formula, from
     compactness of embeddings and from the
     Weyl calculus, which give results for closed range, strict positivity,
     essential self-adjointness and domain characterisations. 
\end{abstract}

\maketitle


\baselineskip=1.3\baselineskip



\section{Introduction and Main Results}  \label{intr-sect}
\enlargethispage{3\baselineskip}\thispagestyle{empty}

\subsection{Background} 
In 1976, H.~J.~Brascamp and E.~H.~Lieb \cite{BrLi76} proved the following
inequality for an arbitrary function $f$ in $C^1(\Rn)\cap L^2(\mu)$, when
the given measure $d\mu=e^{-\Phi}\,dx$ has a real-valued, \emph{strictly
convex} $C^2$ `potential' $\Phi$ with Hessian $\Phi''=(\partial^2_{jk}\Phi)_{j,k}$:  
\begin{equation}
  \int_{\Rn} |f(x)-\ang{f}|^2e^{-\Phi(x)}\,dx
  \le \int_{\Rn} \bigl(\nabla f(x)^T\Phi''(x)^{-1} \overline{\nabla f(x)}\bigr)
  e^{-\Phi(x)} \,dx;
  \label{Brascamp--Lieb-ineq}
\end{equation}
the measure $\mu$ is finite and may be normalised to $\int d\mu=1$ without loss of
generality (by adding $\log\int_{\Rn}d\mu$ to $\Phi$) which is done tacitly
throughout, so $\ang{f}:=\int fe^{-\Phi}\,dx$ equals $f$'s mean.

Since then this inequality has been used in physics, where the strict
convexity assumption on $\Phi$ in some contexts is a serious restriction;
e.g.\ this is the case for the analysis of asymptotics of correlations in 
statistical mechanics.

As another technique for such problems, B.~Helffer and J.~Sj{\"o}strand 
have recently  introduced an exact formula
\cite{HeSj93,Sj96,Hel96,Hel97'}
for the \emph{covariance} of two functions $g_1$, $g_2$ in 
$L^2(\mu)$, i.e.\ for
$\cov(g_1,g_2):=
\int_{\Rn}(g_1-\ang{g_1})\overline{(g_2-\ang{g_2})}e^{-\Phi}\,dx$
(in comparison the variance of $f$ enters in \eqref{Brascamp--Lieb-ineq}). 
Denoting the inner product of both $L^2(\Rn,\mu)$ and $L^2(\Rn,\mu,\Cn)$ by
$\scal{\cdot}{\cdot}_{\mu}$ for simplicity's sake, their identity may be
written as follows:
\begin{equation}
  \scal{g_1-\ang{g_1}}{g_2-\ang{g_2}}_{\mu}
  =\scal{\AA^{-1}\nabla g_1}{\nabla g_2}_{\mu}.
  \label{Brascamp--Lieb-id}
\end{equation}
This uses two elliptic differential operators $\A\ge0$ and $\AA\ge0$ on $\Rn$
(although $\A$ does not appear explicitly in \eqref{Brascamp--Lieb-id}); these
are equivalent, as observed in \cite{Sj96}, to Witten's Laplacians
\cite{Wit82}, and they have the expressions
\begin{equation}
  \A =\mlap+\Phi'\cdot\nabla,  \qquad
  \AA =\A\otimes I+\Phi''  .
  \label{AAA-eq}
\end{equation}
In the proofs of Helffer and Sj{\"o}strand, $\Phi$ has had a rather specific
nature, e.g.~with $\Phi''_{jk}$ bounded and $x\cdot\Phi'\ge C|x|^{1+\omega}$
in \cite{Sj96}; the $\Phi$ treated in \cite{Hel96} is essentially
$|x|^4-|x|^2$, see Example~\ref{exmp-ssect} below.  
Formula \eqref{Brascamp--Lieb-id} has also been used by 
A.~Naddaf and T.~Spencer \cite{NaSp96}, V.~Bach, T.~Jecko and
J.~Sj{\"o}strand \cite{BaJeSj97}, B.~Helffer~\cite{Hel97''} and others. 
Indirectly $\A$, $\AA$ appeared earlier in \cite{Sj93ii,HeSj93,Hel93}. 
 
Concerning formula \eqref{Brascamp--Lieb-id}, it should be noted that
$g_j$ only enters in the covariance through $Pg_j:=g_j-\int
g_j\,d\mu$, which is the orthogonal projection onto $L^2(\mu)\ominus\C$,
i.e.\ onto the complement of the constant functions. Because the
gradient provides another means to remove the part of $g_j$ lying in $\C$, 
it is natural to have $\nabla g_1$ and $\nabla g_2$ on the right hand side of
\eqref{Brascamp--Lieb-id} and to have an inverse of a second order differential
operator, like $\AA^{-1}$, to counteract the gradients. 

Viewed thus, \eqref{Brascamp--Lieb-id} may seem plausible, and this article
presents a study of it, resulting first of all in more general
sufficient conditions for the formula; secondly, conditions that are
equivalent to \eqref{Brascamp--Lieb-id} are given at an abstract level
(although these two improvements are formally substantial, the consequences for
statistical mechanics are to be investigated). 
Thirdly a more systematic and streamlined
approach to \eqref{Brascamp--Lieb-id} is presented. 

\begin{rem}
For the general importance of formula \eqref{Brascamp--Lieb-id}, 
recall that for a stochastic variable $X\colon \Omega\to \Rn$ 
with distribution $\mu$ on $\Rn$, the left hand side of
\eqref{Brascamp--Lieb-id} equals $\cov(g_1(X),g_2(X))$. So when $\mu$ is of
the type treated here, then \eqref{Brascamp--Lieb-id} provides a formula also for
such covariances. However, this is clear, and hence this consequence
shall not be treated below. 
\end{rem}

\subsection{Summary}
In this paper various\,---\,abstract and explicit\,---\,conditions are deduced
for \eqref{Brascamp--Lieb-id}. To give an application of these,
\eqref{Brascamp--Lieb-ineq} is derived in the general strictly convex case
from \eqref{Brascamp--Lieb-id} and moreover extended to the case $f\in
H^1_{\loc}(\Rn)\cap L^2(\mu)$; this supplements an explanation of B.~Helffer
of how \eqref{Brascamp--Lieb-id} implies the validity of
\eqref{Brascamp--Lieb-ineq} for certain $f$ when $\Phi$ is uniformly
strictly convex \cite{Hel96}.  

In the applications estimates like $\AA\ge c_0>0$ would clearly give a
bounded inverse $\AA^{-1}$, so that \eqref{Brascamp--Lieb-id} would yield
control of the covariance on the left hand side there. However, it turns out that
already \eqref{Brascamp--Lieb-id} \emph{itself} follows from such
a strict positivity.  

This fact is observed here (while further properties of the $A_j$ entered in 
\cite{Sj96, Hel96}) and proved for rather more general $\Phi$
than those considered hitherto. Actually, by means of
Hilbert space methods (the Closed Range Theorem), even weaker sufficient conditions
on $\A$ or $\AA$ (such as closed range of $\A$) are proved to be equivalent to
\eqref{Brascamp--Lieb-id}; this analysis is furthermore valid for arbitrary
probability measures $\mu$ on $\Rn$.  

These techniques are also applied to the associated $d$-complex in
$\mu$-weighted spaces and to the associated $A_k$ on forms of higher
degrees, see \eqref{dmu-cmplx} and \eqref{Ak-def} below. This is done 
because these objects may be of interest in statistical mechanics, and 
because the cases $k=0$ and $1$ have to be treated anyway in order to settle
the relations between the various conditions found for $\A$ and $\AA$.

Explicit criteria in terms of $\Phi$ are given partly by 
means of compact embeddings or bounds of essential spectra; partly
by a pseudo-differential treatment of the Witten--Laplacians on functions and
$1$-forms on $\Rn$, i.e.~of
\begin{equation}
\smash{
  \W=\mlap+\tfrac{1}{4}|\Phi'|^2-\tfrac{1}{2}\lap\Phi,
  \qquad
  \WW=\W
} \otimes I+\Phi'';   
  \label{WWW-id}
\end{equation}
these are unitarily equivalent to $\A$ and $\AA$. This approach consists in
a study of the $\Wk$ by means of 
the Weyl calculus of H{\"o}rmander \cite[18.4--6]{H}, and it
circumvents the difficulty that pseudo-differential techniques do not play
well together with the \emph{weighted} Sobolev spaces in which the $\Ak$ act. In
this analysis,
sufficient conditions for closed range of $\A$, strict positivity of $\AA$
and essential self-adjointness as well as characterisations of the maximal
domains of $\A$ and $\AA$ are established (through similar results for the $\Wk$).

When combined, the ps.d.\ and Hilbert space analyses show
formula \eqref{Brascamp--Lieb-id} for a class of 
$\Phi$ containing e.g.\ all polynomials of even degree $r\ge 2$, which
have positive definite part of degree $r$ (a change of the constant term
will renormalise to $\int d\mu=1$); in comparison the polynomials belonging
to the class of \cite{Sj96} have $r=2$ (a rather simpler
case because the term $\Phi''$ in \eqref{WWW-id} is bounded on
$L^2(\Rn,\Cn)$), while \cite{Hel96} covered cases with $r=4$.

\subsection{The main results}
In formula \eqref{Brascamp--Lieb-id} the probability measure $\mu$ will be
arbitrary to begin with, while 
$g_j$ will be considered either in the weighted Sobolev space
$H^1(\mu)$ equal to $\{\,u\in
L^2(\mu)\mid \forall j=1,\dots,n\colon \partial_ju\in L^2(\mu)\,\}$, whereby
$L^2(\mu):=L^2(\Rn,\mu,\C)$, or in $L^2(\mu)$ when this is
justified. When the measure $\mu$ is such that
$d\mu=e^{-\Phi(x)}\,dx$, then it will throughout be assumed that
\begin{equation}
  \int_{\Rn} e^{-\Phi(x)\,}dx=1 \quad\text{and}\quad\Phi\in C^2(\Rn,\R).
  \label{phi-cnd}
\end{equation}
To simplify notation (cf.\ \eqref{Ak-def} ff.~ below),
differentials will hereafter be used instead of gradients, 
so $\AA$ will act on suitable 1-forms in $L^2(\Rn,\mu,\wedge^1\Cn)$. To
explicate the operators that appear as $d_0$ and $d^*_0$ in \eqref{dmu-cmplx}
below and onwards, 
note that when 1-forms are identified with vector functions~$v$, 
then $d_0 f$, $d_0^*v$
identify with $\nabla f$ and $(\Phi'-\nabla)\cdot v$, respectively.
Also for simplicity, $\scal{\cdot}{\cdot}_{\mu}$ will for any $k$ refer to the scalar
product in $L^2(\Rn,\mu,\wedge^k\Cn)$, i.e.\ the space of $k$-forms with
coefficients in $L^2(\mu)$; on this space the following norm is used: 
\begin{equation}
  \nrm{\sum_{j_1<\dots<j_k} f_{j_1\dots j_k}dz^{j_1}\wedge\dots\wedge
      dz^{j_k}}{\mu}= (\int_{\Rn} (\sum |f_{j_1\dots j_k}|^2)\,d\mu)^{1/2}.
  \label{L2kmu-eq}
\end{equation}

It is of course a central question how the operators $\A$ and $\AA$ are defined
precisely, but it will be equally important to make
sense of the inverse $\AA^{-1}$. For this reason, it is worthwhile to
define $\A$ and $\AA$ as `Hodge Laplacians' to begin with; this is not only a
most general approach (it works for arbitrary probability measures $\mu$),
but it also leads in a very natural way to a fruitful discussion, by simple operator
theoretical methods, of the invertibility of $\AA$. This will be explained
in the following, before the theorems are presented.

But first of all it should be emphasised that the present article focuses on
the following problem for the identity \eqref{Brascamp--Lieb-id}:
\begin{quote}
   Which probability measures $\mu$ have the property that formula \eqref{Brascamp--Lieb-id}
   holds for all functions $f$, $g$ in $H^1(\mu)$?
\end{quote}
Of course other problems would be equally meaningful (such as fixing two
functions $f$ and $g$ and then search for the $\mu$ for which
\eqref{Brascamp--Lieb-id} would be true); but for simplicity's sake the
discussion will here be restricted to the above-mentioned problem.

Secondly, as the point of departure it is useful to adopt the following definitions of
$\A$ and $\AA$ as the Hodge Laplacians%
\footnote{For $d\mu=e^{-\Phi}\,dx$, an interpretation of $e^{-\Phi}$ as
$\sqrt{\det(g_{ij})}$ would for any Riemannian metric $(g_{ij})$ lead to 
a Laplace--Beltrami operator which resembles $\A$,
$\AA$,\dots but differs; indeed, in this way already \eqref{L2kmu-eq} would be
unconventional in the study of the de Rham complex;
see e.g.~\cite[2.1]{Jost98}.}
\begin{equation}
  \Ak^{(\op{H})}=d^*_kd_k+d_{k-1}d^*_{k-1},\qquad 
  D(\Ak^{(\op{H})})=D(d^*_kd_k)\cap D(d_{k-1}d^*_{k-1}),
  \label{Ak-def}
\end{equation} 
associated to the complex (where $d_{-1}\equiv 0$)
\begin{equation}
  (0)\longrightarrow L^2(\mu)\overset{d_0}{\longrightarrow}
  L^2(\mu,\wedge^1\Cn)\overset{d_1}{\longrightarrow}\dots
  \overset{d_{k-1}}{\longrightarrow}
  L^2(\mu,\wedge^k\Cn)\overset{d_k}{\longrightarrow}\dots.
  \label{dmu-cmplx}
\end{equation}
Here $d_k$ denotes the exterior differential of $k$-forms; this is in
general a first order differential operator acting in the distribution
sense, but in the context above, $d_k$ is equipped with its maximal domain
as an unbounded, closed, densely defined operator from $L^2(\mu,\wedge^k\Cn)$ to
$L^2(\mu,\wedge^{k+1}\Cn)$; clearly $D(d_0)=H^1(\mu)$. 
This is an example of a Hilbert complex in the sense of J.~Br{\"u}ning and
M.~Lesch \cite{BrLe92}, where such complexes are described
in a clear way (by comparison the present article focuses on the
conditions implying that \eqref{dmu-cmplx} is a Fredholm complex rather than
the conclusions that would follow from this property).

For $d_k$ the closure of the range is denoted by $X_{k+1}$,
\begin{equation}
  X_{k+1}:=\overline{R(d_k)}
  \label{X-id}
\end{equation}
and the kernel by $Z(d_k)$; as a convention $X_0:=\C$. The Hilbert space adjoint, $d_k^*$,
also enters in the  Hodge Laplacians, cf.~\eqref{Ak-def}; for simplicity the
superscript `(H)' is suppressed in the sequel, for the definition in
\eqref{Ak-def} will be in effect until Section~\ref{Ak-sect}, unless
otherwise is explicitly stated.

Thirdly, a series of small remarks will clarify the situation:
using the orthogonal projection $P$ onto $L^2(\mu)\ominus \C$, formula
\eqref{Brascamp--Lieb-id} may be written as
\begin{equation}
  \scal{Pg_1}{Pg_2}_{\mu}=\scal{\AA^{-1}d_0g_1}{d_0g_2}_{\mu}.
  \label{PPAA-eq}
\end{equation} 
Moreover, $d_0g_1$ and $d_0g_2$ belong to $X:=X_1$, and since
$d\circ d\equiv 0$ implies $X_k\subset Z(d_{k+1})$, the 
subspace $X$ of $L^2(\mu,\wedge^1\Cn)$ is invariant under $\AA$ by
\eqref{Ak-def}; in fact
\begin{equation}
  \AA|_X=(d_0d_0^*)|_X.
\end{equation}
Hence one has the following identity of unbounded operators in $X$:
\begin{equation}
  (\AA^{-1})|_X= (d_0d_0^*|_X)^{-1};
  \label{AAinv-eq}
\end{equation}
in particular, $D(\AA^{-1})\cap X$ equals the range $R(d_0d_0^*|_X)=d_0d_0^*(X)$.

It is now straightforward to verify the implications
\begin{align}
  &  \text{\eqref{PPAA-eq} holds for $g_1$, $g_2\in H^1(\mu)$}
\\
  \iff &
   Pf= d_0^*(d_0d_0^*)^{-1}d_0 f\quad\text{holds for $f\in H^1(\mu)$}.
  \label{P-id}
\end{align}
Indeed, these properties are, by \eqref{AAinv-eq} and the self-adjointness of $P$,
both equivalent to 
\begin{equation}
  \scal{(d_0d_0^*)^{-1}d_0f}{d_0g}_\mu=\scal{Pf}{g}_\mu \quad\text{holds for
  all}\quad f,g\in H^1(\mu);
  \label{P-id'}
\end{equation}
e.g.\ it is found when \eqref{P-id} holds that $(d_0d_0^*)^{-1}d_0f$ is an
element of $D(d_0^*)$, whence \eqref{P-id'} follows from the definition of
the adjoint. The other implications follow in a similar manner.

Therefore the formulated problem for \eqref{PPAA-eq} has been
reduced to the just given property in \eqref{P-id} for $P$,
and this rewriting as a `linear' problem makes the analysis more
straightforward, as we shall see immediately.

\begin{lem}
$Pf= d_0^*(d_0d_0^*)^{-1}d_0 f$ holds whenever $f-\ang{f}$ belongs to 
$D(d_0)\cap R(d_0^*)$.
\end{lem}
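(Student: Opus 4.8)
The plan is to unpack the hypothesis $f-\ang{f}\in D(d_0)\cap R(d_0^*)$ in two complementary ways. Writing $Pf:=f-\ang{f}$ for the orthogonal projection of $f$ onto $L^2(\mu)\ominus\C$, the membership $Pf\in R(d_0^*)$ furnishes a ``vector potential'' $v\in D(d_0^*)$ with $d_0^*v=Pf$, while the membership $Pf\in D(d_0)$ will be used afterwards to upgrade $v$ to an element of $D(d_0d_0^*)$ — which is exactly what is needed to apply the inverse. Throughout, $(d_0d_0^*)^{-1}$ is understood as $(d_0d_0^*|_X)^{-1}$ in the sense of \eqref{AAinv-eq}, so the first point to settle is that $d_0d_0^*|_X$ is injective: for $w\in X\cap D(d_0d_0^*)$ the identity $\scal{d_0d_0^*w}{w}_{\mu}=\nrm{d_0^*w}{\mu}^2$ shows that $d_0d_0^*w=0$ implies $d_0^*w=0$, i.e.\ $w\in N(d_0^*)=R(d_0)^{\perp}=X^{\perp}$ (cf.\ \eqref{X-id}), hence $w=0$; thus $(d_0d_0^*|_X)^{-1}$ is well defined on $R(d_0d_0^*|_X)$.

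Secondly I would normalise the potential so that it lies in $X$. Using the orthogonal decomposition $L^2(\mu,\wedge^1\Cn)=X\oplus X^{\perp}$ with $X^{\perp}=N(d_0^*)$, split $v=v_X+v_0$; since $v_0\in N(d_0^*)\subset D(d_0^*)$, also $v_X=v-v_0\in D(d_0^*)$, and $d_0^*v_X=d_0^*v=Pf$. Now bring in the hypothesis $Pf\in D(d_0)$: it says precisely that $d_0^*v_X\in D(d_0)$, i.e.\ $v_X\in D(d_0d_0^*)$; combined with $v_X\in X$ this gives $v_X\in D(d_0d_0^*|_X)$. (Here one uses that for $w\in D(d_0d_0^*)\cap X$ one automatically has $d_0d_0^*w=d_0(d_0^*w)\in R(d_0)\subset X$, so $d_0d_0^*|_X$ is just $d_0d_0^*$ with domain $D(d_0d_0^*)\cap X$.)

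The loop then closes immediately. Since $f-Pf=\ang{f}$ is constant and $d_0$ annihilates constants, $d_0f=d_0Pf=d_0d_0^*v_X=(d_0d_0^*|_X)v_X$, and in particular $d_0f\in R(d_0d_0^*|_X)$, so $(d_0d_0^*)^{-1}d_0f$ is defined; by the injectivity of $d_0d_0^*|_X$ established above it equals $v_X$. Since $v_X\in D(d_0d_0^*)\subset D(d_0^*)$, one may apply $d_0^*$ and conclude
\[
  d_0^*(d_0d_0^*)^{-1}d_0f=d_0^*v_X=Pf .
\]

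The argument is mostly bookkeeping once the convention for $(d_0d_0^*)^{-1}$ is fixed; the step I expect to be the genuine crux — and the one deserving care in the write-up — is the reduction of $v$ to the subspace $X$ and the verification that this reduction is simultaneously compatible with $D(d_0^*)$ and, after invoking $Pf\in D(d_0)$, with $D(d_0d_0^*)$. The remaining ingredients (injectivity of $d_0d_0^*|_X$, the fact that $d_0$ kills constants, and $R(d_0)\subset X$) are all immediate from the definitions in \eqref{Ak-def}--\eqref{X-id}.
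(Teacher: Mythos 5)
Your proof is correct and follows essentially the same route as the paper's: produce a potential $v$ with $d_0^*v=Pf$, project it onto $X$ using $X^\perp=Z(d_0^*)$, use $Pf\in D(d_0)$ to place $v_X$ in $D(d_0d_0^*|_X)$, invert, and apply $d_0^*$. The only difference is that you spell out the injectivity of $d_0d_0^*|_X$ and the bookkeeping of the splitting, which the paper leaves implicit in the convention \eqref{AAinv-eq}.
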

\begin{proof}
When $f-\ang{f}$ is in $H^1(\mu)\cap R(d_0^*)$, then $d_0f=d_0(d_0^* v)$
holds for some
$v\in X\cap D(d_0^*)$, because $X^\perp= Z(d_0^*)$. Hence $d_0f$ belongs to
$D(d_0d_0^*|_X^{-1})$, and so
\begin{equation}
  d_0^*(d_0d_0^*|_X)^{-1}d_0 f= d_0^*\cdot I v= f-\ang{f},
\end{equation}
since $d_0d_0^*|_X^{-1}$ maps into
$D(d_0d_0^*)\subset D(d_0^*)$.
\end{proof}

The usefulness of the lemma in connection with the proof of
\eqref{Brascamp--Lieb-id} was independently discovered by V.~Bach, T.~Jecko
and J.~Sj{\"o}strand \cite[(II.15)]{BaJeSj97}. 

Conversely \eqref{P-id} implies that $f-\ang{f}=Pf$ belongs to $D(d_0)\cap
R(d_0^*)$ for any $f\in H^1(\mu)$. However, it is more useful
to ask the following question: which probability mesures $\mu$ have the
property that 
\begin{equation}
  \forall f\in H^1(\mu)\colon f-\ang{f}\in D(d_0)\cap R(d_0^*)\quad\text{?}
\end{equation}
Since $H^1(\mu)=D(d_0)$, it is clear that $\mu$ has this property if and
only if 
\begin{equation}
  H^1(\mu)\ominus \C \subset R(d_0^*).
  \label{Rd*1-eq}
\end{equation}
(Here and throughout $F\ominus\C$ stands for the elements of a given
subspace $F\subset L^2$ which are orthogonal to the constant functions.)
The inclusion \eqref{Rd*1-eq} would  obviously be true if $\mu$ is such that 
\begin{equation}
  R(d_0^*)= L^2(\mu)\ominus\C;
  \label{Rd*-eq}
\end{equation}
since $\overline{R(d_0^*)}= Z(d_0)^\perp=\C^\perp$, condition \eqref{Rd*-eq} is
equivalent to closedness of the range $R(d_0^*)$.

Because \eqref{Rd*-eq} also expresses that $R(d_0^*)$ should be the maximal
possible subspace of $L^2(\mu)$, one could in view of \eqref{Rd*1-eq}
envisage that \eqref{Rd*-eq} is sufficient but unnecessary.

However, this is not the case, for a fine version of the Closed Range
Theorem yields that \eqref{PPAA-eq} is equivalent to the closedness of
$d_0^*$'s range, hence to \eqref{Rd*-eq}. This is explained  for arbitrary
unbounded operators in general Hilbert spaces in Section~\ref{CR-sect}
below, but the consequences for the $\Ak$ 
are summed up in the following theorem. In particular, 
 the problem formulated for \eqref{Brascamp--Lieb-id} above has been
rephrased by means of a set of equivalent conditions on $\A$, $\AA$, $d_0$
and $d_0^*$, that one finds for $k=0$ in 

\begin{thm}
  \label{kk1-thm}
Let $k$ be a fixed index in the complex
\eqref{dmu-cmplx}. Then closedness of each of the spaces $R(\Ak)$, $R(d_k)$
and $R(d^*_k)$ as well as strict positivity of $A_{k+1}|_{X_{k+1}}$ are
equivalent properties, and if $P_k$ is the
projection from $L^2(\mu,\wedge^k\Cn)$ onto $R(A_k)$, these properties are
also equivalent to the validity of
\begin{equation}
  P_k=d^*_k (A_{k+1}|_{X_{k+1}})^{-1} d_k\quad\text{on all of}\quad D(d_k) .
\end{equation}

In the affirmative case, $A_k|_{Z(d_k)^\perp}=d^*_kd_k$ and has
closed range $R(A_k)=R(d^*_k)$ and $A_{k+1}|_{X_{k+1}}=d_kd^*_k$.
Moreover, there is an equivalence
\begin{equation}
  A_k|_{Z(d_k)^\perp}=U^*A_{k+1}|_{X_{k+1}}U
  \label{AAuni-eq}
\end{equation}
with $U$ equal to the unitary $d_k(d_k^*d_k|_{X_k})^{-1/2}$ from
$L^2(\mu,\wedge^k\Cn)\ominus Z(d_k)$ to $X_{k+1}$, and
consequently 
\begin{equation}
  \sigma(A_k|_{Z(d_k)^\perp})= \sigma(A_{k+1}|_{X_{k+1}})
\end{equation}
with the analogous relation for the essential spectra.
\end{thm}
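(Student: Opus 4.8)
The plan is to push the whole statement back onto the single closed, densely defined operator $d_k$ and its adjoint $d_k^*$, after which the listed equivalences become the special case $T=d_k$ of the fine Closed Range Theorem established in Section~\ref{CR-sect}. First I would record the orthogonal (weak Hodge) splitting. From $d_k\circ d_{k-1}=0$ one has $X_k=\overline{R(d_{k-1})}\subseteq Z(d_k)$ and $X_{k+1}=\overline{R(d_k)}\subseteq Z(d_{k+1})$, while in general $\overline{R(d_j^*)}=Z(d_j)^\perp$ and $\overline{R(d_j)}=Z(d_j^*)^\perp$, so
\begin{equation}
  L^2(\mu,\wedge^k\Cn)=Z(d_k)^\perp\oplus X_k\oplus\mathcal H_k,\qquad
  \mathcal H_k:=Z(d_k)\cap Z(d_{k-1}^*).
\end{equation}
Using only $d_k\circ d_{k-1}=0$ and self-adjointness I would verify that this splitting \emph{reduces} $A_k$, with $A_k$ acting as $d_k^*d_k$ on $Z(d_k)^\perp$, as $d_{k-1}d_{k-1}^*$ on $X_k$, and as $0$ on $\mathcal H_k$; that the kernel equals $\mathcal H_k$ (by the identity $\scal{A_ku}{u}_\mu=\nrm{d_ku}{\mu}^2+\nrm{d_{k-1}^*u}{\mu}^2$); and that the operator domains match exactly (because $u\perp R(d_{k-1})$ forces $d_{k-1}^*u=0$). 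Since also $X_{k+1}\subseteq Z(d_{k+1})$, the same argument gives $A_{k+1}|_{X_{k+1}}=d_kd_k^*|_{X_{k+1}}$ with $X_{k+1}=Z(d_k^*)^\perp$. Hence $\overline{R(A_k)}=Z(A_k)^\perp=Z(d_k)^\perp$, so $P_k$ — the projection onto $R(A_k)$, well defined in the affirmative case, where $R(A_k)=R(d_k^*)$ — is the orthogonal projection onto $Z(d_k)^\perp=\overline{R(d_k^*)}$; for $k=0$ this is $L^2(\mu)\ominus\C$.

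Next I would feed $T=d_k$ into the fine Closed Range Theorem, which delivers at one stroke the equivalence of: closedness of $R(d_k)$; of $R(d_k^*)$; of $R(A_k|_{Z(d_k)^\perp})=R(d_k^*d_k|_{Z(d_k)^\perp})$; lower boundedness — i.e.\ strict positivity, since it is already injective with dense range — of $A_{k+1}|_{X_{k+1}}=d_kd_k^*|_{X_{k+1}}$; and the identity $P_k=d_k^*(A_{k+1}|_{X_{k+1}})^{-1}d_k$ on all of $D(d_k)$, the assertion including that the right side is everywhere defined there. That this formula reproduces $P_k=P_{Z(d_k)^\perp}$ is the short computation: for $f\in D(d_k)$, with $g=(d_kd_k^*|_{X_{k+1}})^{-1}d_kf$ one has $d_k(d_k^*g-f)=0$, so $d_k^*g-f\in Z(d_k)$, while $d_k^*g\in R(d_k^*)\subseteq Z(d_k)^\perp$, whence $d_k^*g=P_{Z(d_k)^\perp}f$. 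For $k=0$ these statements are exactly \eqref{P-id} and the remark following \eqref{Rd*-eq}.

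For the affirmative case: when $R(d_k)$ is closed, $d_k^*d_k|_{Z(d_k)^\perp}$ has a bounded inverse, so in the polar decomposition $d_k=U(d_k^*d_k)^{1/2}$ the partial isometry $U=d_k(d_k^*d_k|_{Z(d_k)^\perp})^{-1/2}$ is a unitary map of $L^2(\mu,\wedge^k\Cn)\ominus Z(d_k)$ onto $X_{k+1}=\overline{R(d_k)}$; from $d_k=U|d_k|$ and $d_k^*=|d_k|U^*$ one reads off $A_{k+1}|_{X_{k+1}}=d_kd_k^*=U|d_k|^2U^*$, hence $U^*(A_{k+1}|_{X_{k+1}})U=|d_k|^2=A_k|_{Z(d_k)^\perp}$, which is \eqref{AAuni-eq}; equality of $\sigma(A_k|_{Z(d_k)^\perp})$ with $\sigma(A_{k+1}|_{X_{k+1}})$ and of the essential spectra is then immediate from unitary equivalence. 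The remaining affirmative claims ($A_k|_{Z(d_k)^\perp}=d_k^*d_k$ with closed range $R(d_k^*)$, and $A_{k+1}|_{X_{k+1}}=d_kd_k^*$) were already obtained in the first step.

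The genuinely substantive content is carried by the Section~\ref{CR-sect} theorem; the part proper to the present statement is the unbounded-operator bookkeeping: confirming that the weak Hodge splitting reduces $A_k$ \emph{as an identity of unbounded operators together with their domains} — so that $A_k|_{Z(d_k)^\perp}$ and $A_{k+1}|_{X_{k+1}}$ really coincide with $d_k^*d_k$ and $d_kd_k^*$ and are not merely proper restrictions — and keeping track of domains when the unbounded $d_k^*$ is composed with $(A_{k+1}|_{X_{k+1}})^{-1}$. I expect the most delicate point to be the converse half: showing that the projection identity, required to hold on \emph{all} of $D(d_k)$ with an everywhere-defined right side, forces $R(d_k)$ to be closed; this is exactly where the precise formulation in Section~\ref{CR-sect} is needed.
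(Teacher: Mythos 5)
Your proof is correct and takes essentially the same route as the paper, which establishes this statement as Corollary~\ref{iseq-cor} by feeding $T=d_k$, $F_1=X_{k+1}$ into the Closed Range Theorem~\ref{CR-thm} of Section~\ref{CR-sect}, together with exactly the identifications $A_k|_{Z(d_k)^\perp}=d_k^*d_k$ and $A_{k+1}|_{X_{k+1}}=d_kd_k^*$ that you obtain from the weak Hodge splitting (cf.\ \eqref{AAinv-eq} and Lemma~\ref{X-lem}). You also correctly read the closedness condition as concerning $R(A_k|_{Z(d_k)^\perp})$ and the unitary as $d_k(d_k^*d_k|_{Z(d_k)^\perp})^{-1/2}$ (the subscript $X_k$ in the statement is evidently a slip for $Z(d_k)^\perp$), which is what the theorem intends.
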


The point of the proof of this result is to combine the usual
estimates from below of the adjoint (here $d^*_0$) with the fact that $T^*T$ is 
self-adjoint for any densely defined, closed operator $T$. In the abstract set-up, the
range projection of $T^*T$ has the form $P=T^*(TT^*|_{R(T)})^{-1}T$;
when applied to $\A$ and $\AA$, this yields
\eqref{P-id} and thus \eqref{Brascamp--Lieb-id} at least for $u$, $g_1$ and
$g_2\in H^{1}(\mu)$.

However, neither closedness of $R(\A)$, $R(d_0)$, $R(d_0^*)$ nor positivity of
$\AA|_X$ are easy to analyse when the $\Ak$ are defined from
\eqref{Ak-def} for an arbitrary probability measure $\mu$.
From Section~\ref{Ak-sect} below and onwards, we shall therefore work under the assumption
that $\mu$ has a density $e^{-\Phi(x)}$ with respect to Lebesgue measure for
some $\Phi$ fulfilling \eqref{phi-cnd}.

Using this assumption, an alternative variational
definition of the $\Ak$ (i.e.\ by means of sesqui-linear forms, or Lax--Milgram's
lemma) is introduced in Section~\ref{Ak-sect} below; thereafter it is seen
that this yields the Friedrichs extension from
$C^\infty_0(\Rn,\wedge^k\Cn)$, 
and it is then shown, for the variationally defined operators, that \eqref{Ak-def}
holds both in the distribution sense and as a formula for unbounded operators. 
(Concerning essential self-adjointness of the $\Ak$, see the remarks in 
Section~\ref{final-sect} below.)

Using the definition by Lax--Milgram's lemma,
it is proved in Section~\ref{d-sect} below that the regularity
assumption on the $g_1$, $g_2$ and
$f$ above may be relaxed from $H^1(\mu)$ to  $L^2(\mu)$:

\begin{thm}
  \label{Brascamp--Lieb-thm}
Let $\Phi$ satisfy \eqref{phi-cnd}, and suppose
that $\A$ as an unbounded operator in $L^2(\mu)$ has closed range,
$R(\A)=\overline{R(\A)}$.
Then \eqref{Brascamp--Lieb-id} holds for all $g_1$ and $g_2$ in $L^2(\mu)$.

Moreover, it then holds that $L^2(\mu)=R(\A)\oplus \C$, and for every $u\in L^2(\mu)$,
\begin{equation}
  Pu=d_0^*\AA^{-1}d_0 u 
  \label{P-eq}
\end{equation}
when $Pu=u-\int u\,d\mu$ denotes the orthogonal projection onto $L^2(\mu)\ominus \C$.
\end{thm}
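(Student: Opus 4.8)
The plan is to read everything off from the $k=0$ case of Theorem~\ref{kk1-thm}, supplemented by two elementary facts: that $Z(d_0)$ is exactly the line $\C$ of constants, and that $H^1(\mu)$ is dense in $L^2(\mu)$. (Under \eqref{phi-cnd} the variationally defined $\A$, $\AA$ of the statement coincide with the Hodge Laplacians \eqref{Ak-def}, cf.\ Section~\ref{d-sect}, so Theorem~\ref{kk1-thm} applies.)

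First I would settle the identifications. As $\mu$ is a probability measure and $d_0f$ identifies with $\nabla f$, an element of $H^1(\mu)$ lies in $Z(d_0)$ iff it is constant; hence $Z(d_0)=\C$, $Z(d_0)^\perp=L^2(\mu)\ominus\C$, and the range projection $P_0$ of Theorem~\ref{kk1-thm} equals $P$. Since $d_{-1}\equiv0$ in \eqref{Ak-def}, $\A=d_0^*d_0$ is self-adjoint and $\ge0$ with $Z(\A)=Z(d_0)=\C$, so the hypothesis $R(\A)=\overline{R(\A)}$ yields $L^2(\mu)=R(\A)\oplus\C$ and $R(\A)=L^2(\mu)\ominus\C$ — the asserted decomposition. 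By Theorem~\ref{kk1-thm} the same hypothesis makes $R(d_0)$, $R(d_0^*)$ closed, $\AA|_X=d_0d_0^*$ strictly positive, $R(\A)=R(d_0^*)$, and — the key input — $Pf=d_0^*\AA^{-1}d_0f$ valid \emph{for every $f\in H^1(\mu)=D(d_0)$}. In particular $R(d_0^*)=L^2(\mu)\ominus\C$, so $Pu\in R(d_0^*)$ for \emph{all} $u\in L^2(\mu)$.

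Next I would give $\AA^{-1}d_0u$ a meaning for arbitrary $u\in L^2(\mu)$. From $X^\perp=Z(d_0^*)\subset D(d_0^*)$ one gets $R(d_0^*)=d_0^*(X\cap D(d_0^*))$ with $d_0^*$ injective on $X\cap D(d_0^*)$, so for each $u\in L^2(\mu)$ there is a unique $v_u\in X\cap D(d_0^*)$ with $d_0^*v_u=Pu$; by \eqref{AAinv-eq} this $v_u$ equals $\AA^{-1}d_0u$ whenever the latter is defined in the old sense (for $u\in H^1(\mu)$: $d_0d_0^*v_u=d_0(Pu)=d_0u$, so $v_u=(d_0d_0^*|_X)^{-1}d_0u$). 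Adopting this as the definition of $\AA^{-1}d_0u$ in general, \eqref{P-eq} becomes $d_0^*v_u=Pu$, which holds by construction; equivalently, the densely defined operator $d_0^*\AA^{-1}d_0$ coincides with the bounded operator $P$ on $H^1(\mu)$ by the first step, and $H^1(\mu)\supset C_0^\infty(\Rn)$ is dense in $L^2(\mu)$ (using $e^{-\Phi}\in C(\Rn)$ from \eqref{phi-cnd}), so $d_0^*\AA^{-1}d_0$ extends uniquely and continuously, with extension $P$. (One may also see this via the polar decomposition $d_0=U|d_0|$, $|d_0|=\A^{1/2}$: the spectral gap of $\A$ caused by closed range gives, by functional calculus, that $\AA^{-1/2}d_0$ extends to $U$ and hence $d_0^*\AA^{-1}d_0$ extends to $U^*U=P$.)

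Finally, for \eqref{Brascamp--Lieb-id} with $g_1,g_2\in L^2(\mu)$ I would argue that the right-hand form extends the one on $H^1(\mu)$. On $H^1(\mu)\times H^1(\mu)$, by the definition of $d_0^*$, Theorem~\ref{kk1-thm}, and self-adjointness of $P$,
\[
  \scal{\AA^{-1}d_0g_1}{d_0g_2}_\mu
  =\scal{(d_0^*\AA^{-1}d_0)g_1}{g_2}_\mu
  =\scal{Pg_1}{g_2}_\mu
  =\scal{Pg_1}{Pg_2}_\mu ;
\]
the last expression is a bounded sesquilinear form on $L^2(\mu)\times L^2(\mu)$, and $H^1(\mu)\times H^1(\mu)$ is dense there, so the continuous extension of the left-hand side is $\scal{Pg_1}{Pg_2}_\mu=\scal{g_1-\ang{g_1}}{g_2-\ang{g_2}}_\mu$, which is \eqref{Brascamp--Lieb-id}. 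The only point demanding care is the correct \emph{reading} of \eqref{P-eq} and \eqref{Brascamp--Lieb-id} for $u,g_j$ outside $H^1(\mu)$ — that the pertinent operator, resp.\ form, genuinely extends by continuity — since the substantive analytic input (closed ranges, strict positivity of $\AA|_X$, the formula $P_0=d_0^*(\AA|_X)^{-1}d_0$ on $D(d_0)$) has already been delivered by Theorem~\ref{kk1-thm}; the rest is bookkeeping.
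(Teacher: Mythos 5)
Your proposal is correct, and its first half is exactly the paper's route: the $H^1(\mu)$ case is read off from Theorem~\ref{kk1-thm} (i.e.\ Corollary~\ref{iseq-cor}) with $k=0$, using $Z(d_0)=\C$, $Z(\A)=\C$ and the identification of the variational $\Ak$ with the Hodge Laplacians \eqref{Ak-def} (which the paper supplies in Lemma~\ref{X-lem}, Section~\ref{Ak-sect}, rather than Section~\ref{d-sect}). Where you genuinely diverge is the passage to $g_1,g_2,u\in L^2(\mu)$. The paper builds the Gelfand triple $\tilde V\hookrightarrow X\hookrightarrow \tilde V'$ from the Poincar\'e inequality \eqref{eqviX-cnd} of Theorem~\ref{ran-thm}, identifies $\tilde V'$ with a space of currents (Proposition~\ref{V'D'-prop} and Appendix~\ref{curr-app}), proves that the distributional differential is bounded $d\colon L^2(\mu)\to\tilde V'$ (Proposition~\ref{dV'-prop}), and then reads the right-hand side of \eqref{Brascamp--Lieb-id} as the duality $\dual{\tilde{\cal A}_1^{-1}dg_1}{d\overline{g_2}}_{\tilde V\times\tilde V'}$; the extension by density is then automatic. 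You instead define $\AA^{-1}d_0u$ as the unique $v_u\in X\cap D(d_0^*)$ solving $d_0^*v_u=Pu$, which exists by closedness of $R(d_0^*)$ and is injectively determined since $Z(d_0^*)=X^\perp$; this is consistent with the genuine composition on $H^1(\mu)$ and makes \eqref{P-eq} and \eqref{Brascamp--Lieb-id} follow by the same density argument. Both constructions produce the same element (both solutions lie in $X\cap D(d_0^*)$ where $d_0^*$ is injective), so your proof is sound. The trade-off: your route is shorter and avoids Corollary~\ref{V'-cor}, Propositions~\ref{V'D'-prop}--\ref{dV'-prop} and Appendix~\ref{curr-app} entirely, while the paper's route additionally gives $dg$ itself a meaning for $g\in L^2(\mu)\setminus H^1(\mu)$ as an element of $\tilde V'$ compatible with the distributional differential, so that \emph{each individual factor} in $d^*\AA^{-1}d$ is a bounded map between explicit spaces --- which is what the paper announces after the theorem statement and is also what feeds the $H^1_{\loc}$ version of the Brascamp--Lieb inequality in Section~\ref{prf-ssect}. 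Only your parenthetical polar-decomposition remark is loose (the intertwining needed to identify $\AA^{-1/2}d_0$ with $U$ deserves a word), but it is an aside and nothing rests on it.
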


Note also that when $g_2$ is in $L^2(\mu)\setminus H^1(\mu)$, it is
understood in Theorem~\ref{Brascamp--Lieb-thm} that
the right hand side of \eqref{Brascamp--Lieb-id} should be read as a
duality $\dual{\tilde{\cal A}_1^{-1}dg_1}{d\overline{g_2}}_{\tilde V\times \tilde
V'}$ for a certain Hilbert space $\tilde V$ with isomorphism $\tilde{\cal A}_1\colon \tilde V\to \tilde V'$ onto its dual. See
Section~\ref{prf-ssect} for details.

Although the identification of $\AA$, or rather $\AA|_{X}$, with a
restriction of $\cal A_1$ is a well-known procedure for operators defined by
the Lax--Milgram lemma, it is really the validity of a certain Poincar\'e
inequality for $\tilde V$ which makes this possible. Because $\tilde V$ is
the form domain of $\AA|_X$, one may of course ask directly whether such
Poincar\'e inequalities for the form domain of $\AA$ or $\AA|_X$ would imply
\eqref{Brascamp--Lieb-id}; moreover, also the exactness of the complex
\eqref{dmu-cmplx} may be considered. However, that the Poincar\'e
inequalities are equivalent to e.g.\ positivity of $\A$ and $\AA|_X$,
respectively, while exactness is an `intermediate' condition, is  proved in
Theorem~\ref{ran-thm} below, where the full interrelationship among the
various criteria is settled.

By exploiting the possibility in
Theorem~\ref{Brascamp--Lieb-thm} of taking the $g_j\in L^2(\mu)$, one finds
as an application that \eqref{Brascamp--Lieb-id} implies a generalisation of
the inequality in \eqref{Brascamp--Lieb-ineq} from $C^1$ to $H^1_{\loc}(\mu)$: 

\begin{cor}
  \label{Brascamp--Lieb-cor}
Let $\Phi\in C^2(\Rn,\R)$ satisfy $\int e^{-\Phi}\,dx=1$ and be
strictly convex, i.e.\
\begin{equation}
  \sum\Phi''_{jk}(x)z_j\overline{z_k}>0\quad\text{for all}\quad
  x\in\Rn,\quad\text{all } z\in\C^n\setminus\{0\}. 
\end{equation}
Then the inequality \eqref{Brascamp--Lieb-ineq} holds for all $f\in
L^2(\Rn,\mu)\cap H^1_{\loc}(\Rn)$.  
\end{cor}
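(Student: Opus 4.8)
The plan is to read off \eqref{Brascamp--Lieb-ineq} from Theorem~\ref{Brascamp--Lieb-thm} by taking $g_1=g_2=f$ in \eqref{Brascamp--Lieb-id}, combined with the algebraic fact --- visible from \eqref{AAA-eq} --- that $\AA=\A\otimes I+\Phi''\ge\Phi''$ as quadratic forms, and then to strip away in turn the restrictions $f\in H^1(\mu)$, $R(\A)$ closed, and $\Phi$ \emph{uniformly} convex, each by an approximation.

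First I would dispose of the uniformly strictly convex case $\Phi''(x)\ge c_0I$, $c_0>0$. Then $\AA\ge\Phi''\ge c_0>0$ as forms, in particular $A_1|_{X_1}\ge c_0>0$, so $\A$ has closed range by Theorem~\ref{kk1-thm} (taken with $k=0$) and Theorem~\ref{Brascamp--Lieb-thm} yields, for $f\in H^1(\mu)$, the identity $\nrm{f-\ang{f}}{\mu}^2=\scal{\AA^{-1}\nabla f}{\nabla f}_\mu$. For any nonnegative self-adjoint operator $S$ with form $s$ and any $w\in D(S)$ one has $\scal{Sw}{w}=\sup\set{2\Re\scal{Sw}{v}-s[v]}{v\in D(s)}$; applying this with $S=\AA|_X$ and $w=\AA^{-1}\nabla f$ (so that $Sw=\nabla f$), bounding $s[v]\ge\int v^T\Phi''\overline{v}\,d\mu$ by the form inequality above, and finally maximising pointwise in $v(x)\in\Cn$ --- which is legitimate since $\Phi''(x)$ is invertible --- one gets
\begin{equation}
  \nrm{f-\ang{f}}{\mu}^2\le\sup_v\Bigl(2\Re\scal{\nabla f}{v}_\mu-\int_{\Rn}v^T\Phi''\overline{v}\,d\mu\Bigr)=\int_{\Rn}\nabla f^T(\Phi'')^{-1}\overline{\nabla f}\,d\mu ,
\end{equation}
which is \eqref{Brascamp--Lieb-ineq} for $f\in H^1(\mu)$, trivially so when the right-hand side equals $+\infty$.

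Next I would pass from $H^1(\mu)$ to $H^1_{\loc}(\Rn)\cap L^2(\mu)$ by cutting off: with $\chi_R(x)=\chi(x/R)$ for a fixed $\chi\in C^\infty_0(\Rn)$ satisfying $0\le\chi\le1$ and $\chi\equiv1$ near the origin, the functions $f_R:=\chi_Rf$ lie in $H^1(\mu)$, so the case above applies to them; letting $R\to\infty$, the left-hand side converges to $\nrm{f-\ang{f}}{\mu}^2$ by dominated convergence, while, writing $\nabla f_R=\chi_R\nabla f+f\nabla\chi_R$, the $\chi_R\nabla f$ part contributes at most $\int\nabla f^T(\Phi'')^{-1}\overline{\nabla f}\,d\mu$ in the limit and the remaining part is dominated by $c_0^{-1}\nrm{\nabla\chi}{\infty}^2R^{-2}\nrm{f}{\mu}^2\to0$ (here $(\Phi'')^{-1}\le c_0^{-1}$ is essential). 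Finally, for a general strictly convex $\Phi$ as in the statement I would regularise by setting $\Phi_\varepsilon:=\Phi+\varepsilon|x|^2+\log\int_{\Rn}e^{-\Phi-\varepsilon|x|^2}\,dx$, which obeys \eqref{phi-cnd} and is uniformly strictly convex, $\Phi_\varepsilon''\ge2\varepsilon I$; the case just proved gives \eqref{Brascamp--Lieb-ineq} for $\Phi_\varepsilon$ and every $f\in L^2(\mu_\varepsilon)\cap H^1_{\loc}(\Rn)$, and one lets $\varepsilon\downarrow0$. Since $e^{-\Phi_\varepsilon}\le C_0e^{-\Phi}$ uniformly for small $\varepsilon$ with $e^{-\Phi_\varepsilon}\to e^{-\Phi}$ pointwise, the means $\ang{f}_\varepsilon\to\ang{f}$ and the left-hand side converges by dominated convergence, while $(\Phi''(x)+2\varepsilon I)^{-1}\uparrow(\Phi''(x))^{-1}$ makes the right-hand side converge by monotone (resp.\ dominated) convergence, again trivially when it is infinite.

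I expect the variational inequality of the second paragraph to be the main obstacle. One has to check that $\AA^{-1}\nabla f$ really lies in the form domain of $\AA|_X$, that replacing $D(s)$ by the larger class $\set{v}{\int v^T\Phi''\overline{v}\,d\mu<\infty}$ does not decrease the supremum, and that the borderline situation $\nabla f\notin L^2(\mu)$ but $\int\nabla f^T(\Phi'')^{-1}\overline{\nabla f}\,d\mu<\infty$ is actually reached --- it is exactly this last point that forces the cut-off step rather than a direct appeal to the $L^2(\mu)$ formulation of Theorem~\ref{Brascamp--Lieb-thm}. The two limiting arguments themselves are routine once the uniform domination $\mu_\varepsilon\le C_0\mu$ and the monotonicity of $\varepsilon\mapsto(\Phi''+2\varepsilon I)^{-1}$ have been recorded.
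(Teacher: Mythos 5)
Your argument is correct, and its skeleton coincides with the paper's: uniform strict convexity gives $\AA\ge c_0>0$, hence closed range via Theorem~\ref{kk1-thm} and the identity \eqref{Brascamp--Lieb-id} via Theorem~\ref{Brascamp--Lieb-thm}; the comparison $\scal{\AA^{-1}v}{v}_\mu\le\scal{(\Phi'')^{-1}v}{v}_\mu$ is obtained from a variational characterisation (your Legendre-transform supremum $\scal{Sw}{w}=\sup_v(2\Re\scal{Sw}{v}-s[v])$ is equivalent to the Cauchy--Schwarz supremum \eqref{APhi-eq} used in Section~\ref{Brascamp--Lieb-ssect}); and the regularisation $\Phi_\varepsilon=\Phi+\varepsilon|x|^2+\log C_\varepsilon$ with the dominations $e^{-\Phi_\varepsilon}\le C_1^{-1}e^{-\Phi}$ and $I_\varepsilon\le I_0$ is exactly the paper's reduction to the uniform case. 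Where you genuinely depart from the paper is the passage from $f\in H^1(\mu)$ to $f\in L^2(\mu)\cap H^1_{\loc}$: the paper (Section~\ref{prf-ssect}) extends $\AA|_X^{-1}d$ to all of $L^2(\mu)$ through the Lax--Milgram isomorphism $\tilde{\cal A}_1\colon\tilde V\to\tilde V'$ and the embedding of $\tilde V'$ into $\cal D'$ (Corollary~\ref{V'-cor}, Proposition~\ref{dV'-prop}), and then reads the inequality off the variational formula for $\norm{df}{\tilde V'}$; you instead cut off with $\chi_R$ and pass to the limit. Your route is more elementary and bypasses the dual-space machinery entirely, at the price of leaning on the uniform bound $(\Phi'')^{-1}\le c_0^{-1}$ to kill the commutator term $f\nabla\chi_R$ --- which is legitimate because you perform the cut-off before the $\varepsilon$-regularisation; the paper's route buys the stronger statement \eqref{P-eq} for all $u\in L^2(\mu)$, which your truncation does not reproduce. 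One point you should make explicit: the form inequality $a_1(v,v)\ge\scal{\Phi''v}{v}_\mu$ is not quite ``visible from \eqref{AAA-eq}'', since under \eqref{phi-cnd} there is no a priori relation between $D(\AA)$ and $D(\Phi'')$ (the paper flags exactly this); it is established by the integration by parts \eqref{Brascamp--Lieb-lb} on $C^\infty_0$ together with Lemma~\ref{fx-lem}, and accordingly your supremum should be taken over $v\in C^\infty_0(\Rn,\wedge^1\Cn)$ (which is dense in the form domain), where the identity is exact.
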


This generalises \cite{Hel96}, where B.~Helffer sketched how
\eqref{Brascamp--Lieb-id} implies \eqref{Brascamp--Lieb-ineq} for
{\em uniformly\/} strictly convex $\Phi$, when in addition $\nabla f$ is
bounded.  The general case was left open there, although with indications that
\eqref{Brascamp--Lieb-id}, because it is an exact formula, is more
powerful. (The reference to \eqref{Brascamp--Lieb-id} refines an explanation
from 1993, see \cite{Hel93}, where $\A$ and $\AA$ were introduced for
\eqref{Brascamp--Lieb-ineq} without \eqref{Brascamp--Lieb-id}.)

For the sake of the proof, it should be recalled from \cite{Hel96} that in
the uniformly strictly 
convex case, say $\Phi''(x)\ge c_0>0$ on $\Rn$ for every $x$, the idea
behind \eqref{Brascamp--Lieb-id}$\implies$\eqref{Brascamp--Lieb-ineq} is to
infer from the formal expression in \eqref{AAA-eq} that 
\begin{equation}
  \AA\ge \Phi''\quad\text{and therefore}\quad \AA^{-1}\le (\Phi'')^{-1}\le
\tfrac{1}{c_0}. 
\end{equation}
This actually requires justification, for inclusions between the
domains $D(\AA)$ and $D(\Phi'')$ need not exist under the assumption
\eqref{phi-cnd}. Nevertheless one may envisage that $\AA\ge c_0>0$
then holds\,---\,so that
$\AA^{-1}$ would be well defined\,---\,and this is verified in
Section~\ref{Brascamp--Lieb-ssect} below. However, the proof of 
Corollary~\ref{Brascamp--Lieb-cor} is first completed in
Section~\ref{prf-ssect}, after that of Theorem~\ref{Brascamp--Lieb-thm}.

\bigskip

The identity \eqref{Brascamp--Lieb-id} is established independently of
compactness of 
$H^1(\mu)\hookrightarrow L^2(\mu)$, so in particular the
essential spectra of $\A$  and $\AA$ need not be empty.
In this connection some simple sufficient conditions are given in
Section~\ref{inv-sect}, exploiting the Persson--Agmon formula and results
due to P.~Bolley, M.~Dauge and B.~Helffer \cite{BoDaHe89}.

However, slightly stronger conditions allow 
$\A$ and $\AA$ to be analysed by means of the
Weyl calculus \cite[18.4--6]{H}, leading to closed range, essential
self-adjointness and positivity. With $v^\alpha:=v_1^{\alpha_1}\cdot\dots\cdot
v_n^{\alpha_n}$ for $\alpha\in\N_0^n$ and $v\in\C^n$, or $v=-i\nabla=:D$,
the following results are restatements of
Theorems~\ref{psdCR-thm}, \ref{psdAA-thm} and \ref{AApos-thm} below:

\begin{thm}
  \label{AAA-thm}
Let $\Phi(x)\in C^\infty(\Rn,\R)$  satisfy $\int
e^{-\Phi(x)}\,dx=1$; suppose also that $|\Phi'(x)|\to\infty$ for
$|x|\to\infty$, that $|D^\beta \Phi(x)|\le C_\beta
(1+|\Phi'(x)|^2)^{\frac{1}{2}}$ for
every multi\-index $|\beta|\ge1$ and that, for some $M$, any $D^\beta\Phi$
with $|\beta|=M$ is bounded on~$\Rn$. 

Then $\A$ has closed range in
$L^2(\Rn,\mu,\C)$, so that the conclusions of Theorem~\ref{Brascamp--Lieb-thm}
are valid, and $\A$ is essentially self-adjoint (when considered) on $C^\infty_0(\Rn)$.
The operator $\AA$ is essentially self-adjoint on
$C^\infty_0(\Rn,\wedge^1\Cn)$ and has closed range.

Moreover, if there exist $\omega>0$, $C>0$ such that 
$x\cdot\Phi'(x)\ge |x|^{1+\omega}/C$ holds for all
$|x|\ge C$, then $\AA>0$ on $L^2(\Rn,\mu,\wedge^1\Cn)$.
\end{thm}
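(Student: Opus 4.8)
The plan is to transfer everything to the Witten--Laplacians $\W$ and $\WW$ of \eqref{WWW-id}, which act in the \emph{unweighted} spaces $L^2(\Rn)$ and $L^2(\Rn,\wedge^1\Cn)$: conjugation by the multiplication operator $U\colon f\mapsto e^{-\Phi/2}f$ is the unitary that realises $U\A U^{-1}=\W$ and $U\AA U^{-1}=\WW$, it carries $C^\infty_0$ bijectively onto $C^\infty_0$, and it turns the complex \eqref{dmu-cmplx} into the conjugated Witten complex (with $d_0$ replaced by $\nabla+\tfrac12\Phi'$ and $d_0^*$ by $(\tfrac12\Phi'-\nabla)\cdot$). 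Hence closed range of $\A$ is equivalent to closed range of $\W$, essential self-adjointness of $\A$ on $C^\infty_0(\Rn)$ to that of $\W$, and $\AA>0$ to $\WW>0$; so it suffices to prove the three assertions for $\W$ and $\WW$.

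The main tool is the Weyl calculus. I would work on phase space $T^*\Rn=\Rn_x\times\Rn_\xi$ with the weight $m(x,\xi)=1+|\xi|^2+|\Phi'(x)|^2$ and the flat metric $g=|dx|^2+|d\xi|^2$. The bounds $|D^\beta\Phi|\le C_\beta(1+|\Phi'|^2)^{1/2}$ for $|\beta|\ge1$ are exactly what is needed to verify that $m$ is a $g$-continuous weight and that the Weyl symbols of $\W$ and $\WW$ --- namely $|\xi|^2+\tfrac14|\Phi'(x)|^2-\tfrac12\lap\Phi(x)$, respectively that scalar symbol times $I$ plus the matrix $\Phi''(x)$ --- belong to $S(m,g)$; the hypothesis that $D^\beta\Phi$ is bounded for $|\beta|=M$ forces $|\Phi'|$, hence $m$, to grow at most polynomially, which makes $m$ $g$-temperate and keeps the symbols in a fixed subclass, so that the calculus of \cite[18.4--6]{H} applies with uniform constants. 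Since $|\lap\Phi|+\|\Phi''\|=O\bigl((1+|\Phi'|^2)^{1/2}\bigr)=o(m)$, there is a constant $C_0$ for which $\W+C_0$ and $\WW+C_0$ are globally $S(m,g)$-elliptic; the calculus then gives that $\W$ and $\WW$ are essentially self-adjoint on $\cal S$, hence (since $C^\infty_0$ is graph-dense in $\cal S$ by a cut-off approximation) also on $C^\infty_0$, that $D(\W)=\{\,u:(\mlap+|\Phi'|^2)u\in L^2\,\}$, and that $(\W+C_0)^{-1}$ and $(\WW+C_0)^{-1}$ are Weyl operators with symbols in $S(1/m,g)$. Because $|\Phi'(x)|\to\infty$, one has $m(x,\xi)\to\infty$ at infinity in $T^*\Rn$, so these resolvents are compact --- equivalently $D(\W)$ embeds compactly in $L^2$ by a Rellich argument using $|\Phi'|^{-1}\to0$ --- and therefore $\W$ and $\WW$ have purely discrete spectrum.

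For the closed-range claims, note that $\W\ge0$ with one-dimensional kernel $\C e^{-\Phi/2}$, while $\WW\ge0$ since it is the Hodge Laplacian $d_1^*d_1+d_0d_0^*$ in the Witten picture; discreteness of the spectra makes $0$ an isolated point of $\sigma(\W)$ and at worst an isolated eigenvalue of finite multiplicity of $\sigma(\WW)$, so $R(\W)$ and $R(\WW)$ are closed. Transporting back by $U$, $\A$ and $\AA$ have closed range; hence Theorem~\ref{Brascamp--Lieb-thm} applies to $\A$, and the essential self-adjointness statements follow.

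It remains to prove positivity of $\WW$ under the confinement $x\cdot\Phi'(x)\ge|x|^{1+\omega}/C$ for $|x|\ge C$. Since $\sigma(\WW)$ is already discrete and $\WW\ge0$, it is enough to show $\ker\WW=0$. In the weighted picture $\AA v=0$ forces $d_0^*v=0$ and $d_1v=0$ (each Hodge summand being $\ge0$), so by the Poincar\'e lemma on $\Rn$ the closed form $v\in L^2(\mu,\wedge^1\Cn)$ equals $\nabla f$ for some $f\in H^1_{\loc}(\Rn)$ with $\nabla f\in L^2(\mu)$, while $d_0^*v=0$ reads $\A f=0$, whence $f\in C^\infty(\Rn)$ by elliptic regularity. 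It then suffices to show $f$ is constant, since then $v=0$; and this is where the confinement enters. Integrating it along rays gives $\Phi(x)\ge|x|^{1+\omega}/C'-C'$, so $\mu$ has super-Gaussian decay, and a weighted Hardy-type estimate --- built from $\A f=0$ together with the Lyapunov field $Y=\Phi'/|\Phi'|^2$, which is defined off a compact set and satisfies $Y\cdot\Phi'=1$ and $\dv Y=O(1/|\Phi'|)\to0$, tested against $f$ with cut-offs --- forces $\nabla f=0$. The main obstacle is precisely to make this last step rigorous: one has to control the growth of the a priori only locally square-integrable primitive $f$ against the decay of $\mu$ so that the cut-off error terms genuinely vanish in the limit. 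This is the one point where the exponent $1+\omega>1$ is indispensable --- for $\Phi$ growing only like $|x|\log|x|$ one expects a nonzero harmonic $1$-form, so that $0\in\sigma(\WW)$ --- and it is parallel to the Persson--Agmon arguments of Section~\ref{inv-sect}.
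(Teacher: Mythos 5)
Your reduction to the Witten--Laplacians and the use of the Weyl calculus is the paper's strategy for the first two assertions, but the proof of $\AA>0$ --- which is the substantive third claim --- is left with a gap that you yourself flag, and it is a real one. Having written $v=\nabla f$ with $f\in H^1_{\loc}$ and $\A f=0$, you must rule out a non-constant smooth $f$ whose gradient lies in $L^2(\mu,\wedge^1\Cn)$; the ``Lyapunov field / weighted Hardy'' argument is not carried out, and the cut-off errors you mention cannot be controlled without first knowing that $v$ itself decays. The paper closes exactly this hole in two steps. First it proves $Z(\WW)\subset\cal S(\Rn,\C^n)$: from $v=\tilde K_2(x,D)^Nv$ with $\tilde K_2\in\op{OPS}(m^{-1},g)$ entrywise one gets $|\Phi'|^ND^\beta v_j\in L^2$ for all $N,\beta$, and (IV$_\omega$) converts $|\Phi'|$-weights into $\ang{x}$-weights via $\ang{x}^N\le C(1+|\Phi'|^{N'})$. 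Second, for $v\in Z(\WW)\cap\cal S$ it writes down the \emph{explicit} primitive
\begin{equation*}
  f(x)=\int_0^1 e^{-(\Phi(x)-\Phi(tx))/2}\,x\cdot v(tx)\,dt ,
\end{equation*}
and shows $f\in L^2(\Rn)$ by integrating (IV$_\omega$) along rays to get $\Phi(x)-\Phi(tx)\ge |x-tx|^{1+\omega}/(C(1+\omega))$, so that $e^{-(\Phi(x)-\Phi(tx))/2}$ beats any polynomial in $\ang{x-tx}$. Then $v=d_\Phi f\in R(d_\Phi)$ while $v\in Z(d_\Phi^*)=R(d_\Phi)^\perp$, whence $v=0$. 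This is where $\omega>0$ is actually used, and it is the step you would need to supply.

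A second, repairable but genuine, error is your choice of calculus: you take the weight $m=1+|\xi|^2+|\Phi'|^2$ with the \emph{flat} metric $g=|dx|^2+|d\xi|^2$. For that metric $g^\sigma=g$, so the Planck function $h=\sup(g/g^\sigma)^{1/2}\equiv1$ and the composition formula $a\# b-ab\in S(hm_1m_2,g)$ gives no gain whatsoever: the ``parametrix'' remainder lies in the same class as the identity, and none of your conclusions (symbol of the resolvent in $S(1/m,g)$, compactness of remainders, the domain characterisation) follow. The paper instead takes $m=(1+|\xi|^2+|\Phi'|^2)^{1/2}$ and $g=|dx|^2+|d\xi|^2/m^2$, computes $g^\sigma=m^2g$ so that $h=m^{-1}\to0$ by the hypothesis $|\Phi'(x)|\to\infty$, and this is precisely what makes the remainders $K_j\in\op{OPS}(m^{-1},g)$ compact and the parametrix, domain and closed-range arguments work; the bounds $|D^\beta\Phi|\le C_\beta(1+|\Phi'|^2)^{1/2}$ and the boundedness of $D^\beta\Phi$ for $|\beta|=M$ are used to check that this $g$ is slowly varying and $\sigma$-temperate. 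With the correct metric your remaining steps (essential self-adjointness, compact resolvent, closed range from discreteness of the spectrum) do go through and agree with the paper's.
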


The considered class of $\Phi$ is larger than those in
\cite{Sj96,Hel96}; e.g.\ any polynomial $\Phi$ of even degree $\ge 2$
satisfies the assumptions when the part of highest degree is positive
definite; this includes the condition for strict positivity of $\AA$
(seen as in Example~\ref{exmp-ssect} below).

\begin{thm}
  \label{AAAdm-thm}
Under the hypothesis for $\Phi$ made in the beginning of
Theorem~\ref{AAA-thm}, a function 
$u$ belongs to $D(\A)$ precisely when it has the property 
\begin{equation}
  (\Phi')^\beta D^{\alpha}u\in L^2(\Rn,\mu) \quad\text{for all 
  $\alpha$ and $\beta$ such that}\quad  |\alpha+\beta|\le 2.
  \label{Adm-eq}
\end{equation}
A form $\sum_{l=1}^n v_l\,dx^l$ is in $D(\AA)$ if and only if each $v_l$ 
satisfies \eqref{Adm-eq}.

Consequently $\A$ and $\AA$ have compact resolvents,
hence their spectra are discrete.
\end{thm}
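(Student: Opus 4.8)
The plan is to transfer the question to the Witten--Laplacians $\W$, $\WW$ of \eqref{WWW-id}, which are obtained from $\A$, $\AA$ by conjugation with multiplication by $e^{-\Phi/2}$ (a unitary $L^2(\Rn,\mu,\wedge^k\Cn)\to L^2(\Rn,\wedge^k\Cn)$ preserving $C_0^\infty$), and then to apply H\"ormander's Weyl calculus \cite[18.4--6]{H} to these honest differential operators on unweighted $L^2$. Throughout I would work with the order function $m(x,\xi)=1+|\xi|^2+|\Phi'(x)|^2$ and the metric $g_{x,\xi}=|dx|^2+|d\xi|^2/m(x,\xi)$.

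First I would check that $g$ and $m$ satisfy H\"ormander's requirements. The estimates $|D^\beta\Phi|\le C_\beta(1+|\Phi'|^2)^{1/2}$, $|\beta|\ge1$, give along line segments a Gronwall bound showing that $1+|\Phi'(\cdot)|$, hence $m$, varies slowly; the $\sigma$-temperateness of $g$ and the temperateness of $m$ additionally need that $|\Phi'(x)|$ grows at most polynomially in $|x|$, and \emph{this} is exactly what the hypothesis ``some $D^\beta\Phi$ with $|\beta|=M$ is bounded'' provides --- integrate that bounded $M$-th order derivative $M-1$ times, using the estimates above to control the intermediate derivatives of $\Phi$. Next I would verify that the symbol $a(x,\xi)=|\xi|^2+\tfrac14|\Phi'(x)|^2-\tfrac12\lap\Phi(x)$ of $\W$ lies in $S(m,g)$ (again by $|D^\beta\Phi|\le C_\beta(1+|\Phi'|^2)^{1/2}$), and that, since $\tfrac12\lap\Phi=O(m^{1/2})=o(m)$ while $|\Phi'(x)|\to\infty$, the symbol $a+C$ is elliptic in $S(m,g)$ for $C$ large enough.

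Granting this, the Weyl calculus furnishes a two-sided parametrix $b^w\in\OP S(m^{-1},g)$ of $a^w+C$ modulo $\OP S(m^{-N},g)$ for every $N$. Combined with the essential self-adjointness of $\W$ on $C_0^\infty(\Rn)$, which follows from Theorem~\ref{AAA-thm} via the conjugation above, this identifies the (unambiguous) domain $D(\W)$ with $H(m,g)=\{w\in\mathcal S'(\Rn):m^w w\in L^2\}$. Because $m(x,\xi)$ is, up to two-sided bounds, $1+\sum_{|\alpha+\beta|\le2}|\Phi'(x)^\beta\xi^\alpha|$, membership in $H(m,g)$ is equivalent to $(\Phi')^\beta D^\alpha w\in L^2(\Rn)$ for all $|\alpha+\beta|\le2$. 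Transporting back by $e^{-\Phi/2}$ and commuting this weight past the operators $(\Phi')^\beta D^\alpha$ produces only lower-order corrections, once more controlled by the bounds on $D^\beta\Phi$, so the condition becomes precisely \eqref{Adm-eq} for $u\in D(\A)$. For the $1$-forms, $\WW$ has the $n\times n$ matrix symbol $a(x,\xi)I+\Phi''(x)$ with $\Phi''$ of size $O(m^{1/2})$, hence lower order; thus $a(x,\xi)I+\Phi''(x)+CI$ is elliptic in $S(m,g)$ with values in $n\times n$ matrices, its (block-diagonal) principal part decouples the components, and the same parametrix argument applied entrywise yields that $\sum v_l\,dx^l\in D(\AA)$ iff each $v_l$ satisfies \eqref{Adm-eq}. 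Finally $m(x,\xi)\to\infty$ as $|x|+|\xi|\to\infty$ (this is where $|\Phi'(x)|\to\infty$ enters again), so $H(m,g)\hookrightarrow L^2$ is compact; hence $(\W+C)^{-1}$ and $(\WW+C)^{-1}$, and therefore the resolvents of $\A$ and $\AA$, are compact, and the spectra are discrete.

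The main obstacle I anticipate is the verification that $g$ is $\sigma$-temperate and $m$ a $g$-temperate weight from the meagre hypotheses at hand: slow variation of $1+|\Phi'|$ is a quick Gronwall estimate, but temperateness genuinely requires polynomial growth of $\Phi'$, and extracting that growth from the boundedness of \emph{one} derivative $D^\beta\Phi$ with $|\beta|=M$ together with the interpolation-flavoured bounds $|D^\gamma\Phi|\le C_\gamma(1+|\Phi'|^2)^{1/2}$ is the delicate step. A secondary, purely bookkeeping nuisance is the passage between the weighted condition \eqref{Adm-eq} in $L^2(\mu)$ and the unweighted condition $w\in H(m,g)$, i.e.\ the Leibniz expansion of $(\Phi')^\beta D^\alpha(e^{\Phi/2}w)$ and its reverse.
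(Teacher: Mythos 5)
Your proposal is correct and follows essentially the same route as the paper: conjugate by $e^{-\Phi/2}$ to the Witten--Laplacians, place their symbols in $S(m^2,g)$ for the weight $(1+|\xi|^2+|\Phi'(x)|^2)^{1/2}$ and split metric, build a parametrix in the Weyl calculus (with condition \eqref{fibnd-cnd} used exactly as you say, to secure $\sigma$-temperateness), read off the maximal domain as $\{w:(\Phi')^\beta D^\alpha w\in L^2,\ |\alpha+\beta|\le2\}$, and conclude compactness of the resolvent from the decay of $m^{-1}$ at infinity. The only cosmetic difference is that the paper uses a one-step parametrix with remainder in $\OP S(m^{-1},g)$ and a short algebraic bootstrap rather than an asymptotic parametrix and the spaces $H(m,g)$, which changes nothing of substance.
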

When the condition for $\AA>0$ is fulfilled for some
$\omega\ge 1$, then $D(\A)$ contains the set $B_2(\Rn,\mu)$, defined as those
$u$ for which $x^\beta D^\alpha u\in L^2(\mu)$ when $|\alpha+\beta|\le2$.

Theorem~\ref{AAAdm-thm} may be used for example in distortion
arguments for $\AA^{-1}$ in the correlation estimates; see 
\cite[Thm.~4.1]{Hel96}. Recently Wei-Min Wang \cite{Wan99} has  applied
Theorem~\ref{AAA-thm}, or rather the corresponding facts on $\WW$ in (the
proofs of) 
Theorems~\ref{psdAA-thm}--\ref{AApos-thm} below; indeed the explicit
conditions on $\Phi$ implying closed range and injectivity was
used in \cite[Sect.~3--4]{Wan99}.  In addition the unitary equivalence in
\eqref{AAuni-eq} has 
been used by V.~Bach, T.~Jecko and J.~Sj{\"o}strand \cite[Prop.~V1]{BaJeSj97}.

\begin{rem} 
For general probability measures $\mu$ it is questionable
which expressions the associated operators $\A$, $\AA$ can have: already for $\mu$
absolutely continuous with respect to Lebesgue measure, $\Phi$ equals $+\infty$ in 
$\Rn\setminus\supp\tilde\mu$, and this Borel set may be so irregular that
when one attempts a calculation of the $d_0^*$ in \eqref{Ak-def}, then
$\partial_je^{-\Phi}$ is unequal to $-\Phi'_je^{-\Phi}$ (and the latter may
even be undefined in $\cal D'$).
\end{rem}

\subsubsection*{Acknowledgement} I am very grateful to B.~Helffer for
several discussions on the subject and for letting me carry further some
ideas from \cite{Hel96} with this work, and I thank V.~Bach and T.~Jecko
for fruitful conversations and for explaining me about the applications to
statistical mechanics.


\section{Notation and Preliminairies}   \label{notation-sect}
For an operator $T$ in a Hilbert space with scalar product
$\scal{\cdot}{\cdot}$ and norm $\|\cdot\|$, the domain, range and
kernel is written $D(T)$, $R(T)$ and $Z(T)$, respectively, while 
$\rho(T)$ and $\sigma(T)$
stand for the resolvent and spectrum of $T$; the space of bounded operators is
denoted by $\B(H)$. The essential spectrum
$\sigma_{\ess}(T)=\sigma(T)\setminus \sigma_{\disc}(T)$ consists in
the self-adjoint case of those $\lambda\in\C$ for which there is 
a sequence with $\|x_n\|=1$ and $(T-\lambda)x_n\to
0$ while  $x_n\rightarrow 0$ weakly; it is the complement of the
$\lambda$ for which $T-\lambda$ is Fredholm $D(T)\to H$ in the graph
topology, i.e.\ has $Z(T-\lambda)$ of finite dimension and
$R(T-\lambda)$ closed (using \cite[Prop.~19.3]{H}).
Moreover, 
\begin{equation}
  m(T)=\inf\Set{\Re\scal{Tx}{x}}{\|x\|=1}
\end{equation}
is the lower bound of $T$.
Occasionally the norm  $\nrm{\cdot}{X}$ in a space $X$ is written
$\norm{\cdot}{X}$, to avoid unnecessary subscripts. 

Given a triple $(H,V,s)$ consisting of two Hilbert spaces $V\hookrightarrow H$
with bounded, \emph{dense} injection and a bounded sesqui-linear form $s(\cdot,\cdot)$
on $V$, then \emph{coerciveness}\,---\,i.e.\ existence of $c>0$ and $k\in\R$ such that 
\begin{equation}
  \Re s(u,u) \ge c\nrm{u}{V}^2- k\nrm{u}{H}^2 \quad\text{for all}\quad u\in V,
\end{equation}
gives the following for the operator $S$ defined on $\bigl\{\,u\in V \bigm| \exists f\in
H\forall v\in V: s(u,v)=\scal{f}{v}_{H}\,\bigr\}$ by the formula $Su=f$:

\begin{lem}
  \label{Lax--Milgram-lem}
$1^\circ$ $S$ is closed in $H$ with $D(S)$ dense in $V$. The adjoint $S^*$
is the analogous operator defined from $s^*(v,w):=\overline{s(w,v)}$.

$2^\circ$ When $s(\cdot,\cdot)$ is positive definite on $V$ (i.e.\
$s(v,v)>0$ for $v\ne 0$, e.g.\ for $k=0$),
then $S$ extends to the isometry $\cal A$ of $(V, s(\cdot,\cdot))$ onto
$V^*$, its antidual, and $D(S)=\cal A^{-1}(H)$.
\end{lem}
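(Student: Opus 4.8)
The plan is to realise $S$ as the restriction, to the preimage of $H$, of a Lax--Milgram isomorphism $\cal A\colon V\to V^*$, and to read off $1^{\circ}$ and $2^{\circ}$ from this. First I would note that $S$ is well defined: for $u\in V$ the $f\in H$ with $s(u,v)=\scal{f}{v}_H$ for all $v\in V$ is unique because $V$ is dense in $H$, and $D(S)$ is a subspace since $s$ is sesqui-linear; moreover density of $D(S)$ in $V$ will imply density in $H$. For $1^{\circ}$ one may assume $k=0$: replacing $s$ by $s(u,v)+k\scal{u}{v}_H$ replaces $S$ by $S+kI$ with the same domain, keeps $s$ bounded on $V$, and yields $\Re s(u,u)\ge c\nrm{u}{V}^2$; so I assume this coercive form throughout, noting that then $s$ is in particular positive definite, the extra hypothesis of $2^{\circ}$ (the ``$k=0$'' case singled out there).

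Next I would set up the Gelfand triple $V\hookrightarrow H\hookrightarrow V^*$: the adjoint of the bounded dense injection $V\hookrightarrow H$ embeds $H\cong H^*$ boundedly into the antidual $V^*$, with dense range because $V\hookrightarrow H$ is injective, and on $H\times V$ the $V^*$--$V$ pairing restricts to $\scal{f}{v}_H$. Define $\cal A\colon V\to V^*$ by $\dual{\cal Au}{v}=s(u,v)$. Then $\cal A$ is bounded (since $s$ is), and $\nrm{\cal Au}{V^*}\ge\Re\dual{\cal Au}{u}/\nrm{u}{V}=\Re s(u,u)/\nrm{u}{V}\ge c\nrm{u}{V}$, so $\cal A$ is injective with closed range; surjectivity is the Lax--Milgram lemma (or: a $0\ne\phi\in V$ annihilating $R(\cal A)$ would give $s(u,\phi)=0$ for all $u$, hence $s(\phi,\phi)=0$, impossible). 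Thus $\cal A$ is an isomorphism. Unwinding the definition of $D(S)$, one has $u\in D(S)$ exactly when $\cal Au\in V^*$ lies in the image of $H$, i.e.\ $\cal Au\in H$, and then $Su$ is that element; so $D(S)=\cal A^{-1}(H)$ and $S=\cal A|_{\cal A^{-1}(H)}$. In the situation of $2^{\circ}$, $s(v,v)>0$ is real, so by polarisation $s$ is Hermitian and $s(\cdot,\cdot)$ is an inner product on $V$ whose norm is equivalent to $\nrm{\cdot}{V}$ (boundedness and coerciveness); hence $(V,s)$ is a Hilbert space, $\cal A$ is its Riesz isomorphism onto $V^*$ --- an isometry --- and $D(S)=\cal A^{-1}(H)$, which is $2^{\circ}$.

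It remains to finish $1^{\circ}$. Density of $D(S)=\cal A^{-1}(H)$ in $V$ (hence in $H$) is clear since $H$ is dense in $V^*$ and $\cal A^{-1}$ is a homeomorphism. For closedness, let $u_n\in D(S)$ with $u_n\to u$ and $Su_n\to f$ in $H$; from $s(u_n-u_m,v)=\scal{Su_n-Su_m}{v}_H$ at $v=u_n-u_m$ and coerciveness,
\[
  c\,\nrm{u_n-u_m}{V}^2\le\Re\scal{Su_n-Su_m}{u_n-u_m}_H+k\,\nrm{u_n-u_m}{H}^2\longrightarrow 0,
\]
because $(Su_n)$ and $(u_n)$ are $H$-Cauchy; thus $(u_n)$ is Cauchy in $V$, its $V$-limit equals $u$ (as $V\hookrightarrow H$), and passing to the limit in $s(u_n,v)=\scal{Su_n}{v}_H$ gives $u\in D(S)$, $Su=f$. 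Finally $s^*(v,w):=\overline{s(w,v)}$ is again bounded and coercive with the same constants ($\Re s^*=\Re s$ on the diagonal), so it defines an operator $T$ by the same recipe; for $w\in D(T)$ and $u\in D(S)$, $\scal{Su}{w}_H=s(u,w)=\overline{s^*(w,u)}=\overline{\scal{Tw}{u}_H}=\scal{u}{Tw}_H$, whence $w\in D(S^*)$ and $S^*w=Tw$, i.e.\ $T\subseteq S^*$. Conversely, if $w\in D(S^*)$ and $g:=S^*w$, then $s(v,w)=\scal{Sv}{w}_H=\scal{v}{g}_H$ for $v\in D(S)$; both sides are $V$-continuous in $v$ and $D(S)$ is dense in $V$, so the identity holds on all of $V$, which says $s^*(w,v)=\scal{g}{v}_H$ for $v\in V$, so $w\in D(T)$ and $Tw=g$. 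Hence $S^*=T$.

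The one step that is not formal bookkeeping with the triple is closedness: the hypotheses furnish convergence of $u_n$ and $Su_n$ only in $H$, whereas $s$ is continuous only in the $V$-topology, and coerciveness is precisely the device that promotes $H$-control of $(Su_n)$ to $V$-convergence of $(u_n)$, as in the displayed estimate. The remaining care is routine: that $H^*\hookrightarrow V^*$ has dense range because $V\hookrightarrow H$ is injective, and that ``isometry'' in $2^{\circ}$ refers to $V$ carrying the $s$-norm, which is legitimate since that norm is equivalent to the given one.
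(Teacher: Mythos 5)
Your proof is correct, and it is essentially the argument the paper has in mind: the paper supplies no proof of this lemma, deferring to Lions--Magenes \cite[Sect.~9.1]{LM}, and your route\,---\,the Gelfand triple $V\hookrightarrow H\hookrightarrow V^*$, the Lax--Milgram isomorphism $\cal A$, the identification $D(S)=\cal A^{-1}(H)$, and the use of coerciveness to promote $H$-convergence of $(u_n,Su_n)$ to $V$-convergence of $(u_n)$ in the closedness step\,---\,is exactly that standard one; your treatment of the adjoint via density of $D(S)$ in $V$ is also the right mechanism. One caveat concerns $2^\circ$. The blanket reduction to $k=0$ is harmless for $1^\circ$ (it replaces $S$ by $S+kI$ on the same domain and shifts $S^*$ accordingly), but $2^\circ$ is a statement about the \emph{original} form, and positive definiteness in the literal sense $s(v,v)>0$ does not by itself yield the norm equivalence you invoke in the parenthesis ``(boundedness and coerciveness)'': take $V=H=\ell^2$ with the identity injection and $s(u,v)=\sum_j u_j\overline{v_j}/j$, which is bounded, coercive with $c=k=1$ and positive definite, yet $(V,s(\cdot,\cdot))$ is incomplete and $\cal A$ is not onto $V^*$. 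Thus your argument for $2^\circ$ is valid precisely in the ``$k=0$'' case singled out in the statement, equivalently when $s(\cdot,\cdot)^{1/2}$ is an equivalent norm on $V$\,---\,which is in fact the only situation in which the paper applies $2^\circ$ (Corollary~\ref{V'-cor} under condition \eqref{eqvi-cnd} of Theorem~\ref{ran-thm}). This is a looseness in the printed hypothesis rather than a flaw in your reasoning, but it is worth stating explicitly that your proof of $2^\circ$ uses the stronger form of the hypothesis.
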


This result, known as Lax--Milgram's lemma, may be proved straightforwardly
in the fashion of J.--L.~Lions and E.~Magenes \cite[Sect.~9.1]{LM}.

$C^\infty_0(\Rn)$ denotes the space of infinitely differentiable
functions with compact support, and $\cal D'(\Rn)$ its dual; $\cal D'{}^k$
the subspace of distributions of order $k$;
$\dual{u}{\varphi}=u(\varphi)$ for all $u\in\cal D'$ and $\varphi\in
C^\infty_0$. For $L^2(\Rn,\mu)$ the scalar product and norm is written
$\scal{\cdot}{\cdot}_{\mu}$ and $\|\cdot\|_{\mu}$ respectively,
although with $\mu$ omitted in case of the Lebesgue measure. Similar
notation is adopted for the space of $k$-forms
$L^2(\Rn,\mu,\wedge^k\C)$; in general, for a space $F$ of functions
$\Rn\to\C$, the set $F(\Rn,\wedge^k\C^n)$ consists of the differential forms
with coefficients therein. 

A differential form of degree $k$ with complex $C^\infty$-coefficients has
the form
\begin{equation}
  f={\textstyle\sum_{|J|=k}'} f_J(x)\,dz^J;
\end{equation}
here $\sum'$ indicates summation over increasing $k$-tuples
$J=(j_1,\dots,j_k)$, i.e.\
strictly increasing maps $\{\,1,\dots,k \,\}\to \{\,1,\dots,n \,\}$; and
$dz^J:=dz^{j_1}\wedge\dots\wedge dz^{j_k}$ stands for the usual $k$-linear
anti-symmetric map $(\Cn)^k\to \C$ derived from the
Cartesian coordinates in $\Cn$. By anti-symmetry in the indices,
$f_{jL}=f_J\varepsilon^{jL}_{J}$ where $\varepsilon^{jL}_J=0$ unless $J$ has
the same elements as $jL:=(j,l_1,\dots,l_{k-1})$, in which case
$\varepsilon^{jL}_J$ is the sign of the permutation
$\left(\begin{smallmatrix} jL\\J\end{smallmatrix}\right)$.

For the distribution space
$\cal D'(\Rn,\wedge^k\Cn)$, see Appendix~\ref{curr-app}.

\section{An Operator Approach}  \label{CR-sect}
 It is shown in this section how Hilbert space methods can provide detailed
information about \eqref{Brascamp--Lieb-id}, using the rewriting given in
\eqref{P-id}.  The basic observation is that a similar projection appears in 
the Closed Range Theorem, at least in the version established below where
the self-adjointness of $T^*T$ and $TT^*$ is incorporated for this purpose. 

\bigskip

Let in the sequel $T\colon H\to H_1$ be a densely defined, closed operator between
Hilbert spaces $H$ and $H_1$, and let $F\subset H$ and $F_1\subset H_1$
denote two closed subspaces such that 
\begin{equation} 
  F=\overline{R(T^*)},\qquad  R(T)\subset F_1.
  \label{FF1-eq}
\end{equation}
Here the possibility of taking $F_1$ different from
$\overline{R(T)}$ is adopted from H{\"o}rmander's treatment of the
$\bar\partial$-complex \cite[Ch.~4]{Hdbar}; in an analogous way 
this is useful for the below study
of the exterior derivative $d$ in $L^2(\Rn,\mu,\wedge^k\Cn)$ spaces, where
$F_1=Z(d_{k+1})$ is a natural choice when $T=d_k$.

The closedness of $T$'s range is closely connected to the properties of the
operators
\begin{equation}
  S=T^*T|_{F},\qquad S_1=TT^*|_{F_1}
  \label{SS1-eq}
\end{equation}
and to the orthogonal projection $P$ onto $F$. In fact one has the next
result, which might be folklore, but nevertheless is formulated as a
theorem in view of the clarification it gives for \eqref{Brascamp--Lieb-id}:

\begin{thm}[Closed Range Theorem]
  \label{CR-thm}
 When $T$ is an operator as above, and the set-up in
 \eqref{FF1-eq}--\eqref{SS1-eq} is used, then the following properties are equivalent: 
\begin{rmlist}
  \item  \label{RT-cnd} $R(T)$ is closed and equal to $F_1$.
  \item  \label{lb-cnd} There exists $c>0$ such that $\norm{y}{H_1}\le
     c\norm{T^*y}{H}$ for $y\in D(T^*)\cap F_1$.
  \item  \label{RT*-cnd} $R(T^*)$ is closed and $F_1^\perp = Z(T^*)$.
  \item  \label{S1-cnd} $S_1$ is injective and has closed range. 
  \item   \label{S-cnd}  $S$ is injective and has closed range.
  \item   \label{prj-cnd} $S_1$ is injective and 
    \begin{equation}
        Px=T^*S_1^{-1}Tx\quad\text{for all}\quad x\in D(T).
       \label{proj-id}
    \end{equation}
\end{rmlist}
In the affirmative case, $S$ and $S_1$ are unitarily equivalent, that is
\begin{equation}
  S=U^*S_1 U
  \label{iseq-eq}
\end{equation}
holds for $U=TS^{-\frac{1}{2}}$, which is an isometry $U\in\B(F,F_1)$.
Consequently
$\sigma(S)=\sigma(S_1)$ and $\sigma_{\ess}(S)=\sigma_{\ess}(S_1)$. 
\end{thm}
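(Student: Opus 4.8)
The plan is to prove the chain of equivalences by a cyclic argument, exploiting at each turn the self-adjointness of $S=T^*T|_F$ and $S_1=TT^*|_{F_1}$, together with the standard fact that for a densely defined closed operator $T$ one has $\overline{R(T)}=Z(T^*)^\perp$ and $\overline{R(T^*)}=Z(T)^\perp$. First I would record once and for all that $S$ and $S_1$ are self-adjoint: indeed $T^*T$ is self-adjoint (von Neumann), $F=\overline{R(T^*)}$ reduces $T^*T$ since $Z(T^*T)=Z(T)=F^\perp$, and the restriction of a self-adjoint operator to a reducing subspace is self-adjoint; the argument for $S_1$ on $F_1$ is identical once one notes $Z(TT^*)=Z(T^*)\supset F_1^\perp$ is not automatic, so here one uses that $F_1\supset R(T)$ forces $F_1^\perp\subset R(T)^\perp=Z(T^*)$, hence $F_1$ reduces $TT^*$ as well. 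I would also note $m(S)=m(S_1)$ is \emph{not} needed; instead the link is the polar-type identity $\|T^*y\|^2=\scal{TT^*y}{y}$ and $\|Tx\|^2=\scal{T^*Tx}{x}$.

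The cycle I would run is \ref{RT-cnd}$\Rightarrow$\ref{lb-cnd}$\Rightarrow$\ref{RT*-cnd}$\Rightarrow$\ref{RT-cnd}, then \ref{lb-cnd}$\Rightarrow$\ref{S1-cnd}$\Rightarrow$\ref{prj-cnd}$\Rightarrow$\ref{lb-cnd}, and finally \ref{S-cnd}$\Leftrightarrow$\ref{S1-cnd} via the unitary equivalence. For \ref{RT-cnd}$\Rightarrow$\ref{lb-cnd}: if $R(T)=F_1$ is closed, then $T\colon D(T)\cap Z(T)^\perp\to F_1$ is a closed bijection onto a Banach space, so by the open mapping theorem $\|Tx\|\ge c\|x\|$ there; dualizing (or directly: $R(T^*)$ is then closed too by the classical closed range theorem and $T^*\colon F_1\cap D(T^*)\to F$ is bounded below since $F_1^\perp=Z(T^*)$) gives \ref{lb-cnd}. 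For \ref{lb-cnd}$\Rightarrow$\ref{RT*-cnd}: the estimate $\|y\|\le c\|T^*y\|$ on $D(T^*)\cap F_1$ shows $Z(T^*)\cap F_1=0$, and combined with $Z(T^*)\supset F_1^\perp$ (from $R(T)\subset F_1$) gives $Z(T^*)=F_1^\perp$; the same estimate shows $T^*$ maps $F_1\cap D(T^*)$ boundedly below into $F$, so $R(T^*)=T^*(F_1\cap D(T^*))$ is closed by a Cauchy-sequence argument using closedness of $T^*$. For \ref{RT*-cnd}$\Rightarrow$\ref{RT-cnd}: $R(T^*)$ closed gives $R(T)$ closed by the classical closed range theorem, and $R(T)=Z(T^*)^\perp=(F_1^\perp)^\perp=F_1$.

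For the second loop: \ref{lb-cnd}$\Rightarrow$\ref{S1-cnd} is where the self-adjointness pays off: from $\|y\|\le c\|T^*y\|$ we get $\scal{S_1 y}{y}=\|T^*y\|^2\ge c^{-2}\|y\|^2$ for $y\in D(S_1)\subset D(T^*)\cap F_1$, so $S_1$ is a self-adjoint operator on $F_1$ bounded below by $c^{-2}>0$, hence boundedly invertible, so injective with closed (indeed full) range $F_1$. Then \ref{S1-cnd}$\Rightarrow$\ref{prj-cnd}: $S_1$ injective with closed range means $S_1^{-1}\in\B(F_1)$ (self-adjoint and bijective onto $F_1$), and for $x\in D(T)$ one checks $T^*S_1^{-1}Tx$ makes sense because $Tx\in R(T)\subset F_1$ and $S_1^{-1}Tx\in D(S_1)\subset D(T^*)$; writing $x=Px+(x-Px)$ with $x-Px\in F^\perp=Z(T)$ reduces the identity to $x\in D(T)\cap F$, where $Tx=TT^*w$ for some $w=S_1^{-1}Tx$, and then $T^*S_1^{-1}Tx=T^*w$; to see $T^*w=x$ one uses $T^*(Tx-TT^*w)=0$ i.e.\ $S x=T^*TT^*w$ combined with injectivity of $S$ on $F$ — but rather than invoke \ref{S-cnd} prematurely I would argue directly: $T(x-T^*w)=0$ so $x-T^*w\in Z(T)=F^\perp$, while $x\in F$ and $T^*w\in R(T^*)\subset F$, forcing $x=T^*w=T^*S_1^{-1}Tx$. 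Then \ref{prj-cnd}$\Rightarrow$\ref{lb-cnd}: for $y\in D(T^*)\cap F_1$ put $x=T^*y\in F\cap D(T)$; then $\|x\|^2=\scal{x}{Px}=\scal{x}{T^*S_1^{-1}Tx}=\scal{Tx}{S_1^{-1}Tx}$, and since $Tx=TT^*y=S_1 y$ this is $\scal{S_1 y}{S_1^{-1}S_1 y}=\scal{S_1 y}{y}=\|T^*y\|^2\cdot$; more carefully, $\scal{S_1 y}{S_1^{-1}S_1y}=\scal{S_1y}{y}$ is circular, so instead: $\|T^*y\|^2=\scal{S_1y}{y}$, and $\scal{S_1y}{y}^2\le\|S_1y\|\,\|y\|\cdot\|T^*y\|^2/\|T^*y\|$ — the clean route is to note $S_1^{-1}\in\B(F_1)$ from \ref{prj-cnd} applied to $x\in R(T^*)$ shows $T^*$ is bounded below, namely $\|y\|\le\|S_1^{-1}\|\,\|S_1y\|=\|S_1^{-1}\|\,\|TT^*y\|\le\|S_1^{-1}\|\,\|T\|$-type reasoning fails as $T$ is unbounded, so I would instead extract boundedness-below of $T^*$ from surjectivity of $S_1$: $S_1$ surjective self-adjoint injective $\Rightarrow S_1^{-1}$ bounded $\Rightarrow$ for any $y\in F_1\cap D(T^*)$, writing $z=S_1^{-1}\pi$ where... — \textbf{this bookkeeping is the main obstacle}, and the right way to finesse it is to prove \ref{prj-cnd}$\Rightarrow$\ref{S1-cnd} first (trivial: the formula forces $S_1$ surjective onto $F$... ) no — cleanest: \ref{prj-cnd} $\Rightarrow$ $R(T)=F_1$, because $Px=T^*S_1^{-1}Tx$ for all $x\in D(T)$ shows in particular $R(T^*)\supset R(P)=F$ hence $R(T^*)=F$ is closed, giving \ref{RT*-cnd}, whence \ref{lb-cnd}. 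Thus I restructure the second loop as \ref{lb-cnd}$\Rightarrow$\ref{S1-cnd}$\Rightarrow$\ref{prj-cnd}$\Rightarrow$\ref{RT*-cnd}, folding into the first cycle.

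Finally the unitary equivalence: assume the equivalent conditions hold, so $S$ is self-adjoint, injective, with closed range on $F$, hence $S\ge c^{-2}>0$ and $S^{-1/2}$ is a well-defined bounded self-adjoint operator on $F$. Set $U=TS^{-1/2}$, defined on $F$ since $R(S^{-1/2})=R(S^{1/2})\subset D(S^{1/2})\subset D(T)$ (as $D(T^*T)\subset D(T)$ and $S^{1/2}$ is the closure of $T$ restricted appropriately — more precisely $D(|T|)=D(T)$ and $S^{1/2}=|T|$ on $F$). Then for $x\in F$, $\|Ux\|^2=\|TS^{-1/2}x\|^2=\scal{T^*TS^{-1/2}x}{S^{-1/2}x}=\scal{S S^{-1/2}x}{S^{-1/2}x}=\scal{S^{1/2}x}{S^{1/2}x}$? no: $\scal{S^{-1/2}x}{SS^{-1/2}x}=\scal{S^{-1/2}x}{S^{1/2}x}=\|x\|^2$, so $U$ is an isometry $F\to H_1$ with range in $\overline{R(T)}=F_1$; it is onto $F_1$ because $R(T)=F_1$ and $S^{-1/2}$ is bijective on $F$. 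Compute $U^*$: for $y\in F_1$, $\scal{Ux}{y}=\scal{TS^{-1/2}x}{y}=\scal{S^{-1/2}x}{T^*y}=\scal{x}{S^{-1/2}T^*y}$, so $U^*=S^{-1/2}T^*|_{F_1}$. Then $U^*S_1U=S^{-1/2}T^*\,TT^*\,TS^{-1/2}=S^{-1/2}(T^*T)(T^*T)S^{-1/2}=S^{-1/2}S^2S^{-1/2}=S$ on $F$, using that $R(S^{-1/2})\subset D(S)$ and $R(SS^{-1/2})=R(S^{1/2})\subset D(S^{1/2})\subset D(T)$ so all compositions are legitimate. Unitary equivalence of self-adjoint operators gives $\sigma(S)=\sigma(S_1)$ and $\sigma_{\ess}(S)=\sigma_{\ess}(S_1)$, completing the proof. \qedhere
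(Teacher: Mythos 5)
Your architecture largely tracks the paper's: establish self-adjointness of $S$ and $S_1$ first, prove \ref{RT-cnd}$\iff$\ref{lb-cnd}$\iff$\ref{RT*-cnd}, pass to \ref{S1-cnd} via the lower bound $\scal{S_1y}{y}=\|T^*y\|^2\ge c^{-2}\|y\|^2$, and obtain \eqref{proj-id} from the observation that $x-T^*S_1^{-1}Tx\in Z(T)$; all of those steps are sound (your direct verification of \ref{S1-cnd}$\Rightarrow$\ref{prj-cnd} is essentially the paper's). The genuine gap is in the step that closes the cycle, i.e.\ getting from \ref{prj-cnd} back to the remaining conditions, which is exactly where the substance of the theorem lies. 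Your ``cleanest'' route asserts that \eqref{proj-id} gives $R(T^*)\supset R(P)=F$, hence that $R(T^*)=F$ is closed. This is false: \eqref{proj-id} is only asserted for $x\in D(T)$, so it yields $P(D(T))\subset R(T^*)$, and $P(D(T))=F\cap D(T)$ (because $(1-P)x\in Z(T)\subset D(T)$), which is in general only a \emph{dense} subspace of $F$. You therefore recover nothing beyond the tautology $\overline{R(T^*)}=F$, and no closedness follows. What \eqref{proj-id} actually gives is $R(T)\subset D(S_1^{-1})=R(S_1)\subset R(TT^*)\subset R(T)$, i.e.\ $R(T)=R(TT^*)$; the nontrivial point---to which the paper devotes the longest portion of its proof, via the operator $(T|_{F})^{-1}$, its square root $Q$ with $D(Q)=D(Q^2)$, and the resolvent identity \eqref{Tres-eq}---is that $R(T)=R(TT^*)$ forces $R(T)$ to be closed. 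That implication, or some substitute for it, is missing from your argument, so the equivalence of \ref{prj-cnd} with the other conditions is not established.

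Two further points. You dispose of \ref{S-cnd} ``via the unitary equivalence'', but $U=TS^{-1/2}$ is only constructed in the affirmative case, so this is circular in the direction \ref{S-cnd}$\Rightarrow$\ref{S1-cnd}; the paper instead obtains \ref{S-cnd}$\iff$\ref{S1-cnd} by applying the already proved equivalence \ref{RT-cnd}$\iff$\ref{RT*-cnd} with the roles of $T$ and $T^*$ interchanged. Finally, in computing $U^*S_1U$ you invoke $R(S^{-1/2})\subset D(S)$, which is backwards: $R(S^{-1/2})=D(S^{1/2})\supset D(S)$, so the composition $S^2S^{-1/2}$ is not defined on all of $F$. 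The identity \eqref{iseq-eq} has to be proved first on $D(S^2)$ and then extended by graph-norm approximation, together with the separate inclusion $D(U^*S_1U)\subset D(S)$, as the paper does.
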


\begin{proof} Note first that $F_1^\perp\subset Z(T^*)$ because of the assumption
on $F_1$, and that $D(T^*)$ is invariant under projection onto $F_1$ and
$F_1^\perp$; indeed, if $D(T^*)\ni y=f+f^\perp$ where $f\in F_1$ and $f^\perp\in
F_1^\perp$, then $f^\perp\in Z(T^*)$ and $f\in D(T^*)$. Moreover, $S_1$ is
densely defined in $F_1$: if $D(TT^*)\ni y_k\to y\in F_1$, one may split
$y_k=f_k+f^\perp_k$, where $f_k\in F_1\cap D(TT^*)=D(S_1)$ while
$\nrm{y-f_k}{}\le\nrm{y-y_k}{}\searrow 0$;  
the self-adjointness of $TT^*$ then carries over to $S_1$. 
Because the roles of $T$ and $T^*$ may be interchanged, also $S$ is
self-adjoint. 

\eqref{RT-cnd}$\iff$\eqref{lb-cnd} is proved in
\cite[Lem.~4.1.1]{Hdbar}. When \eqref{lb-cnd} holds, 
then $Z(T^*)\ni z=f+f^\perp$ (with $f^{(\perp)}$ as above) implies $T^*f=0$,
hence $f=0$, so $Z(T^*)=F_1^\perp$; hence $R(T^*)$ equals
$R(T^*|_{F_1\cap D(T^*)})$, and the latter is closed by \eqref{lb-cnd} so
\eqref{RT*-cnd} is obtained. To deduce \eqref{lb-cnd} from
\eqref{RT*-cnd}, it suffices to consider $T^*\colon 
F_1\cap D(T^*)\to H$ as a bounded operator in the graph norm and 
apply the Open Mapping Theorem.
Because \eqref{RT-cnd} and \eqref{RT*-cnd} are equivalent, so would
\eqref{S-cnd} and \eqref{S1-cnd} be once \eqref{RT-cnd}$\iff$\eqref{S1-cnd}
is proved.

From \eqref{RT*-cnd} injectivity follows since $TT^*z=0$ yields $z\in
Z(T^*)=F_1^\perp$; clearly $R(S_1)\subset R(T)$, but any $y\in R(T)$ equals $Tx$
for some $x\in D(T)\ominus Z(T)\subset Z(T)^\perp=R(T^*)$; thus $R(S_1)$
equals $R(T)$, hence is closed. This shows \eqref{S1-cnd}.

When \eqref{S1-cnd} holds, $S_1^{-1}$ is
defined on $F_1$ (because $R(S_1)=Z(S_1)=\{0\}$), and $S_1^{-1}\in \B(F_1)$ since 
$S_1$ is closed. $S_1^{-1}$ maps into $D(TT^*)$, so
for every $x\in D(T)$ it is obvious that $x\in D(T^*S_1^{-1}T)$ and that
\begin{equation}
  x-T^*S_1^{-1}Tx\in Z(T);
\end{equation}
since $1-P$ is the projection onto $Z(T)$, this entails \eqref{proj-id},
for
\begin{equation}
  0=P(x-T^*S_1^{-1}Tx)=Px-T^*S_1^{-1}Tx.
\end{equation}

Finally, \eqref{prj-cnd}$\implies$\eqref{RT-cnd}, for injectivity of
$S_1$ yields $F_1=\overline{R(T)}$, while \eqref{proj-id} shows that 
\begin{equation}
  R(T)\subset D(S_1^{-1})= R(S_1) \subset R(TT^*) \subset R(T);
\end{equation}
hence $R(T)=R(TT^*)$, and since this implies that $R(T)$ is
closed, \eqref{RT-cnd} is obtained.

\newcommand{\T}{\bar T}
However, for completeness' sake an elementary proof of the just mentioned
implication shall be given. When $R(T)=R(TT^*)$, then one can pass to a domain
consideration for the operator $\T=(T|_{F})^{-1}$ and use that
\begin{equation}
  \T=\T(I+\T^*\T)^{-1}+ \T\T^*\T(I+\T^*\T)^{-1}.
  \label{Tres-eq}
\end{equation}
Since $\T^*\T=(TT^*|_{F})^{-1}$, it is self-adjoint $\ge0$ in $F_1 $ with a square root
$Q=Q^*=(\T^*\T)^{1/2}\ge0$ fulfilling
\begin{equation}
  D(Q)= D(Q^2)= D(\T).
  \label{DmQ-eq}
\end{equation}
Indeed, $D(Q^2)=D((TT^*)^{-1})=R(T)=D(\T)$ so it remains to be shown that
$D(Q)=D(\T)$. But if $(x_k)$ is a sequence in $D(Q^2)=D(\T)$,
\begin{equation}
  \nrm{Q(x_k-x_m)}{}^2=\scal{Q^2(x_k-x_m)}{x_k-x_m}=\nrm{\T(x_k-x_m)}{}^2,
\end{equation}
whence the closures of this set in the graph topologies on $D(Q)$ and $D(\T)$
coincide.

Combining the boundedness of the resolvent with \eqref{DmQ-eq} it follows
that
\begin{align}
  (I+\T^*\T)^{-1}&\colon F_1\to D(\T^*\T)=D(Q^2)  \\
  Q(I+\T^*\T)^{-1}&\colon F_1\to D(Q)=D(Q^2)  \\
  Q^2(I+\T^*\T)^{-1}&\colon F_1\to D(Q)=D(\T),  
\end{align}
and it is seen from the first of these lines and \eqref{DmQ-eq} that
$\T(I+\T^*\T)^{-1}$ belongs to $\B(F_1,F)$; then the third line gives that
$\T\T^*\T(I+\T^*\T)^{-1}\in\B(F_1,F)$, and \eqref{Tres-eq} implies that
$R(T)=D(\T)=F_1$, which is closed. 

Given that \eqref{RT-cnd}--\eqref{prj-cnd} hold,  then \eqref{S-cnd} gives
both that $S^{\frac12}$ is injective and that $S^{-\frac12}\in\B(F)$,
because it is closed and everywhere defined, and similarly
$U:=TS^{-\frac12}\in\B(F,F_1)$.

For any $x\in D(S)$ it holds that $S^{-\frac12}x\in D(S)\subset D(T^*T)$, so
\begin{equation}
  \nrm{Ux}{}^2=\scal{S^{-\frac12}T^*TS^{-\frac12}x}{x}=\nrm{x}{}^2,
\end{equation}
This extends to
all $x\in F$, and $TS^{-\frac12}$ maps onto $F_1$ by \eqref{RT-cnd}, for the
fact that $D(S^{\frac12})=D(T|_F)$ (shown analogously to \eqref{DmQ-eq})
shows that $S^{-\frac12}$ maps onto $D(T)\ominus Z(T)$. Hence $U$ is unitary.

The identity \eqref{iseq-eq} holds on vectors $x\in D(S^2)$, for
 $S^2x=(T^*T)^2x=T^*S_1Tx$, so that
\begin{equation}
  (S^{-\frac12}T^*)S_1(TS^{-\frac12})x=
  S^{-\frac12}S^2S^{-\frac12}x=I_{D(S^{1/2})}S x= S x;
\end{equation}
note that $S^{-\frac12}T^*\subset U^*$.
Now any $x\in D(S)$ may be approximated in the graph norm
by $x_k\in D(S^2)$: then $Sx_k\to Sx$ while
\begin{equation}
  S_1 Ux_k=US x_k\to USx\quad\text{and}\quad Ux_k\to Ux;
\end{equation}
therefore $Ux\in D(S_1)$ with $S_1Ux=USx$, from where \eqref{iseq-eq}
follows for $x$ by application of $U^*$. Conversely, note that $U^*S_1U$
acts as  $U^*TS^{\frac12}$, hence has its domain contained in 
$D(TS^{\frac12})=D(S)$ (since $U^*\in \B(F_1,F)$); therefore $U^*S_1Ux$ can
only be defined when $Sx$ is so, and then we have already seen that $U^*S_1Ux=Sx$.

Finally, for $\lambda\in\sigma(S)$ there is $x_k\in D(S)$ with
$\nrm{x_k}{}=1$ and $(S-\lambda)x_k\to0$, and $y_k=Ux_k$ is also normalised
while
\begin{equation}
  (S_1-\lambda)y_k=U(U^*S_1U-\lambda)x_k\to0 \quad\text{for}\quad k\to\infty;
\end{equation}
moreover, $y_k\to0$ weakly if the $x_k$ do so, hence also
$\sigma_{\ess}(S)\subset \sigma_{\ess}(S_1)$, and the opposite inclusions
are equally easy.
\end{proof}

The requirement in \eqref{S1-cnd} is equivalent to $0$ belonging to the
resolvent set of $S_1$, and by the minimax principle this may, of course, be
replaced by strict positivity of $S_1$.    
Applied to  the complex \eqref{dmu-cmplx} this yields, because of \eqref{AAinv-eq}:

\begin{cor}
  \label{iseq-cor}
The conclusions of Theorem~\ref{kk1-thm} are valid.
\end{cor}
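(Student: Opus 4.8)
The plan is to read Theorem~\ref{kk1-thm} off from Theorem~\ref{CR-thm}, exactly as the remark preceding this corollary indicates. Fix the index $k$ and apply Theorem~\ref{CR-thm} with $T=d_k$, viewed as a densely defined closed operator $L^2(\mu,\wedge^k\Cn)\to L^2(\mu,\wedge^{k+1}\Cn)$ (this was arranged in \eqref{dmu-cmplx}: $d_k$ carries its maximal domain, which contains $C^\infty_0$, and is graph-closed). I would take $F=\overline{R(d_k^*)}$, which equals $Z(d_k)^\perp=L^2(\mu,\wedge^k\Cn)\ominus Z(d_k)$, and $F_1=\overline{R(d_k)}=X_{k+1}$; since $d\circ d\equiv0$ gives $R(d_k)\subset X_{k+1}$, the hypotheses \eqref{FF1-eq} hold, and with this $F_1$ the side condition $F_1^\perp=Z(d_k^*)$ holds automatically. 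Then the operators \eqref{SS1-eq} are $S=(d_k^*d_k)|_{Z(d_k)^\perp}$ and $S_1=(d_kd_k^*)|_{X_{k+1}}$, while $P$ in Theorem~\ref{CR-thm} is the orthogonal projection onto $Z(d_k)^\perp$.

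The first genuine step is to match $S$ and $S_1$ with the Hodge Laplacians of \eqref{Ak-def}. From $d\circ d\equiv0$ one has $X_k=\overline{R(d_{k-1})}\subset Z(d_k)$, hence $Z(d_k)^\perp\subset X_k^\perp=Z(d_{k-1}^*)$; there the summand $d_{k-1}d_{k-1}^*$ of $A_k$ vanishes, and a short domain check ($u\in Z(d_k)^\perp\Rightarrow d_{k-1}^*u=0\in D(d_{k-1})$, so $u\in D(d_{k-1}d_{k-1}^*)$) gives $D(A_k)\cap Z(d_k)^\perp=D(d_k^*d_k)\cap Z(d_k)^\perp$ and $A_k|_{Z(d_k)^\perp}=S$. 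Symmetrically $X_{k+1}\subset Z(d_{k+1})$, so on $X_{k+1}$ the summand $d_{k+1}^*d_{k+1}$ of $A_{k+1}$ vanishes and $A_{k+1}|_{X_{k+1}}=(d_kd_k^*)|_{X_{k+1}}=S_1$; this is \eqref{AAinv-eq} for general $k$. Since moreover $F^\perp=Z(d_k)=Z(d_k^*d_k)$, one gets $A_k=S\oplus 0$ under $L^2(\mu,\wedge^k\Cn)=Z(d_k)^\perp\oplus Z(d_k)$, hence $R(A_k)=R(S)$.

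Next I would simply translate the equivalent conditions of Theorem~\ref{CR-thm}. Condition \ref{RT-cnd} becomes closedness of $R(d_k)$ (as $F_1=\overline{R(d_k)}$); condition \ref{RT*-cnd} becomes closedness of $R(d_k^*)$ (its side clause being automatic); condition \ref{S-cnd} becomes closedness of $R(S)=R(A_k)$ ($S$ being automatically injective on $F$); condition \ref{S1-cnd} says $S_1$ is injective with closed range, which by the minimax remark right after Theorem~\ref{CR-thm} ($0\in\rho(S_1)$) is strict positivity of $A_{k+1}|_{X_{k+1}}$; and condition \ref{prj-cnd} reads $Px=d_k^*\bigl(A_{k+1}|_{X_{k+1}}\bigr)^{-1}d_kx$ for $x\in D(d_k)$. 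In the affirmative case $R(A_k)=R(S)=R(d_k^*)$ — the equality $R(T^*T)=R(T^*)$ being part of the proof of Theorem~\ref{CR-thm} (via $R(S_1)=R(T)=R(TT^*)$ and interchanging $T\leftrightarrow T^*$) — so $R(A_k)$ is closed and equals $\overline{R(d_k^*)}=F$; therefore the projection $P_k$ onto $R(A_k)$ coincides with $P$, and condition \ref{prj-cnd} is exactly the stated formula for $P_k$. This yields the equivalences and the projection identity of Theorem~\ref{kk1-thm}.

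The remaining ("affirmative case") assertions are the closing paragraphs of Theorem~\ref{CR-thm} transcribed through these identifications: $A_k|_{Z(d_k)^\perp}=d_k^*d_k|_{Z(d_k)^\perp}$ has closed range $R(d_k^*)$, and $A_{k+1}|_{X_{k+1}}=d_kd_k^*$; the isometry \eqref{iseq-eq} $S=U^*S_1U$ with $U=d_kS^{-1/2}\in\B(F,F_1)$ unitary is precisely \eqref{AAuni-eq} (with $U$ the stated map from $L^2(\mu,\wedge^k\Cn)\ominus Z(d_k)$ onto $X_{k+1}$), and $\sigma(S)=\sigma(S_1)$, $\sigma_{\ess}(S)=\sigma_{\ess}(S_1)$ give the spectral identities. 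The only non-mechanical part of all this is the second step: pinning down the \emph{domains} in the identifications $A_k|_{Z(d_k)^\perp}=T^*T|_F$ and $A_{k+1}|_{X_{k+1}}=TT^*|_{F_1}$, and checking $P_k=P$ in the affirmative case; both rest only on $d\circ d\equiv0$ and on $R(T^*T)=R(T^*)$ when $R(T)$ is closed. Everything else is renaming.
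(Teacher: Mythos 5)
Your route is the paper's own: the paper offers no argument for this corollary beyond the sentence preceding it (``Applied to the complex \eqref{dmu-cmplx} this yields, because of \eqref{AAinv-eq}''), and your identifications $S=A_k|_{Z(d_k)^\perp}=(d^*_kd_k)|_{Z(d_k)^\perp}$ and $S_1=A_{k+1}|_{X_{k+1}}=(d_kd^*_k)|_{X_{k+1}}$, together with the translation of conditions \eqref{RT-cnd}--\eqref{prj-cnd} of Theorem~\ref{CR-thm} and of its ``affirmative case'' clauses, are exactly the intended details. The domain checks you supply for these identifications are correct.

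There is, however, one false step: the claim that $A_k=S\oplus 0$ relative to $L^2(\mu,\wedge^k\Cn)=Z(d_k)^\perp\oplus Z(d_k)$, from which you conclude $R(A_k)=R(S)$ and $P_k=P$. For $k\ge1$ the summand $d_{k-1}d^*_{k-1}$ of $A_k$ does not vanish on $Z(d_k)$: it acts injectively on $X_k=\overline{R(d_{k-1})}\subset Z(d_k)$ (Lemma~\ref{X-lem}), so $A_k$ in fact splits into three pieces along $L^2=X_k\oplus\bigl(Z(d_k)\cap Z(d^*_{k-1})\bigr)\oplus Z(d_k)^\perp$, giving $R(A_k)=R(d_{k-1}d^*_{k-1}|_{X_k})\oplus R(S)$ and $\overline{R(A_k)}=X_k\oplus Z(d_k)^\perp\neq Z(d_k)^\perp$. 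Hence closedness of $R(A_k)$ is not the same as closedness of $R(S)$ alone (it also forces $R(d_{k-1})$ closed), and the projection onto $\overline{R(A_k)}$ is not $P$; indeed $d^*_k(A_{k+1}|_{X_{k+1}})^{-1}d_k$ annihilates $X_k$ and so cannot be the projection onto a subspace containing $X_k$. The repair is to read the theorem's $R(A_k)$ and $P_k$ as $R(A_k|_{Z(d_k)^\perp})$ and the projection onto that range, exactly as in condition \eqref{R0-cnd} of Theorem~\ref{ran-thm}; with that reading your argument closes without any ``$\oplus\,0$'' claim. For $k=0$, which is the only case needed for \eqref{Brascamp--Lieb-id}, your claim is literally true since $d_{-1}\equiv0$, so nothing is lost there.
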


For $k=0$ this almost gives the main part of
Theorem~\ref{Brascamp--Lieb-thm}, for clearly $\A=d^*_0d_0$ has
kernel $\C$ so that $P_0$ must equal $u-\int u\,d\mu$; whence
\eqref{P-eq}. 

The goal is not yet attained, however. First of all we shall 
 in Section~\ref{Ak-sect} below give a definition of $\A$ and $\AA$ using
sesqui-linear forms, and then verify in
\eqref{Ak-id'} and \eqref{Ak-eq} below that this
coincides with the $A_k$ in \eqref{Ak-def} above and gives a meaning to
\eqref{AAA-eq}.
Secondly, the formula for $P_k$ in Theorem~\ref{kk1-thm} is obtained for
$H^1(\mu,\wedge^k\Cn)$ only, whereas for 
Theorem~\ref{Brascamp--Lieb-thm} it is necessary 
to make sense of the right hand side of \eqref{P-eq} when the $u$ there is
arbitrary in $L^2(\mu)$. This is based on the Lax--Milgram definition in
Section~\ref{Ak-sect}, and is carried out in Section~\ref{d-sect}.

\begin{rem}
  \label{spec-rem}
Corollary~\ref{iseq-cor} and the remark following it entails 
\begin{equation}
  \sigma(\A)=\{0\}\cup \sigma(\AA|_{X_1}).
\end{equation}
Earlier Sj{\"o}strand \cite{Sj96} obtained that the gap between
the first two 
eigenvalues of $\A$ is larger than $\AA$'s first eigenvalue. This also
follows immediately from the above formula when the assumptions on $\mu$, or
$\Phi$, imply that the spectra are discrete, as in \cite{Sj96}. 
\end{rem}


\section{The $d$-complex in Weighted $L^2$ Spaces}   \label{Ak-sect}

From this section and onwards, the probability measure is assumed to have
the form $d\mu=e^{-\Phi(x)}\,dx$ in order to derive more explicit conditions.

\subsection{The Operators}
  \label{min-ssect}
To avoid cumbersome justification of integration by parts, 
it is worthwhile to define $\A$, $\AA$, \dots\  variationally, i.e.\ 
by Lax--Milgram's lemma (in contrast with \cite{Hel93,Sj96,Hel96}
were the Friedrichs extension was used).  This is based on the weighted
space $L^2(\mu):= L^2(\Rn,\mu,\C)$ with measure $\mu:= e^{-\Phi(x)}dx$ and scalar product
\begin{equation}
  \scal{u_1}{u_2}_{\mu}:= \int_{\Rn} u_1(x)\overline{u_2(x)}\,d\mu(x)
\end{equation}
and its analogue for $k$-forms $L^2(\Rn, \mu,\wedge^k\Cn)$, where
$\scal{v}{w}_{\mu}$ is defined instead by integration of $\sum{}'
v_J(x)\overline{w_J(x)}$, with prime denoting summation over increasing $k$-tuples
$J$, see Section~\ref{notation-sect}. 

For each $k$ the operator $\Ak$ is defined in $H_k:=L^2(\Rn,\mu,\wedge^k\Cn)$ 
by means of the triple $(H_k, V_k, a_k)$, where
\begin{align}
  a_k(v,w)&=\scal{d_kv}{d_kw}_{\mu} +\scal{d^*_{k-1}v}{d^*_{k-1}w}_{\mu}
  \qquad (d^*_{-1}\equiv 0)
  \label{ak-eq} \\
  V_k&= \bigl\{\,v\in L^2(\mu,\wedge^k\Cn) \bigm| d_kv\in H_{k+1},\quad
         d^*_{k-1}v \in H_{k-1} \,\bigr\},
  \label{V-eq}
\end{align}
with $a_k(\cdot,\cdot)$ defined on $V_k$ and $d_k$ equal to the
exterior differential from $D(d_k)\subset H_k$ to $H_{k+1}$ (with
derivatives calculated in $\cal D'(\Rn)$), while $d^*_k$ denotes the Hilbert
space adjoint with
respect to $\scal{\cdot}{\cdot}_{\mu}$, see \eqref{d*-id} below for the
expression. Recall that in this way $\Ak$ is defined as follows:
\begin{align}
  D(\Ak)&= \bigl\{\,u\in V_k \bigm| \exists f\in H_k: 
    a_k(u,w)=\scal{f}{w}_{\mu}\ \forall w\in V_k\,\bigr\}
  \label{Alm-eq} 
 \\
  \Ak u&= f \quad\text{for $u\in D(\Ak)$}.
  \label{Alm'-eq}
\end{align}

To substantiate this, note that $V_k$ in \eqref{V-eq}, in view of
$d_k^*$'s closedness and differentiation's continuity in $\cal D'(\Rn)$,
is a Hilbert space with   
\begin{equation}
  \begin{split}
  \scal{v}{w}_{V_k}&:= \scal{v}{w}_{H_k}+\scal{d_kv}{d_kw}_{H_{k+1}}
   +\scal{d^*_kv}{d^*_kw}_{H_{k-1}}
\\
  &\phantom{:}
  =\scal{v}{w}_{\mu}+a_k(v,w);
  \end{split}
  \label{Vnrm-eq}
\end{equation}
the sesqui-linear form is clearly bounded
\begin{equation}
  |a_k(v,w)|\le \norm{v}{V_k}\norm{w}{V_k}\quad \forall v,w\in V_k.
  \label{ak-bd}
\end{equation}
Since $a_k(\cdot,\cdot)$ is symmetric and
\eqref{Vnrm-eq} yields
\begin{equation}
  \Re a_k(v,v)\ge \norm{v}{V_k}^2-\norm{v}{H_k}^2 \ge 0,
\end{equation}
$\Ak$ is well defined, self-adjoint with spectrum in
$[0,\infty[\,$ and with domain $D(\Ak)$ dense in $V_k$. 
When obtaining this from Lemma~\ref{Lax--Milgram-lem}, it is
important to have density of $V_k\subset H_k$, but more than that holds in
the present set-up:

\begin{lem}
  \label{dns-lem} The space $C^\infty_0(\Rn,\wedge^k\Cn)$ is dense in each of
the domains $D(d_k)$, $D(d^*_{k-1})$ and $D(d_k)\cap D(d^*_{k-1})=V_k$ with
respect to their graph topologies (for $V_k$ this is given by \eqref{Vnrm-eq}).
\end{lem}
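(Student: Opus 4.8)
The plan is to approximate in two stages: first truncate a given form to one with compact support, then smooth it by Friedrichs mollification, using throughout that on bounded sets the weight $e^{-\Phi}$ is bounded above and below (as $\Phi\in C^2\subset C^0$), so that the weighted norms in \eqref{Vnrm-eq} are, on forms supported in a fixed compact set, equivalent to the corresponding Lebesgue norms. It suffices to carry this out for $V_k=D(d_k)\cap D(d^*_{k-1})$, since the argument below controls the two summands of $a_k$ separately and hence specialises at once to $D(d_k)$ and to $D(d^*_{k-1})$.

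\emph{Truncation.} Fix $\chi\in C^\infty_0(\Rn)$ with $\chi=1$ near the origin and put $\chi_R(x)=\chi(x/R)$. For $v\in V_k$ one has $\chi_R v\to v$ in $L^2(\mu,\wedge^k\Cn)$ by dominated convergence, and
\begin{equation}
  d_k(\chi_R v)=\chi_R\,d_k v+d\chi_R\wedge v,\qquad
  d^*_{k-1}(\chi_R v)=\chi_R\,d^*_{k-1}v-\iota_{\grad\chi_R}v,
\end{equation}
where the second identity rests on $d^*_{k-1}$ being a first-order differential operator with \emph{constant-coefficient} principal part (see \eqref{d*-id}), so that commuting with the scalar multiplier $\chi_R$ produces only the algebraic contraction against $\grad\chi_R$ and no term involving $\Phi'$. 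Since $|\grad\chi_R|\le C/R$ on a set contained in $\{\,R\le|x|\le C'R\,\}$, both $d\chi_R\wedge v$ and $\iota_{\grad\chi_R}v$ are dominated by $C|v|$ and tend to $0$ pointwise, hence to $0$ in the respective weighted $L^2$ spaces. Thus $\chi_R v\to v$ in the graph norm \eqref{Vnrm-eq}, and we may assume $\supp v$ is contained in a fixed compact set $K$.

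\emph{Mollification.} Write $d^*_{k-1}=\delta_{k-1}+c(x)$, with $\delta_{k-1}$ the Euclidean codifferential (constant coefficients) and $c(x)$ the zeroth-order contraction against $-\Phi'(x)$; since $\Phi'$ is continuous, $c$ is bounded on $K$, and from $\delta_{k-1}v=d^*_{k-1}v-c(x)v$ it follows that $\delta_{k-1}v\in L^2$. Let $\rho_\varepsilon$ be a standard mollifier and $v_\varepsilon=v*\rho_\varepsilon\in C^\infty_0(\Rn,\wedge^k\Cn)$, with supports in a fixed compact set for $\varepsilon$ small. Then $v_\varepsilon\to v$, $d_k v_\varepsilon=(d_k v)*\rho_\varepsilon\to d_k v$ and $\delta_{k-1}v_\varepsilon=(\delta_{k-1}v)*\rho_\varepsilon\to\delta_{k-1}v$ in $L^2$ (convolution commutes with the constant-coefficient operators $d_k$ and $\delta_{k-1}$), while $c(x)v_\varepsilon\to c(x)v$ in $L^2$ because $v_\varepsilon\to v$ in $L^2$ and $c$ is bounded on the relevant compact set; hence $d^*_{k-1}v_\varepsilon\to d^*_{k-1}v$ in $L^2$ as well. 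By the equivalence of weighted and Lebesgue norms on that fixed compact set, all these convergences hold in the norms of \eqref{Vnrm-eq}, so $v_\varepsilon\to v$ in $V_k$, which proves the lemma.

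\emph{Main obstacle.} The point requiring care is the behaviour of the first-order operator $d^*_{k-1}$: that its commutator with a cutoff contributes only an algebraic $O(1/R)$ term supported in an annulus, with \emph{no} $\Phi'$, and that its lower-order $\Phi'$-part survives mollification by virtue of the mere continuity of $\Phi'$ guaranteed by $\Phi\in C^2$ — so no growth hypothesis on $\Phi$ is needed here. The remaining ingredients — dominated convergence for the truncation errors, the commutation of convolution with $d_k$ and $\delta_{k-1}$, and the local two-sided bounds on $e^{-\Phi}$ — are routine.
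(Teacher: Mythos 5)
Your argument is correct and is precisely the ``cut-off and convolution procedure'' that the paper invokes by reference to H{\"o}rmander's Lemma 4.1.3, including the observation (made in the paper as the redundancy of H{\"o}rmander's inequality (4.1.6)) that the dilated cut-offs $\chi(x/R)$ have uniformly small gradients so that the commutators with $d_k$ and $d^*_{k-1}$ contain no $\Phi'$-terms and vanish in the limit. The subsequent mollification on a fixed compact set, where $e^{-\Phi}$ is bounded above and below and the zeroth-order $\Phi'$-part of $d^*_{k-1}$ is merely continuous, matches the intended proof, so nothing is missing.
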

This may be proved by a cut-off and convolution procedure, as in 
\cite[Lem.~4.1.3]{Hdbar} mutatis mutandem. (Specifically one should let
$\Omega=\Rn$, $S=d_k$ and $T^*=d^*_{k-1}$ there, while $\varphi_1$,
$\varphi_2$, $\varphi_3$ should equal our $\Phi$; the required inequality
(4.1.6) there is redundant for we may take
$\eta_\nu(x)=\eta(\nu^{-1}x)$ on~$\Rn$.)

As a second application of Lemma~\ref{dns-lem} we have a
characterisation of $\Ak$:

\begin{lem}
  \label{fx-lem}
$\Ak$ equals the Friedrichs extension from $C^\infty_0(\Rn,\wedge^k\Cn)$.
\end{lem}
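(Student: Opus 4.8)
The plan is to realise the Friedrichs extension of the formal Hodge Laplacian on $C^\infty_0(\Rn,\wedge^k\Cn)$ as the self-adjoint operator associated (via Lemma~\ref{Lax--Milgram-lem}) with a closed quadratic form, and then to identify that form with the triple $(H_k,V_k,a_k)$ by invoking the density statement of Lemma~\ref{dns-lem}. First I would pin down the relevant symmetric operator: on $C^\infty_0(\Rn,\wedge^k\Cn)$ it is $A_k^\circ:=d^*_kd_k+d_{k-1}d^*_{k-1}$, where each $d^*_j$ denotes the classical formal adjoint of the exterior derivative with respect to $\scal{\cdot}{\cdot}_\mu$ (given by \eqref{d*-id}). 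Since the forms involved are smooth with compact support and $e^{-\Phi}$ is continuously differentiable ($\Phi\in C^2$ by \eqref{phi-cnd}), integration by parts is unproblematic and gives, for all $u,w\in C^\infty_0(\Rn,\wedge^k\Cn)$,
\begin{equation}
  \scal{A_k^\circ u}{w}_\mu=\scal{d_ku}{d_kw}_\mu+\scal{d^*_{k-1}u}{d^*_{k-1}w}_\mu=a_k(u,w).
  \label{Aform-eq}
\end{equation}
The same computation shows that on $C^\infty_0(\Rn,\wedge^k\Cn)$ the formal adjoint coincides with the Hilbert space adjoint $d^*_{k-1}$, so that $C^\infty_0(\Rn,\wedge^k\Cn)\subset V_k$, that $A_k^\circ\ge0$, and that the quadratic form of $A_k^\circ$ is the restriction of $a_k$ to $C^\infty_0(\Rn,\wedge^k\Cn)$.

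Next I would recall the standard construction of the Friedrichs extension: one completes $C^\infty_0(\Rn,\wedge^k\Cn)$ in the form norm $v\mapsto\bigl(a_k(v,v)+\nrm{v}{H_k}^2\bigr)^{1/2}$ to a Hilbert space $\widetilde V_k$ embedded densely and continuously in $H_k$, extends $a_k$ by continuity to $\widetilde V_k$, and takes the operator furnished by Lemma~\ref{Lax--Milgram-lem} for the coercive triple $(H_k,\widetilde V_k,a_k)$; by symmetry of $a_k$ this operator is self-adjoint, and by construction it is the Friedrichs extension from $C^\infty_0(\Rn,\wedge^k\Cn)$. But by \eqref{Vnrm-eq} the form norm above is exactly the graph norm $\nrm{\cdot}{V_k}$, so $\widetilde V_k$ is precisely the closure of $C^\infty_0(\Rn,\wedge^k\Cn)$ inside $V_k$ for that topology, and the continuous extension of $a_k$ is simply the restriction of $a_k$ to $\widetilde V_k$. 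Lemma~\ref{dns-lem} now gives $\widetilde V_k=V_k$; hence the extended form equals $a_k$ on all of $V_k$, and therefore the Friedrichs extension is the operator associated with the triple $(H_k,V_k,a_k)$, which is $\Ak$ by \eqref{Alm-eq}--\eqref{Alm'-eq}.

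The real content sits in Lemma~\ref{dns-lem} (already granted) and in the integration-by-parts identity \eqref{Aform-eq}, which both identifies the quadratic form of the formal Hodge Laplacian with $a_k$ and makes the Friedrichs form norm coincide with $\nrm{\cdot}{V_k}$; once these are in place the equality of the two operators is automatic. I expect the only point needing care to be the correct appearance of the weight $e^{-\Phi}$ in \eqref{Aform-eq}, i.e.\ that $d^*_{k-1}$ on smooth compactly supported forms is genuinely given by the classical formula, so that there is no gap between ``the formal Hodge Laplacian'' and ``the form $a_k$''.
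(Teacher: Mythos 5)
Your proof is correct and follows essentially the same route as the paper: both rest on Lemma~\ref{dns-lem} together with the observation that the Friedrichs form norm on $C^\infty_0(\Rn,\wedge^k\Cn)$ is (equivalent to) the $V_k$-graph norm of \eqref{Vnrm-eq}. The only cosmetic difference is the final step: you identify the Friedrichs extension and $\Ak$ as the Lax--Milgram operators of one and the same triple $(H_k,V_k,a_k)$, whereas the paper proves only the inclusion $D(\Ak)\subset V_k\subset V_S$ of the domain into the Friedrichs form domain $V_S$ and then invokes the uniqueness of the Friedrichs extension among self-adjoint extensions with domain contained in $V_S$.
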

\begin{proof}
Let $S$ denote $\Ak$'s restriction to $C^\infty_0$ and let $T$ be
the Friedrichs extension (using that $\Ak\ge0$). The completion of
$C^\infty_0$ with respect to 
\begin{equation}
  \norm{u}{V_S}^2:= \scal{Su+(1-m(S))u}{u}_{\mu}
\end{equation}
is a Hilbert space $V_S\subset L^2(\mu,\wedge^k\Cn)$. If
$\varphi\in C^\infty_0$ and $c=1+m(S)_-$,
\begin{equation}
  \norm{\varphi}{V_S}^2= a_k(\varphi,\varphi) +
     (1-m(S))\nrm{\varphi}{\mu}^2 \le c \norm{\varphi}{V_k}^2
\end{equation}
so by Lemma~\ref{dns-lem} it is found that $D(\Ak)\subset V_k\subset V_S$.
Because $T$ is the only self-adjoint extension of $S$ with domain
contained in $V_S$, this yields $\Ak=T$.
\end{proof}

As a part of the above-suggested proof of Lemma~\ref{dns-lem} it is found
that $d^*_{k-1}$ has the following action on forms $f=\sum'_{|J|=k}
f_J\,dz^J$, cf.~Section~\ref{notation-sect}, when derivatives are
calculated in $\cal D'$: 
\begin{equation}
  d^*_{k-1}f= \sum_{L}^{}{}'(\sum_{j=1}^n
(\Phi'_j-\partial_j)f_{jL})\,dz^L. 
  \label{d*-id}
\end{equation}
For later reference the argument is recalled: if $f\in H_k$ and $w\in
C^1_0(\Rn,\wedge^{k-1}\Cn)$,
\begin{equation}
  \begin{split}
    \scal{f}{dw}_{\mu}&= \int\sum_{J}^{}{}'f_J \sum_{j=1}^n\sum_{L}^{}{}'
       \partial_j\overline{w_L}\cdot \varepsilon^{jL}_{J} e^{-\Phi}\,dx
\\
   &=\sum_{L}^{}{}'\dual{\sum_{j,J}-\partial_j(e^{-\Phi}f_J)
                            \varepsilon^{jL}_{J}}{\overline{w_L}}
\\
   &=\sum_{L}^{}{}'\dual{\sum_{j}(\Phi'_j-\partial_j)f_{jL}}
                            {e^{-\Phi}\overline{w_L}}.
  \end{split}
  \label{d*-eq}
\end{equation}
This should be justified since $w$ is not $C^{\infty}$, but by reading
$\dual{\cdot}{\cdot}$ as the duality of $\cal D'{}^1$ and $C^1_0$, the
$f_J$ may be approximated from $C^\infty_0(\Rn)$ so that
Leibniz' rule may be applied together with the continuity of
$\partial_j\colon \cal D'{}^0\to\cal D'{}^1$ in the last~line.

Taking $f\in D(d^*_{k-1})$ and $w$ such that $w_L=\delta_{LK}\varphi$ for
arbitrary $K$ and $\varphi\in C^\infty_0(\Rn)$, the left hand side equals
$\dual{(d^*_{k-1}f)_{K}}{e^{-\Phi}\overline{\varphi}}$ so that \eqref{d*-id} results.

Using Lemma~\ref{dns-lem} once more, we now observe (as a first step in the
verification of \eqref{Ak-def}) two different identifications for the domain
of $d^*_{k-1}$:

\begin{lem}
  \label{maxmin-lem}
For the exterior differential $d$ going from
$H_{k-1}$ to $H_k$, whereby $H_j$ stands for the space
$L^2(\Rn,\mu,\wedge^j\Cn)$, and its formal
adjoint $d^*$ given by \eqref{d*-id}, the minimal and maximal realisation coincide, 
i.e.\ $d_{\op{min}}=d_{\op{max}}$ and $(d^*)_{\op{min}}=(d^*)_{\op{max}}$.
\end{lem}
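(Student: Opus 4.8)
The plan is to read the assertion as two statements: that the exterior derivative $d\colon H_{k-1}\to H_k$ satisfies $d_{\op{min}}=\overline{d|_{C^\infty_0(\Rn,\wedge^{k-1}\Cn)}}=d_{\op{max}}$, where $d_{\op{max}}$ carries the maximal domain $\set{u\in H_{k-1}}{du\in H_k}$ (derivatives in $\cal D'$), and that the formal adjoint $d^*\colon H_k\to H_{k-1}$ of \eqref{d*-id} satisfies $(d^*)_{\op{min}}=(d^*)_{\op{max}}$. Since $d_{\op{min}}\subseteq d_{\op{max}}$ and $(d^*)_{\op{min}}\subseteq(d^*)_{\op{max}}$ are automatic, in each case it suffices to see that $C^\infty_0$ is dense in the maximal domain in the graph norm. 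The first equality is then nothing but Lemma~\ref{dns-lem} applied with $k$ replaced by $k-1$: $C^\infty_0(\Rn,\wedge^{k-1}\Cn)$ is dense in $D(d_{k-1})=D(d_{\op{max}})$ for the graph topology, so $\overline{d|_{C^\infty_0}}=d_{\op{max}}$; the same density statement for index $k$ gives the analogous fact for $d\colon H_k\to H_{k+1}$.

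The substantive step is to identify the Hilbert space adjoint $d^*_{k-1}$ of $d_{k-1}$ with $(d^*)_{\op{max}}$. Since, by the previous step, $d_{k-1}=d_{\op{max}}=\overline{d|_{C^\infty_0}}$, its adjoint equals $(d|_{C^\infty_0})^*$: a form $f\in H_k$ lies in $D(d^*_{k-1})$ precisely when there is some $g\in H_{k-1}$ with $\scal{d\varphi}{f}_{\mu}=\scal{\varphi}{g}_{\mu}$ for all $\varphi\in C^\infty_0(\Rn,\wedge^{k-1}\Cn)$, and then $d^*_{k-1}f=g$. The integration by parts already recorded in \eqref{d*-eq} (valid for such $\varphi$) rewrites the left-hand side as $\sum_L'\dual{(d^*f)_L}{e^{-\Phi}\overline{\varphi_L}}$, where $d^*f$ is the distribution produced from $f$ by the formula \eqref{d*-id} — this is well defined in $\cal D'(\Rn)$ for every $f\in H_k$ because $\Phi\in C^2$. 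Hence such a $g$ exists if and only if the distribution $d^*f$ is represented by an element of $H_{k-1}$, i.e.\ if and only if $f\in D((d^*)_{\op{max}})$, in which case $g=d^*f$. This shows $d^*_{k-1}=(d^*)_{\op{max}}$.

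Finally I would invoke Lemma~\ref{dns-lem} once more, now in the form that $C^\infty_0(\Rn,\wedge^k\Cn)$ is dense in $D(d^*_{k-1})$ in the graph norm — and on such forms $d^*|_{C^\infty_0}$ agrees with $d^*_{k-1}$ since no boundary terms arise in \eqref{d*-eq} — to conclude $\overline{d^*|_{C^\infty_0}}=d^*_{k-1}$; combined with the identification of the previous paragraph this yields $(d^*)_{\op{min}}=d^*_{k-1}=(d^*)_{\op{max}}$. The only point requiring care is the bookkeeping with the bracket $\dual{\cdot}{\cdot}$ in the middle step, which alternates between the $\cal D'{}^1$–$C^1_0$ duality in general and the $L^2(\mu)$ scalar product once $d^*f\in H_{k-1}$, together with the coefficient-by-coefficient approximation used to apply Leibniz' rule; but all of this was already dealt with in the derivation of \eqref{d*-id}, so the lemma is in essence a repackaging of Lemma~\ref{dns-lem} with the adjoint computation above, and no new estimates are needed.
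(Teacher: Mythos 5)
Your proof is correct and takes essentially the same route as the paper, which disposes of the lemma in one line as a reformulation of Lemma~\ref{dns-lem} ``plus duality'': your three steps — $d_{\op{min}}=d_{\op{max}}$ from the density of $C^\infty_0$ in $D(d_{k-1})$, the identification $d^*_{k-1}=(d^*)_{\op{max}}$ via the integration by parts \eqref{d*-eq}, and $(d^*)_{\op{min}}=d^*_{k-1}$ from the density of $C^\infty_0$ in $D(d^*_{k-1})$ — are exactly the details behind that duality remark. No gaps.
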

Indeed, for $d$ itself this is just a reformulation of Lemma~\ref{dns-lem},
so by duality $(d^*)_{\op{max}}=(d^*)_{\op{min}}=d^*_{k-1}$. For this
reason the true adjoint $d_k^*$ may be abbreviated as $d^*$.

\bigskip

When it is understood that $d$ and $d^*$ act in the distribution sense (as
opposed to their maximal realisations in $H_k$, viz.\ $d_k$ and $d^*_{k-1}$,
which have subscripts), it
is now easy to infer that $\Ak$ acts according to the formula
\begin{equation}
  \Ak= d^*d+ dd^*.
  \label{Ak-id'}
\end{equation}
For this it is advantageous to test against $w=e^{\Phi}\varphi$ for
$\varphi\in C^\infty_0(\Rn,\wedge^k\Cn)$:
\begin{equation}
  a_k(v,w)=\scal{\Ak v}{w}_{\mu}=\scal{\Ak v}{\varphi}
  \label{act1-eq}
\end{equation} 
if $v\in D(\Ak)$, while taking $f=d_kv\in H_{k+1}$ in \eqref{d*-eq} yields
\begin{equation}
  \scal{d_kv}{d_kw}_{\mu}=\dual{d^*d v}{\overline{\varphi}}.
\end{equation}
Strictly speaking the right hand side should be read as a sum (over $|J|=k$)
of distributions acting on $\varphi_J$, cf.~\eqref{d*-eq}, for the dual of
$C^\infty_0(\Rn,\wedge^k\C^n)$ is not considered here.
Using the compact support of $w$ it follows analogously to
\eqref{d*-eq} that, since $d^*v$ is in $\cal D'{}^0$ (or rather has
coefficients there),
\begin{equation}
  \dual{dd^*v}{\overline{\varphi}}=
  \sum_{J}^{}{}'\dual{\sum_{j,L}\partial_j(d^*v)_L\varepsilon^{jL}_{J}}{
               e^{-\Phi}\overline{w}_J}= 
  \scal{d^*_{k-1}v}{d^*_{k-1}w}_\mu.
\end{equation}
From the definition of $a_k$ this shows \eqref{Ak-id'}.

Combining \eqref{Ak-id'} with \eqref{d*-id} a calculation now yields an
explicit formula for $\Ak$'s action. The details of this will be given
partly to verify the expressions for $\A$ and $\AA$ in the introduction, and
partly because such formulae may be of interest in their own right.

Since $dv=\sum'_K\sum_{m,M}\partial_mv_M\varepsilon^{mM}_K\,dz^K$,
where $|K|=|M|+1$,
\begin{equation}
  \begin{split}
  d^*dv&=\sum_J^{}{}'(\sum_j(\Phi'_j-\partial_j)
              \sum_{K}^{}{}'\varepsilon^{jJ}_{K}(dv)_K)\,dz^J
\\
  &=\sum_J^{}{}'\sum_{j,m,M}(\Phi'_j-\partial_j)\partial_mv_M
              \varepsilon^{mM}_{jJ}\,dz^J,
  \end{split}
  \label{Ak-eq'}
\end{equation}
while the other contribution becomes, with $|L|=|J|-1$,
\begin{equation}
  dd^*v= \sum_J^{}{}'(\sum_{j,l,L}\partial_j(\Phi'_l-\partial_l)
           v_{lL}\varepsilon^{jL}_{J})\,dz^J.
  \label{Ak-eq''}
\end{equation}
Taking $j=m$ in \eqref{Ak-eq'} for fixed $J$, only $j\notin J=M$ gives a
non-trivial term, viz.\ $(\Phi'_j-\partial_j)\partial_jv_J$; and for
$j\neq m$ there are contributions when $j\in M$ and $m\in J$, in which
case deletion of $j$ from $M$ and of $m$ from $J$ gives the same tuple,
say $L$, so that 
\begin{equation}
  \varepsilon^{mM}_{jJ}=\varepsilon^{mM}_{jmL}\cdot\varepsilon^{mL}_J
   =-\varepsilon^{mL}_J\cdot\varepsilon^M_{jL}
\end{equation}
and hence
$\partial_mv_M\varepsilon^{mM}_{jJ}=-\partial_mv_{jL}\varepsilon^{mL}_J$.
Therefore 
\begin{equation}
  d^*dv=\sum_J^{}{}'(\sum_{j\notin J}(\Phi'_j-\partial_j)\partial_jv_J+
        \sum_{j\neq m}\sum_{L}^{}{}'\varepsilon^{mL}_{J}
         (\Phi'_j-\partial_j)\partial_m v_{jL})\,dz^J.
  \label{Ak-eq'''}
\end{equation}
When $j=l$ in \eqref{Ak-eq''}, then $\varepsilon^{jL}_J=0$ unless $j$ {\em
belongs\/} to $J$, so $\sum_{L}'$ has only one term $\neq 0$; hence
$\underset{J\ni j}{\sum'\sum}(\Phi'_j-\partial_j)\partial_jv_J\,dz^J$ is the
contribution. For $j\neq l$ there appears a term present in
\eqref{Ak-eq'''} with opposite sign, plus one involving $\Phi''$.

Therefore, when $v=\sum'_{J}v_J\,dz^J$ the action of $d^*d+dd^*$, and
hence of $\Ak$, is altogether given by the relatively simple formula
\begin{equation}
  (d^*d+dd^*)v= \sum_J^{}{}'((\mlap+\Phi'\cdot\nabla)v_J+
     \sum_{j\in J}\sum_{l=1}^n \Phi''_{jl}v_{J_{j\to l}})\,dz^J,
  \label{Ak-eq}
\end{equation}
where $J_{j\to l}$ means $J$ with $j$ replaced by $l$. Note that the degree
$k$ of $v$ really only enters in the determination of how $\Phi''$ acts on $v$.
(This formula was also given by Sj{\"o}strand \cite{Sj96} for the Witten
Laplacians ensuing after the transformation in Section~\ref{unit-ssect} below.)

In particular, if $1$-forms are identified with vector functions,
\begin{equation}
  \AA v= (\mlap+\Phi'\cdot\nabla)\otimes I v+\Phi''\cdot v
\end{equation}
as claimed in the introduction. Note that for $k=0$ or $n$ the action of
$\Ak$ is given by \eqref{Ak-eq'} or \eqref{Ak-eq''}, respectively, that is
\begin{align}
  \A u&= (\mlap+\Phi'\cdot\nabla)u  \quad\text{for}\quad u\in D(\A)
\\
  A_n f&= (\mlap+\Phi'\cdot\nabla)f + (\lap\Phi)f,  
\end{align}
when $f$ in $L^2(\Rn,\mu,\wedge^n\C)$ is (considered as) a function in
$D(A_n)$. 

\subsection{Unitary Transformation}
  \label{unit-ssect}
The operators $\Ak$ are easily transformed into the Witten-Laplacians
denoted by $\lap^{(k)}_{\Phi}$ in \cite{Sj96}. E.g.\
multiplication by $e^{-\Phi/2}$ defines a unitary 
$U\colon L^2(\mu)\to L^2(\Rn)$ transforming $\A$ into (a realisation of)
$B_0=\mlap+\frac14|\Phi'|^2-\frac12\lap\Phi$.

For later reference, this is done consisely here. Writing $d=\sum
\partial_j\,dz_j\wedge$ for the differential on $k$-forms $v=
\sum'_{J} v_J\,dz^J$, where $v_J$ is in $\cal D'(\Rn)$ in general, the formal
adjoint with respect to $\scal{\cdot}{\cdot}_{\mu}$ on
$L^2(\Rn,\mu,\wedge^k\Cn)$ is 
\begin{equation}
  d^*=\sum_{j=1}^n (\Phi'_j-\partial_j)\,dz_j\rfloor
\end{equation}
whereby $dz_j\rfloor$ either removes $dz_j$ when present (and
anti-commuted to the left) or gives zero. When denoting (with subscript
$k$ if necessary)
\begin{gather}
  Uv= \sum_J^{}{}'  e^{-\Phi/2}v_J\,dz^J
  \label{U-def}
\\
  d_{\Phi}= \sum_{j=1}^n (\partial_j+\tfrac{1}{2}\Phi'_j)\,dz_j\wedge,\qquad
  d_{\Phi}^*=\sum_{j=1}^n (-\partial_j+\tfrac{1}{2}\Phi'_j)\,dz_j\rfloor,
\end{gather}
then $d_{\Phi}^*$ is the formal adjoint of $d_{\Phi}$ on
$L^2(\Rn,\wedge^k\Cn)$ and (on $\cal D'{}^1$ forms)
\begin{equation}
  U_{k+1}d_{k}=d_{k,\Phi}U_k,\qquad
  U_kd_{k}^*=d_{k,\Phi}^*U_{k+1}.
  \label{ddfi-id}
\end{equation}
Using this, $v\in D(\Ak)$ with $v_1=\Ak v$ if
and only if for all $w\in V_k$,
\begin{equation}
  \scal{Uv_1}{Uw}=
  \scal{d_{\Phi}Uv}{d_{\Phi}Uw}+
  \scal{d_{\Phi}^*Uv}{d_{\Phi}^*Uw}.
\end{equation}
Hence $U\Ak U^*$ equals the operator $B_k$ for the triple
$(L^2(\Rn,\wedge^k\Cn),V_{k,\Phi}, b_{k})$ where
$b_k(\cdot,\cdot)$ equals
$\scal{d_{\Phi}\cdot}{d_{\Phi}\cdot}+\scal{d_{\Phi}^*\cdot}{d_{\Phi}^*\cdot}$
while $V_{k,\Phi}:=UV_k$ equals $D(d_{k,\Phi})\cap D(d_{k-1,\Phi}^*)$ as
subspaces of $L^2(\Rn,\wedge^k\Cn)$; cf.~Lemma~\ref{Lax--Milgram-lem}.

Using \eqref{ddfi-id} it follows that the $B_k$ acts as $\lap^{(k)}_{\Phi}$.

\subsection{Identification with the Hodge Laplacian}
Denoting by $X_{k+1}$ the closure of $d_k$'s range in
$L^2(\Rn,\mu,\wedge^{k+1}\Cn)$, that is $X_{k+1}=\overline{R(d_k)}$ and
$X_0=\C$ as in \eqref{X-id},
\begin{equation}
  L^2(\Rn,\mu,\wedge^{k+1}\Cn)=H_{k+1}=X_{k+1}\oplus Z(d^*_{k}).
  \label{XZ-id}
\end{equation}
It is now elementary to see that $\Ak$ commutes with the orthogonal
projections onto the summands, and exploiting Lemma~\ref{dns-lem} once more
it also follows that \eqref{Ak-id'} may be read with $d$ and $d^*$ as the
respective unbounded operators associated with the complex \eqref{dmu-cmplx}:

\begin{lem}
  \label{X-lem}
If $P_k$ is the orthogonal projection onto $X_k$,  
\begin{equation}
  P_kV_k\subset V_k, \qquad P_k\Ak\subset \Ak P_k.
  \label{invar-eq}
\end{equation}
Furthermore, the restriction $\Ak|_{X_k}$ is injective, and
$A_k=d^*_kd_k+d_{k-1}d^*_{k-1}$ holds as a formula for unbounded operators, 
i.e.\ with $D(\Ak)=D(d^*_kd_k)\cap D(d_{k-1}d_{k-1}^*)$.
\end{lem}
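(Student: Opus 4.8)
The plan is to reduce everything to the orthogonal splitting
\[
  H_k=X_k\oplus Z(d^*_{k-1}),
\]
which is just \eqref{XZ-id} with the index lowered by one, together with the two elementary inclusions $X_k\subset Z(d_k)$ (because $d\circ d\equiv0$ gives $R(d_{k-1})\subset Z(d_k)$, hence $X_k=\overline{R(d_{k-1})}\subset Z(d_k)$) and $Z(d^*_{k-1})\subset D(d^*_{k-1})$ (trivially). First I would record the resulting \emph{compatibility of $a_k$ with $P_k$}: for any $\eta\in D(d_k)$ one has $(I-P_k)\eta\in V_k$ — indeed $\eta-P_k\eta\in D(d_k)$ since $P_k\eta\in X_k\subset Z(d_k)$, and $(I-P_k)\eta\in Z(d^*_{k-1})\subset D(d^*_{k-1})$ — while for any $\xi\in D(d^*_{k-1})$ one has $P_k\xi\in V_k$; moreover $d_k$ kills $X_k$ and $d^*_{k-1}$ kills $Z(d^*_{k-1})$. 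Taking $\eta,\xi\in V_k$ already yields $P_kV_k\subset V_k$, the first half of \eqref{invar-eq}.

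Next I would establish the unbounded-operator identity, i.e.\ that $D(A_k)=D(d^*_kd_k)\cap D(d_{k-1}d^*_{k-1})$ with $A_k=d^*_kd_k+d_{k-1}d^*_{k-1}$ on this set. For the inclusion $D(A_k)\subseteq D(d^*_kd_k)\cap D(d_{k-1}d^*_{k-1})$, let $u\in D(A_k)$, so that $a_k(u,w)=\scal{A_ku}{w}_\mu$ for all $w\in V_k$. Given $\eta\in D(d_k)$, insert $w=(I-P_k)\eta\in V_k$; since $d_k\eta=d_k(I-P_k)\eta$ while $d^*_{k-1}(I-P_k)\eta=0$, the identity collapses to $\scal{d_ku}{d_k\eta}_\mu=\scal{(I-P_k)A_ku}{\eta}_\mu$, now valid for every $\eta\in D(d_k)$; hence $d_ku\in D(d^*_k)$ with $d^*_kd_ku=(I-P_k)A_ku$. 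Symmetrically, inserting $w=P_k\xi\in V_k$ for $\xi\in D(d^*_{k-1})$ and using $d_kP_k\xi=0$ yields $\scal{d^*_{k-1}u}{d^*_{k-1}\xi}_\mu=\scal{P_kA_ku}{\xi}_\mu$ for all such $\xi$, so, since $d^*_{k-1}=(d_{k-1})^*$ and $d_{k-1}$ is closed, $d^*_{k-1}u\in D(d_{k-1})$ with $d_{k-1}d^*_{k-1}u=P_kA_ku$; adding the two gives $u\in D(d^*_kd_k)\cap D(d_{k-1}d^*_{k-1})$ and $(d^*_kd_k+d_{k-1}d^*_{k-1})u=A_ku$. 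The reverse inclusion is routine: if $u\in V_k$ with $d_ku\in D(d^*_k)$ and $d^*_{k-1}u\in D(d_{k-1})$, the defining properties of $d^*_k$ and $d_{k-1}$ turn $a_k(u,w)$ into $\scal{(d^*_kd_k+d_{k-1}d^*_{k-1})u}{w}_\mu$ for every $w\in V_k$, whence $u\in D(A_k)$ with the stated value. (Alternatively one may derive the identity from the distributional formula \eqref{Ak-id'} combined with Lemma~\ref{dns-lem}.)

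The remaining two assertions then come for free. For $P_kA_k\subset A_kP_k$: given $u\in D(A_k)$, use $d_kP_ku=0$ and $d^*_{k-1}P_ku=d^*_{k-1}u$ (the latter because $(I-P_k)u\in Z(d^*_{k-1})$); the formula just proved shows $P_ku\in D(d^*_kd_k)\cap D(d_{k-1}d^*_{k-1})=D(A_k)$ with $A_kP_ku=d_{k-1}d^*_{k-1}P_ku=d_{k-1}d^*_{k-1}u=P_kA_ku$. For injectivity of $A_k|_{X_k}$: if $u\in D(A_k)\cap X_k$ and $A_ku=0$, then $0=\scal{A_ku}{u}_\mu=a_k(u,u)=\nrm{d_ku}{\mu}^2+\nrm{d^*_{k-1}u}{\mu}^2$, so $d^*_{k-1}u=0$, i.e.\ $u\in Z(d^*_{k-1})=X_k^\perp$, which together with $u\in X_k$ forces $u=0$.

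The only genuinely delicate point is the step where the variational identity $a_k(u,w)=\scal{A_ku}{w}_\mu$, a priori available only for test elements $w\in V_k$, is promoted to an identity valid for all $w\in D(d_k)$ (respectively all $w\in D(d^*_{k-1})$) — this is exactly what is needed to recognise $d_ku$ (resp.\ $d^*_{k-1}u$) as lying in the domain of the Hilbert-space adjoint $d^*_k$ (resp.\ $d_{k-1}$). The compatibility of the splitting $H_k=X_k\oplus Z(d^*_{k-1})$ with $a_k$ is precisely what makes this work, since it guarantees that the projected test element $(I-P_k)\eta$ (resp.\ $P_k\xi$) stays inside $V_k$; with that in hand the rest is bookkeeping, and I do not expect any serious obstacle beyond keeping the various domains straight.
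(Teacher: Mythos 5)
Your proof is correct, and it rests on the same foundation as the paper's: the orthogonal splitting $H_k=X_k\oplus Z(d^*_{k-1})$ together with $X_k\subset Z(d_k)$, the invariance $P_kV_k\subset V_k$, and the computation of $a_k$ on projected elements. There is, however, one genuine difference at the crucial step. The paper first proves $P_k\Ak\subset\Ak P_k$ by the direct form computation $a_k(P_kv,w)=\scal{d^*v}{d^*P_kw}_\mu=\scal{P_k\Ak v}{w}_\mu$, and then obtains the unbounded-operator identity by establishing $\scal{P_kv_1}{w}_\mu=\scal{d^*P_kv}{d^*w}_\mu$ and $\scal{(1-P_k)v_1}{w}_\mu=\scal{d(1-P_k)v}{dw}_\mu$ only for $w\in V_k$ and then invoking Lemma~\ref{dns-lem} (density of $C^\infty_0$ in the graph norms of $D(d_k)$ and $D(d^*_{k-1})$) to pass by closure to the full adjoint domains. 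You instead observe that $(I-P_k)\eta\in V_k$ for \emph{every} $\eta\in D(d_k)$ and $P_k\xi\in V_k$ for \emph{every} $\xi\in D(d^*_{k-1})$, so the variational identity tested against these projected elements already ranges over all of $D(d_k)$ and $D(d^*_{k-1})$; this recognises $d_ku\in D(d^*_k)$ and $d^*_{k-1}u\in D(d_{k-1})$ without any appeal to the density lemma, and you then deduce the commutation relation as a corollary of the operator identity rather than the other way round. Your variant is slightly more self-contained (it uses only abstract Hilbert-space facts about the splitting, not the approximation machinery behind Lemma~\ref{dns-lem}), at the mild cost of having to identify $(d^*_{k-1})^*$ with $d_{k-1}$, which is immediate since $d_{k-1}$ is closed and densely defined. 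Both routes are sound; the injectivity argument is identical to the paper's.
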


\begin{proof} Omitting some $k$'s for simplicity,
it follows from \eqref{XZ-id} that $PV\subset V$, since
$V=D(d^*)\cap D(d)$. Using this one finds: if $v\in D(\Ak)$ and $w\in
V$, then $d^*(1-P)\equiv0$ and $dP\equiv0$ so that
\begin{equation*}
  a_1(Pv,w)=\scal{d^*Pv}{d^*w}_{\mu}=\scal{d^*v}{d^*Pw}_{\mu}
       =\scal{\Ak v}{Pw}_{\mu}=\scal{P\Ak v}{w}_{\mu};
\end{equation*}
consequently $P\Ak\subset \Ak P$. 
If $v\in X\cap Z(\Ak)$, then $0=\scal{\Ak v}{v}_{\mu}=\nrm{d^*v}{\mu}^2$,
so $v$ is also in $Z(d^*_{k-1})=X^\perp$, whence $v=0$.

If $v\in D(\Ak)$ and $\Ak v=v_1$ then $\scal{Pv_1}{w}_{\mu}=
\scal{d^*Pv}{d^*w}_{\mu}$ and
$\scal{(1-P)v_1}{w}=\scal{d(1-P)v}{dw}_{\mu}$ 
for all $w$ in $V$. Because $C^\infty_0$ is dense with respect to the graph
norms in $D(d)$ and $D(d^*)$, this gives by closure that
$d^*Pv\in D(d)$ with $dd^*Pv=Pv_1$ and that $d(1-P)v\in D(d^*)$ with
$d^*d(1-P)v=(1-P)v_1$; hence that $v\in D(d^*_1d_1)\cap D(d_0d_0^*)$ with
$dd^* v=Pv_1$ and $d^*dv=(1-P)v_1$. Conversely
$d^*_kd_k+d_{k-1}d_{k-1}^* \subset \Ak$ follows easily from
\eqref{ak-eq}--\eqref{V-eq}.  
\end{proof}

Note that by \eqref{invar-eq} the subspaces $X_k$ and $Z(d^*_{k-1})$ are
invariant under $\Ak$, and that the terms $d^*_kd_k$ and $d_{k-1}d^*_{k-1}$
vanish there, respectively.

Also both $d_{k}\Ak$ and $A_{k+1}d_{k}$ are defined on $D(d_{k+1}d^*_kd_k)$,
so in this way we have the intertwining properties
\begin{equation}
  A_{k+1}d_{k}=d_k\Ak,\qquad   A_{k-1}d^*_{k-1}=d^*_{k-1}\Ak
\end{equation}
on $D(d_kd^*_kd_k)$ and $D(d^*_{k-1}d_{k-1}d^*_{k-1})$, respectively,
for the unbounded operators, as well as in general in the distribution sense.

\bigskip

Since Lemma~\ref{X-lem} shows that the $\Ak$ of this section coincide
with \eqref{Ak-def} above, it is clear that Theorem~\ref{kk1-thm} holds
for the operators given in \eqref{ak-eq}--\eqref{Alm'-eq} and
\eqref{Ak-eq}.

\subsection{A direct $\boldsymbol{H^1}$-proof} 
The injectiveness of $\AA|_X$ shown in Lemma~\ref{X-lem} may be used for
a short proof of Theorem~\ref{Brascamp--Lieb-thm}'s essential
parts. This is done in the spirit of \cite{Hel96,Sj96}, but now for
our general $\Phi$ and with much sharper assumptions:

\begin{prop}
  \label{Brascamp--Lieb-prop}
Suppose \eqref{phi-cnd} holds and that $\A$ defined above 
satisfies:
\begin{equation}
  R(\A)=\overline{R(\A)}\quad\text{in}\quad H_0.
  \label{A-cnd}
\end{equation}
Then it holds true for all $g_1$, $g_2\in H^1(\mu)$ that
\begin{equation}
  \scal{g_1-\ang{g_1}}{g_2-\ang{g_2}}_{\mu} 
                = \scal{\AA^{-1}d g_1}{d g_2}_{\mu}.
  \label{Brascamp--Lieb-eq}
\end{equation}
\end{prop}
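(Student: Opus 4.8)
The plan is to reduce \eqref{Brascamp--Lieb-eq} to the projection identity \eqref{P-id} and to read that off from the closed range hypothesis \eqref{A-cnd} by means of the machinery already assembled in Sections~\ref{CR-sect} and \ref{Ak-sect}. Since $g_1$, $g_2$ are taken only in $H^1(\mu)=D(d_0)$, the whole argument stays within the calculus of unbounded operators; no Lax--Milgram apparatus is needed (that is what makes this proof genuinely shorter than the one of Theorem~\ref{Brascamp--Lieb-thm}).

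First I would record that $\A=d^*d=d_0^*d_0$ by \eqref{Ak-id'}, so $Z(\A)=Z(d_0)=\C$ and $\A$ annihilates $\C$; hence $R(\A)=R(\A|_{\C^\perp})$, and \eqref{A-cnd} says exactly that $S:=\A|_{Z(d_0)^\perp}$ is injective with closed range. Applying the Closed Range Theorem~\ref{CR-thm} to $T=d_0$ with $F_1=X=X_1=\overline{R(d_0)}$, for which $S_1=d_0d_0^*|_{F_1}$ coincides with $\AA|_X$ by Lemma~\ref{X-lem}, the equivalence \eqref{S-cnd}$\iff$\eqref{S1-cnd} shows that $\AA|_X$ is again injective with closed range (its injectivity is in any case immediate from Lemma~\ref{X-lem}, which the present subsection hinges on). Since $\AA|_X$ is self-adjoint on $X$, its range is then all of $X$, so $0\in\rho(\AA|_X)$; thus $(\AA|_X)^{-1}\in\B(X)$ and this bounded inverse maps $X$ into $D(\AA|_X)\subset D(d_0^*)$.

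With this in hand the rest is a short computation. For $f\in H^1(\mu)$ the form $d_0f$ lies in $R(d_0)\subset X$, so $u:=d_0^*(\AA|_X)^{-1}d_0f$ is well defined; it belongs to $R(d_0^*)\subset Z(d_0)^\perp=\C^\perp$, and since $\AA|_X=(d_0d_0^*)|_X$ one has $d_0(f-u)=d_0f-\AA|_X(\AA|_X)^{-1}d_0f=0$, so $f-u\in Z(d_0)=\C$. Applying the orthogonal projection $P$ onto $\C^\perp$ gives $Pf=u$, i.e.\ \eqref{P-id}. (Equivalently one may argue that \eqref{A-cnd} forces $R(d_0^*)=\C^\perp$, via Theorem~\ref{kk1-thm} together with $\overline{R(d_0^*)}=Z(d_0)^\perp=\C^\perp$, so each $f-\ang{f}=Pf$ already lies in $D(d_0)\cap R(d_0^*)$ and \eqref{P-id} follows from the Lemma preceding Theorem~\ref{kk1-thm}.) Finally, for $g_1$, $g_2\in H^1(\mu)$, using that $P$ is a self-adjoint idempotent with $Pg_j=g_j-\ang{g_j}$, that $(\AA|_X)^{-1}d_0g_1\in D(d_0^*)$ and that $g_2\in D(d_0)$,
\begin{equation*}
\scal{g_1-\ang{g_1}}{g_2-\ang{g_2}}_{\mu}=\scal{Pg_1}{g_2}_{\mu}
=\scal{d_0^*(\AA|_X)^{-1}d_0g_1}{g_2}_{\mu}=\scal{(\AA|_X)^{-1}d_0g_1}{d_0g_2}_{\mu};
\end{equation*}
since $d_0g_1,d_0g_2\in X$, \eqref{AAinv-eq} turns the last term into $\scal{\AA^{-1}d g_1}{d g_2}_{\mu}$, which is \eqref{Brascamp--Lieb-eq}.

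The step that really needs care is the second one: promoting the hypothesis on $\A$ to the assertion that $(\AA|_X)^{-1}$ is a bounded, everywhere defined operator on $X$ that lands in $D(d_0^*)$. This uses the injectivity of $\AA|_X$ from Lemma~\ref{X-lem}, the implication ``$R(\A)$ closed $\Rightarrow$ $R(d_0)$, $R(d_0^*)$ closed'' of Theorem~\ref{kk1-thm} (equivalently, \eqref{S-cnd}$\iff$\eqref{S1-cnd} in Theorem~\ref{CR-thm}), and self-adjointness of $\AA|_X$ to upgrade ``closed range'' to ``range $=X$''; everything else is routine domain bookkeeping. The genuinely harder matter --- making sense of the right-hand side of \eqref{Brascamp--Lieb-eq} when $g_j\in L^2(\mu)\setminus H^1(\mu)$, through the form domain of $\AA|_X$ and an accompanying Poincar\'e inequality --- is exactly what distinguishes Theorem~\ref{Brascamp--Lieb-thm} from this proposition and is postponed.
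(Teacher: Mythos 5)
Your proof is correct, but it follows a genuinely different route from the paper's. The paper's own argument for this Proposition is the advertised ``direct $H^1$-proof'': it uses \eqref{A-cnd} together with $\A=\A^*$ and $Z(\A)=\C$ only to obtain $L^2(\mu)=R(\A)\oplus\C$ and hence to solve $\A f=g_1-\ang{g_1}$; it then rewrites the covariance as $a_0(f,g_2-\ang{g_2})=\scal{df}{dg_2}_{\mu}$ via the sesqui-linear form, and verifies by testing against $w\in C^\infty_0(\Rn,\wedge^1\Cn)$ and the density of Lemma~\ref{dns-lem} that $df\in D(\AA)$ with $\AA df=dg_1$, so that $\AA^{-1}dg_1=df$ by the injectivity from Lemma~\ref{X-lem}. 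You instead route everything through the abstract machinery of Section~\ref{CR-sect}: from \eqref{A-cnd} you deduce via Theorem~\ref{CR-thm} that $0\in\rho(\AA|_X)$ and then establish the projection identity $Pf=d_0^*(\AA|_X)^{-1}d_0f$ before computing. This is legitimate precisely because Lemma~\ref{X-lem} identifies the variationally defined $\Ak$ with the Hodge Laplacians of \eqref{Ak-def}, so that Theorem~\ref{kk1-thm} and Corollary~\ref{iseq-cor} apply to them --- a dependence you correctly flag. What your version buys is the stronger intermediate fact that $(\AA|_X)^{-1}$ is an everywhere defined bounded operator on $X$, whereas the paper's proof only produces a preimage under $\AA$ of the particular element $dg_1$; what the paper's version buys is independence from the full strength of Theorem~\ref{CR-thm} and a formulation in terms of the forms $a_0$, $a_1$ that feeds directly into the $L^2(\mu)$-extension of Section~\ref{prf-ssect}. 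Your closing remark correctly locates the genuinely harder issue (making sense of the right-hand side for $g_j\in L^2(\mu)\setminus H^1(\mu)$) outside the scope of this Proposition.
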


\begin{proof}
Since $\A=\A^*$, with closed range by \eqref{A-cnd} and $Z(A_0)=\C$ by
\eqref{Alm-eq}--\eqref{Alm'-eq}, there is a decomposition $L^2(\mu)=
R(\A)\oplus \C$; hence $g_1-\ang{g_1}=\A f$ for some $f\in D(\A)$, so
according to \eqref{Alm-eq},
\begin{equation}
   \scal{g_1-\ang{g_1}}{g_2-\ang{g_2}}_{\mu} 
   =a_0(f,g_2-\ang{g_2}) 
   =\scal{d f}{d g_2}_{\mu}.
  \label{cov-eq}
\end{equation}
In the distribution sense $d^*d f=\A f$, since $f$ is picked in $D(\A)$;
therefore we moreover have for $w\in C^\infty_0(\Rn,\wedge^1\Cn)$, 
\begin{equation}
  a_1(d f,w)=0+ \scal{\A f}{-\dv(e^{-\Phi}w)} 
  =\lim_{k\to\infty}\dual{d\A f}{\overline{\varphi_k}}=\scal{d g_1}{w}_{\mu}
  \label{f-eq}
\end{equation}
when $\varphi_k\in C^\infty_0$ tends to $e^{-\Phi}w$ in the $V_1$-topology.

By completion \eqref{f-eq} also holds for every $w\in V_1$, cf.~the
density in Lemma~\ref{dns-lem} and \eqref{ak-bd},  so it follows that
$d f\in D(\AA)$ with $\AA^{-1}d g_1= d f$ (using the injectivity of
Lemma~\ref{X-lem}). This and \eqref{cov-eq} yields the proof.
\end{proof}

In addition to the above, we may observe that using
\eqref{Brascamp--Lieb-eq}, partial integration gives for each 
test function $\varphi$
\begin{equation}
  \scal{Pu}{\varphi}_{\mu}= \scal{Pu}{P\varphi}_{\mu}
  = \scal{d^*\AA^{-1}d u}{\varphi}_{\mu},
  \label{P'-eq}
\end{equation}
which shows \eqref{P-eq} when $u\in H^1(\mu)$. Since $P$ is bounded in
$L^2(\mu)$, we can extend $d^*\AA^{-1}d$ by
continuity such that \eqref{P-eq} holds by definition on $L^2(\mu)$.

A closer analysis given in Section~\ref{prf-ssect} below will show
that each of the individual factors in $d^*\AA^{-1}du$ are well defined too, when
$u\in L^2(\mu)$.

\subsection{Brascamp--Lieb's inequality}
  \label{Brascamp--Lieb-ssect}
When $\Phi$ is strictly convex, Corollary~\ref{Brascamp--Lieb-cor} may now
be proved for $f\in H^{1}(\mu)$, i.e.\
\begin{equation}
  \norm{f-\ang{f}}{L^2(\mu)}^2 \le \scal{(\Phi'')^{-1}d f}{d f}_{\mu}.
  \label{Brascamp--Lieb-eq'}
\end{equation}
To begin with it is first assumed that $\Phi''(x)\ge c_0>0$ in $\Cn$ for
each $x\in\Rn$. Partial integration shows that on $C^\infty_0$ we have
\begin{equation}
  \scal{\AA v}{v}_{\mu}= a_1(v,v)=\sum_{j,k=1}^n
    \nrm{\partial_jv_k}{\mu}^2 +\scal{\Phi'' v}{v}_{\mu}
  \ge c_0\nrm{v}{\mu}^2
  \label{Brascamp--Lieb-lb}
\end{equation}
so $m(\AA)\ge c_0>0$ follows by Lemma~\ref{fx-lem}. Then
Theorem~\ref{kk1-thm} yields that $\A$ restricted to $\C^\perp$ has $0$ in
the resolvent, hence that \eqref{A-cnd} and \eqref{Brascamp--Lieb-eq} hold.

Because $\AA^{-1}$ is symmetric $\ge0$, Cauchy--Schwarz' inequality
(for such operators) applied to
$\scal{\AA^{-1}v}{\AA w}_{\mu}=\scal{v}{w}_{\mu}$ 
for $v\in L^2(\mu,\wedge^1\C)$ and $w\in D(\AA)$ yields
\begin{equation}
  |\scal{v}{w}_{\mu}|^2\le \scal{\AA^{-1}v}{v}_{\mu}
\scal{w}{\AA w}_{\mu}
\end{equation}
with equality if $v=\AA w$. Hence
\begin{equation}
  \scal{\AA^{-1}v}{v}_{\mu}=\sup
  \bigl\{\,\frac{|\scal{v}{w}_{\mu}|^2}{
   \scal{\AA w}{w}_{\mu}} \bigm| w\in D(\AA)\setminus\{0\}\,\bigr\},
\end{equation}
and analogously for $\Phi''$, so \eqref{Brascamp--Lieb-lb}
and the density of $D(\AA)$ and $C^\infty_0$ in $V$ give
\begin{equation}
  \scal{\AA^{-1}v}{v}_{\mu} 
  = \sup\bigl\{\,\frac{|\scal{v}{w}_{\mu}|^2}{
    a_1(w,w)} \bigm| w\in C^\infty_0(\Rn,\wedge^1\Cn)\setminus\{0\}\,\bigr\}
  \le\scal{(\Phi'')^{-1}v}{v}_{\mu}
  \label{APhi-eq}
\end{equation}
[regardless of whether $C^\infty_0$ is dense in $D(\Phi'')$],
which proves \eqref{Brascamp--Lieb-eq'} in this case.

In general this applies for $0<\varepsilon<1$ to
\begin{equation}
  \Phi_\varepsilon(x)=\Phi(x)+\varepsilon |x|^2 +\log C_\varepsilon
  \quad\text{with}\quad C_\varepsilon=\int e^{-\Phi(x)-\varepsilon |x|^2}
  \,dx,
\end{equation}
which is uniformly strictly convex with $\int d\mu_\varepsilon=1$; note that
$C_\varepsilon\nearrow 1$ for $\varepsilon\searrow 0$. Clearly
\eqref{Brascamp--Lieb-eq'} holds with $\mu_\varepsilon$ instead of $\mu$;
because $e^{-\varepsilon |x|^2-\log C_\varepsilon}\le C_1^{-1}$,
\begin{equation}
  \int |f- {\textstyle\int}f e^{-\Phi_\varepsilon}\,dx|^2
e^{-\Phi_\varepsilon}\,dx \longrightarrow \nrm{f-\ang{f}}{\mu}^2
\end{equation}
by majorised convergence for $\varepsilon\searrow 0$. Indeed, in this way
$\int f e^{-\Phi_\varepsilon}\,dx$ tends to $\ang{f}$ and the whole left hand
side is controlled by $(|f(x)|+\norm{f}{L^1(\mu)})^2 e^{-\Phi(x)}$.

Being positive, $I_\varepsilon := \nabla f^T(\Phi''_\varepsilon)^{-1}
\overline{\nabla f}$
always has an integral; if this is finite for
$\varepsilon=0$ then \eqref{Brascamp--Lieb-eq'} must be proved. But then
$I_0$ itself may serve as a majorant, and because $\Phi''_\varepsilon(x)\ge
\Phi''(x)$ in $\B (\ell^2(\{1,\dots,n\}))$ for all $x$,
\begin{equation}
  I_\varepsilon(x) e^{-\varepsilon |x|^2-\log C_\varepsilon}\le I_0(x)/C_1.
\end{equation}
Pointwise convergence is clear from the norm continuity of
inversion. This completes the proof for $f\in H^1(\mu)$.

\section{Extensions to Integrable Functions}   \label{d-sect}

\subsection{Sufficient Conditions Revisited}
Instead of merely establishing estimates sufficient for the full proof
of Theorem~\ref{Brascamp--Lieb-thm}, their relation to the other
conditions treated is given below, for they all fit so well together
that a discussion should be of interest in its own right; the following ten
conditions are elucidated (in and) 
after the proof. Note that subscripts are suppressed on $d$ and $d^*$ when
the context makes it clear what the domain is.

\begin{thm}
  \label{ran-thm}
For $\Ak=d^*_kd_k+d_{k-1}d^*_{k-1}$ with $k>0$ it holds true that
\begin{equation}
  \begin{gathered}
  \text{\eqref{m1-cnd}, \eqref{eqvi-cnd} and \eqref{V-cnd} are
        equivalent }  \\
  \text{\eqref{V-cnd}$\implies $\eqref{Zd-cnd}$\implies $\eqref{X-cnd}}
\\
  \text{\eqref{X-cnd}, \eqref{d*-cnd}, \eqref{Rd-cnd}, \eqref{R0-cnd}, \eqref{eqviX-cnd}
   and \eqref{m1X-cnd} are equivalent}, 
  \end{gathered}
  \label{impl-eq}
\end{equation} 
when the properties \eqref{m1-cnd}--\eqref{m1X-cnd} are as follows:
\begin{enumerate}
\addtolength{\itemsep}{\jot}
\addtolength{\topsep}{3\jot}
  \item  \label{m1-cnd} 
$0< m(\Ak):=\inf\{\,\scal{\Ak v}{v}_{\mu}\mid v\in D(\Ak),\
\nrm{v}{\mu}=1,\}$;
  \item \label{eqvi-cnd} $\text{the norms $a_k(\cdot,\cdot)^{1/2}$ and
$\norm{\cdot}{V_k}$ are equivalent on $V_k$}$;
  \item  \label{V-cnd}
$\nrm{v}{\mu}^2\le c^2(\nrm{d^*v}{\mu}^2+\nrm{dv}{\mu}^2) 
\text{ for all $k$-forms } v\in D(d^*)\cap D(d)$;
  \item  \label{Zd-cnd}
$\nrm{v}{\mu}^2\le c^2(\nrm{d^*v}{\mu}^2+\nrm{dv}{\mu}^2) 
\text{ for all $k$-forms } v\in D(d^*)\cap Z(d)$;
  \item  \label{X-cnd}
$\nrm{v}{\mu}^2\le c^2(\nrm{d^*v}{\mu}^2+\nrm{dv}{\mu}^2) 
\text{ for all $k$-forms } v\in D(d^*)\cap X_k$;
  \item  \label{d*-cnd}
$\nrm{v}{\mu}\le c\nrm{d^*v}{\mu} 
\text{ for all $k$-forms } v\in D(d^*)\cap X_k$;
  \item  \label{Rd-cnd}
the range of 
$d\colon D(d_{k-1})\to L^2(\mu,\wedge^{k}\C) \text{ is closed, i.e.~}
R(d_{k-1})=X_{k}$; 
  \item  \label{R0-cnd} 
the range $R(A_{k-1}|_{Z(d)^\perp})$ is closed in
$L^2(\mu,\wedge^{k-1})$; 
  \item  \label{eqviX-cnd}
$\text{the norms $\nrm{d^*\cdot}{\mu}$ and
$\norm{\cdot}{V_k}$ are equivalent on $V_k\cap X_k$}$;
  \item  \label{m1X-cnd}
$0< m(\Ak|_{X_k}):=\inf\{\,\scal{\Ak v}{v}_{\mu}\mid v\in D(\Ak)\cap X_k,\
\nrm{v}{\mu}=1,\}$.
\end{enumerate}
In the affirmative case $m(\Ak)=m(\Ak|_{X_k})\ge c^{-2}$, 
where $c$ is any of the constants in \eqref{V-cnd}--\eqref{d*-cnd}.

Moreover, the closed forms in $L^2(\mu,\wedge^k\Cn)$ belong to $R(d_{k-1})$,
i.e.~$Z(d_k)=R(d_{k-1})$, when any of \eqref{m1-cnd}--\eqref{Zd-cnd} holds.
\end{thm}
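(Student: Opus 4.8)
The plan rests on the Closed Range Theorem, Theorem~\ref{CR-thm}, applied to $T=d_{k-1}\colon H_{k-1}\to H_k$, which will deliver the substantial block of equivalences; the first block is pure form theory and the middle implications are trivial restrictions. First I would record two preparatory observations about the complex \eqref{dmu-cmplx}. Since $d_kd_{k-1}=0$ one has $X_k=\overline{R(d_{k-1})}\subset Z(d_k)$, hence $X_k^\perp=Z(d^*_{k-1})$ and an orthogonal splitting $Z(d_k)=X_k\oplus\mathcal{H}_k$ with $\mathcal{H}_k:=Z(d_k)\cap Z(d^*_{k-1})$; and by Lemma~\ref{X-lem} the term $d^*_kd_k$ of $\Ak$ vanishes on $X_k$ while the term $d_{k-2}d^*_{k-2}$ of $A_{k-1}$ vanishes on $\overline{R(d^*_{k-1})}=Z(d_{k-1})^\perp$, which lies in $Z(d^*_{k-2})$. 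Choosing $F=Z(d_{k-1})^\perp$ and $F_1=X_k$ in Theorem~\ref{CR-thm} (so that $F=\overline{R(T^*)}$ and $R(T)\subset F_1$ as required, indeed $F_1^\perp=Z(T^*)$), the operators $S=T^*T|_F$ and $S_1=TT^*|_{F_1}$ there become exactly $A_{k-1}|_{Z(d_{k-1})^\perp}$ and $\Ak|_{X_k}$, with $D(S_1)=D(\Ak)\cap X_k$.

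Next I would dispose of \eqref{m1-cnd}$\iff$\eqref{eqvi-cnd}$\iff$\eqref{V-cnd}. Since $\norm{v}{V_k}^2=\norm{v}{\mu}^2+a_k(v,v)$, the norm equivalence \eqref{eqvi-cnd} is, up to harmless constants, the coercivity estimate $a_k(v,v)\ge c^{-2}\norm{v}{\mu}^2$ on $V_k=D(d^*)\cap D(d_k)$, that is \eqref{V-cnd}; and because $D(\Ak)$ is a form core for $a_k$ (dense in $V_k$ for $\norm{\cdot}{V_k}$ by the Lax--Milgram construction and Lemma~\ref{dns-lem}, with $\scal{\Ak v}{v}_\mu=a_k(v,v)$ there), the bottom $m(\Ak)$ equals the bottom of $a_k$ on $V_k$, so \eqref{m1-cnd} says the same thing and $m(\Ak)=c^{-2}$ for the optimal $c$ in \eqref{V-cnd}. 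Then \eqref{V-cnd}$\implies$\eqref{Zd-cnd}$\implies$\eqref{X-cnd} is immediate, these being the one Poincar\'e inequality restricted to the successively smaller subsets $D(d^*)\cap Z(d_k)\supset D(d^*)\cap X_k$ of $V_k$ (using $X_k\subset Z(d_k)\subset D(d_k)$), and \eqref{X-cnd}$\iff$\eqref{d*-cnd} because $d_kv=0$ for $v\in X_k$ erases the $\norm{dv}{\mu}^2$ term.

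The heart of the matter is the equivalence of \eqref{d*-cnd}, \eqref{Rd-cnd}, \eqref{R0-cnd}, \eqref{eqviX-cnd} and \eqref{m1X-cnd}, which I would read off from Theorem~\ref{CR-thm} under the identifications above: \eqref{Rd-cnd} is its condition \eqref{RT-cnd}; \eqref{d*-cnd} is \eqref{lb-cnd}, since $D(T^*)\cap F_1=D(d^*_{k-1})\cap X_k=V_k\cap X_k$; \eqref{R0-cnd} is \eqref{S-cnd}, because $S=d^*_{k-1}d_{k-1}|_{Z(d_{k-1})^\perp}$ is automatically injective ($Sv=0$ forces $\norm{d_{k-1}v}{\mu}=0$, hence $v=0$ on $Z(d_{k-1})^\perp$), so that closedness of $R(S)$ alone already gives $S$ injective with closed range; \eqref{m1X-cnd} is \eqref{S1-cnd}, by the remark following Theorem~\ref{CR-thm} that it amounts to strict positivity of $S_1=\Ak|_{X_k}$, together with $m(\Ak|_{X_k})=m(S_1)$ coming from $D(S_1)=D(\Ak)\cap X_k$; and \eqref{eqviX-cnd} reduces to \eqref{d*-cnd} exactly as \eqref{eqvi-cnd} did to \eqref{V-cnd}, since $\norm{v}{V_k}^2=\norm{v}{\mu}^2+\norm{d^*v}{\mu}^2$ on $X_k$. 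For the quantitative statement in the affirmative case, $m(\Ak|_{X_k})=c^{-2}$ for the optimal $c$ in \eqref{d*-cnd} by condition \eqref{S1-cnd}, while the identity $m(\Ak)=m(\Ak|_{X_k})$ is obtained by tracking the constant through the chain and using that, once \eqref{V-cnd} holds, $\Ak$ splits along $H_k=X_k\oplus Z(d^*_{k-1})$ with the two sectors matched, through Theorem~\ref{kk1-thm}, to restrictions of $A_{k-1}$ and $A_{k+1}$.

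Finally, for the closing assertion, assume any of \eqref{m1-cnd}--\eqref{Zd-cnd}; then \eqref{Zd-cnd} holds, and applied to $v\in\mathcal{H}_k$, for which $d_kv=0$ and $d^*_{k-1}v=0$, it forces $v=0$, so $\mathcal{H}_k=\{0\}$. Since moreover \eqref{Zd-cnd} implies \eqref{X-cnd}, hence the equivalent \eqref{Rd-cnd}, i.e.\ $R(d_{k-1})=X_k$, one concludes $Z(d_k)=X_k\oplus\mathcal{H}_k=X_k=R(d_{k-1})$. The one genuinely delicate point I anticipate is the affirmative-case identity $m(\Ak)=m(\Ak|_{X_k})$, where the bookkeeping of constants and the comparison of the $X_k$-sector of $\Ak$ with its $Z(d^*_{k-1})$-sector must be handled with care; everything else is routine matching of each of the ten conditions to an item of Theorem~\ref{CR-thm}, the analytic substance residing entirely in Theorems~\ref{CR-thm} and \ref{kk1-thm} and in Lemmas~\ref{dns-lem} and \ref{X-lem}.
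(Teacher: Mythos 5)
Your proposal is correct and follows essentially the same route as the paper: the first block via the density of $D(\Ak)$ in $V_k$, the middle implications by restriction of the Poincar\'e inequality, and the main block by applying Theorem~\ref{CR-thm} to $T=d_{k-1}$ with $F=Z(d_{k-1})^\perp$ and $F_1=X_k$, so that $S=A_{k-1}|_{Z(d_{k-1})^\perp}$ and $S_1=\Ak|_{X_k}$. The only (cosmetic) divergence is that you identify \eqref{eqviX-cnd} and \eqref{m1X-cnd} directly with the norm identity on $X_k$ and with item \eqref{S1-cnd} of Theorem~\ref{CR-thm}, whereas the paper closes the cycle \eqref{d*-cnd}$\implies$\eqref{eqviX-cnd}$\implies$\eqref{m1X-cnd}$\implies$\eqref{d*-cnd} by a short density/projection argument for the last step; and on the quantitative tail $m(\Ak)=m(\Ak|_{X_k})\ge c^{-2}$ your sketch is no less complete than the paper's own one-line remark.
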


\begin{proof}
\eqref{m1-cnd}$\iff$\eqref{eqvi-cnd}  because they are both equivalent to
\eqref{V-cnd} in view of $D(\Ak)$'s density in $V_k$ and
\eqref{ak-eq}--\eqref{Vnrm-eq}. In addition $m(\AA)\ge c^{-2}$ is found.

Now \eqref{V-cnd}$\implies $\eqref{Zd-cnd}$\implies
$\eqref{X-cnd}$\iff$\eqref{d*-cnd} by shrinking of the set $F_1$ of
$v$'s from $D(d)$ to $X_k$ (and since $d\equiv0$ on $X$). That
\eqref{d*-cnd}$\iff$\eqref{Rd-cnd}$\iff$\eqref{R0-cnd} follows from
Theorem~\ref{CR-thm} with $F_1=X_k$; this moreover gives that
$R(d_{k-1})=Z(d_k)$ when \eqref{Zd-cnd} holds.

Finally \eqref{d*-cnd} trivially gives
\eqref{eqviX-cnd}, and \eqref{eqviX-cnd}$\implies$\eqref{m1X-cnd} is
clear. When \eqref{m1X-cnd} holds, the inequality in \eqref{d*-cnd} is valid
in the subspace $D(\Ak)\cap X_k$. Therefore, if $v_0\in V_k\cap X_k$ then 
$v_m\underset{m}{\rightarrow}v_0$ in $V_k$ for some $v_m\in D(\Ak)$ where $P_kv_m\to v_0$ in
$L^2(\mu,\wedge^k\Cn)$ with $P_kv_m\in D(\Ak)\cap X_k$ by
Lemma~\ref{X-lem}. Using \eqref{m1X-cnd} and $d^*(1-P)\equiv0$,  
\begin{equation}
  \nrm{v_0}{\mu}\le c\lim_m\nrm{d^*Pv_m}{\mu}=c\lim_m\nrm{d^*v_m}{\mu}
  =c\nrm{d^*v_0}{\mu}.
\end{equation}
Consequently \eqref{d*-cnd} holds.
\end{proof}
While \eqref{R0-cnd} is the central point because of its implications
for Theorem~\ref{Brascamp--Lieb-thm}, \eqref{m1-cnd} and \eqref{m1X-cnd} are 
probably the most convenient to verify for a given $\Phi(x)$; but if this
can be done, then $\Ak^{-1}$ and $\Ak|_{X_{k}}^{-1}$ extend automatically,
by the Poincar\'e inequalities \eqref{eqvi-cnd}, \eqref{eqviX-cnd} and
$2^{\circ}$ of Lemma~\ref{Lax--Milgram-lem}, to 
bounded operators from certain spaces of distributions of order $1$ (or of
forms with such coefficients).

Since \eqref{V-cnd}$\implies$\eqref{Zd-cnd}$\implies$\eqref{X-cnd}, it is
clear that \eqref{Zd-cnd} is an intermediate property in comparison with the
two situations described by \eqref{m1-cnd}--\eqref{V-cnd} on the one hand and
\eqref{X-cnd}--\eqref{m1X-cnd} on the other. 

Actually condition \eqref{Zd-cnd} is equivalent to exactness of the
$d$-complex at $L^2(\mu,\wedge^k\Cn)$: $Z(d_k)$ is closed in
$L^2(\mu,\wedge^k\Cn)$ because $d$ has the maximal domain there and is
continuous in $\cal D'$, so Theorem~\ref{CR-thm} with $F_1=Z(d_k)$ shows that
\eqref{Zd-cnd} holds if and only if $R(d_{k-1})=F_1$.

Injectiveness of $\Ak$ is furthermore a consequence of \eqref{Zd-cnd}. For by
the Lax--Milgram definition of $\Ak$, cf.~\eqref{ak-eq},
\begin{equation}
  Z(\Ak)= Z(d^*_{k-1})\cap Z(d_k),
\end{equation}
so \eqref{Zd-cnd} implies $Z(\Ak)=\{0\}$. In addition, if \eqref{X-cnd} holds
but \eqref{Zd-cnd} doesn't, then there is some $v\in Z(d_k)\setminus
R(d_{k-1})$; writing this as $v=x+x^\perp$ with $x\in X_k$ and 
$x^\perp\in X_k^\perp$, then $x\in R(d_{k-1})$ since \eqref{X-cnd}
implies \eqref{Rd-cnd}, whence $0\ne x^\perp\in Z(d_k)\cap
Z(d^*_{k-1})$. So when $\Ak$ is injective, then either \eqref{Zd-cnd},
i.e.~exactness, holds or $R(A_{k-1}|_{Z(d)^\perp})$ is unclosed.

\subsection{Proof preparations}
As mentioned, \eqref{eqvi-cnd} and \eqref{eqviX-cnd}  
imply the extendability of $\AA^{-1}$ and $\AA|_{X}^{-1}$ to larger spaces than
just the $L^2$-forms, which is crucial for
Theorem~\ref{Brascamp--Lieb-thm}: 

\begin{cor}
  \label{V'-cor}
$1^\circ$ Let \eqref{eqvi-cnd} in Theorem~\ref{ran-thm} hold. 
One has then $\Ak\subset\cal A_k$, when $\cal A_k\colon
V_k\overset{\sim}{\longrightarrow} V'_k$ is the linear isometry  from
$(V_k,a_k(\cdot,\cdot))$ onto its dual $V'$ given by
\begin{equation}
 \dual{\cal A_kv}{\cdot}_{V'_k\times V_k}=a_k(\cdot,\overline{v})
  \quad\text{for}\quad v\in V_k;
  \label{V'-eq}
\end{equation}
and $D(\Ak)=\cal A_k^{-1}(H_k)$ holds.

$2^\circ$
When \eqref{eqviX-cnd} holds, then $\Ak|_{X_k} \subset \tilde{\cal A}_k$, when 
$\tilde{\cal A}_k$ is the isomorphism
$\tilde V_k\overset{\sim}{\longrightarrow} (\tilde V_k)'$,
with $\tilde V_k$ denoting $V_k\cap X_k$ normed by
$\scal{d^*v}{d^*v}_{\mu}^{1/2}$. 
\end{cor}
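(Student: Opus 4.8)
The plan is to read both assertions as instances of the Lax--Milgram correspondence of Lemma~\ref{Lax--Milgram-lem}, applied to the two triples $(H_k,V_k,a_k)$ and $(X_k,\tilde V_k,\tilde a_k)$ that define $\Ak$ and its restriction $\Ak|_{X_k}$; the hypotheses \eqref{eqvi-cnd} and \eqref{eqviX-cnd} are precisely what is needed to bring these forms under the positive-definite case $2^\circ$ of that lemma, so that nothing beyond them enters.

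For $1^\circ$, I would first note that \eqref{eqvi-cnd} merely says that the bounded symmetric form $a_k$ satisfies $a_k(v,v)\ge\gamma\nrm{v}{V_k}^2$ on $V_k$ for some $\gamma>0$; hence $a_k$ is positive definite there, $a_k(\cdot,\cdot)^{1/2}$ is a norm equivalent to $\nrm{\cdot}{V_k}$, and $(V_k,a_k(\cdot,\cdot))$ is a Hilbert space while the injection $V_k\hookrightarrow H_k$ stays bounded and dense (density by Lemma~\ref{dns-lem}). Since the operator attached to $(H_k,V_k,a_k)$ by \eqref{Alm-eq}--\eqref{Alm'-eq} is by definition $\Ak$, part $2^\circ$ of Lemma~\ref{Lax--Milgram-lem} applies (the form being $V_k$-coercive with no lower-order term) and, after the routine translation between dual and antidual effected by the conjugation built into \eqref{V'-eq}, yields exactly the claim: $\Ak$ extends to the isometry $\cal A_k$ of $(V_k,a_k(\cdot,\cdot))$ onto $V_k'$ determined by \eqref{V'-eq}, and $D(\Ak)=\cal A_k^{-1}(H_k)$.

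For $2^\circ$ I would set $\tilde a_k(v,w):=\scal{d^*v}{d^*w}_\mu$ and $\tilde V_k:=V_k\cap X_k$. As the intersection of $V_k$ with the closed subspace $X_k\subset H_k$, $\tilde V_k$ is closed in $V_k$, hence complete, and it is dense in $X_k$ since any $x\in X_k$ is the $H_k$-limit of some $v_m\in V_k$, hence of $P_kv_m\in\tilde V_k$ by Lemma~\ref{X-lem}. Because $d$ vanishes on $X_k\subset Z(d_k)$, the form $a_k$ restricts to $\tilde a_k$ on $\tilde V_k$, and \eqref{eqviX-cnd} makes $\tilde a_k$ positive definite with $\tilde a_k(\cdot,\cdot)^{1/2}$ equivalent to $\nrm{\cdot}{V_k}$ there. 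It then remains to identify the operator attached to $(X_k,\tilde V_k,\tilde a_k)$ with $\Ak|_{X_k}$: if $u\in\tilde V_k$ and $\tilde a_k(u,w)=\scal{f}{w}_\mu$ for some $f\in X_k$ and all $w\in\tilde V_k$, then for arbitrary $w'\in V_k$ one writes $w'=P_kw'+(1-P_k)w'$ and, using $dP_k\equiv0$, $d^*(1-P_k)\equiv0$ (cf.~Lemma~\ref{X-lem} and \eqref{XZ-id}) and $du=0$, obtains $a_k(u,w')=\scal{d^*u}{d^*P_kw'}_\mu=\scal{f}{P_kw'}_\mu=\scal{f}{w'}_\mu$; hence $u\in D(\Ak)$ with $\Ak u=f\in X_k$, so $u\in D(\Ak|_{X_k})$, while conversely $\Ak|_{X_k}$ is contained in this operator by Lemma~\ref{X-lem} and \eqref{Alm-eq}. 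Part $2^\circ$ of Lemma~\ref{Lax--Milgram-lem}, applied to $(X_k,\tilde V_k,\tilde a_k)$, now gives $\Ak|_{X_k}\subset\tilde{\cal A}_k$ with $\tilde{\cal A}_k\colon\tilde V_k\overset{\sim}{\longrightarrow}(\tilde V_k)'$ the isomorphism determined by $\tilde a_k$.

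The substance of the argument lies entirely in $2^\circ$, and the main obstacle is the bookkeeping there: verifying that passing to the $\Ak$-invariant subspace $X_k$ commutes with the form construction --- i.e.\ that the Lax--Milgram operator of the restricted triple is $\Ak|_{X_k}$ itself and not a proper extension --- together with the completeness and density of $\tilde V_k$ in $X_k$. All of this is delivered by the invariance relations $P_kV_k\subset V_k$, $P_k\Ak\subset\Ak P_k$ of Lemma~\ref{X-lem} and the identities $dP_k\equiv0$, $d^*(1-P_k)\equiv0$ coming from \eqref{XZ-id}; everything else reduces to the standard fact that a bounded, symmetric, coercive form with no lower-order term falls under case $2^\circ$ of Lemma~\ref{Lax--Milgram-lem}.
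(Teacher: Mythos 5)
Your argument is correct and follows essentially the same route as the paper: part $1^\circ$ is a direct application of Lemma~\ref{Lax--Milgram-lem}\,$2^\circ$ to the triple $(H_k,V_k,a_k)$, and part $2^\circ$ identifies $\Ak|_{X_k}$ with the Lax--Milgram operator of $(X_k,\tilde V_k,\scal{d^*\cdot}{d^*\cdot}_{\mu})$, using $P_k$ from Lemma~\ref{X-lem} for completeness and density and the identities $d^*(1-P_k)\equiv0$, $d|_{X_k}\equiv0$ for the two inclusions. Your write-up is merely more explicit than the paper's terse proof; no step is missing.
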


Observe that when also $H_k\simeq H'_k$ in \emph{this} manner (i.e.\
$g\mapsto \scal{\cdot}{\overline{g}}_{\mu}$ instead of $\scal{\cdot}{g}$), then
there are inclusions 
$V_k\overset{\iota}{\hookrightarrow} H_k\overset{\iota'}{\hookrightarrow}V'_k$ 
with dense ranges and $\iota'$ equal to the transpose of~$\iota$:
\begin{equation}
  \dual{\iota'f}{v}_{V'_k\times V_k}=
  \scal{\iota v}{\overline{f}}_{\mu}=\scal{f}{\overline{\iota v}}_{\mu} 
  \quad\text{for all $f\in H_k$, $v\in V_k$}.
  \label{VHV'-eq}
\end{equation}
Similarly $\tilde V_k \overset{\iota}{\hookrightarrow}
X_k \overset{\iota'}{\hookrightarrow} (\tilde V_k)'$. Thus it is
meaningful to state the last part of $1^\circ$ or the corresponding fact
that $D(\Ak|_{X_k})=\tilde{\cal A}_k^{-1}(X_k)$. 

\begin{proof}
$a_k(u,\overline{w})=\scal{\Ak u}{\overline{w}}_\mu$ $\forall w\in V_k$, 
so \eqref{VHV'-eq} gives $1^\circ$.
For $2^\circ$, identify $\Ak|_{X_k}$ with the operator $\tilde\Ak$
defined from $(X_k,\tilde V_k, \scal{d^*\cdot}{d^*\cdot}_{\mu})$;
completeness of $\tilde V_k$ and denseness carry over from $(H_k,V_k,a_k)$
by means of $P_k$ in Lemma~\ref{X-lem}.  
While $\Ak|_{X_k}\subset \tilde{\Ak}$ is clear, any $v\in D(\tilde\Ak)$
gives $\scal{\tilde\Ak v}{w}_{\mu}=\scal{d^*v}{d^*w}_{\mu}$ for all $w$ in
$V_k$ since $\tilde\Ak v\perp (1-P_k)w$ and
$d^*(1-P_k)w=0$. Since $v\in X_k$ we first get $\scal{\tilde\Ak
v}{w}_{\mu}=a_k(v,w)$, thereafter $\tilde\Ak\subset\Ak$. 
\end{proof}

When applying this we shall need that $V'_k$ or $\tilde V'_k$ can receive the
image $d(H_{k-1})$.
To establish this it is necessary to make a precise identification of $H_k$,
$V'_k$ and $\tilde V'_k$ with subspaces of $\cal D'(\Rn,\wedge^k\Cn)$, which
roughly speaking consists of forms $\sum{}'u_J\,dz^J$ with
$u_J\in\cal D'(\Rn)$. 

In Appendix~\ref{curr-app} below this is introduced concisely by means of a
direct approach based on the finite dimension of $\wedge^k\Cn$ and the
simplicity of the manifold $\Rn$.
This should provide the reader with an alternative to the general and
vast expositions of G.~de~Rham~\cite{Rham55} and L.~Schwartz \cite{Swz59}
and \cite[Ch.~9]{Swz66}. 

By the continuity of $\cal J$ in \eqref{J-eq}, the linear form
$\Lambda_f(\varphi):= \sum{}'\int f_J\varphi_J\,dx$,  
defined on $\varphi\in C^\infty_0(\Rn,\wedge^k\Cn)$, gives injections 
$L^2(\mu,\wedge^k\Cn)\subset L^2_{\loc}(\wedge^k\Cn)
\overset{\Lambda}{\hookrightarrow}\cal D'(\Rn,\wedge^k\Cn)$ for which 
\begin{equation}
  \dual{\Lambda_f}{\varphi}_{\cal D'\times C^\infty_0}
  = \int_{\Rn} f\varphi\,dx=\scal{f}{\overline{e^{\Phi}\varphi}}_{\mu}=
  \dual{\iota' f}{e^{\Phi}\varphi}_{V'_k\times V_k},
  \label{Hembd-eq}
\end{equation}
so $\Lambda$ is the transpose of $M_{e^{-\Phi}}\colon
C^\infty_0(\Rn,\wedge^k\Cn)\to H_k$, multiplication by $e^{-\Phi}$. 

From the last identity above $\Lambda\colon H_k\to\cal D'$ is seen to be
continuous in the topology induced by $V'_k$, hence it extends to $V'_k$ by the
density of $H_k\subset V'_k$:
\begin{prop}
  \label{V'D'-prop}
The operator $\Lambda$ introduced above \eqref{Hembd-eq} extends by continuity to an embedding
$\Lambda\colon V'_k\hookrightarrow \cal D'(\Rn,\wedge^k\Cn)$ and for this it
holds that
\begin{equation}
  \dual{v'}{e^{\Phi}\varphi}_{V'_k\times V_k}=
  \dual{v'}{\varphi}_{\cal D'\times C^\infty_0}.
  \label{Vdual-eq}
\end{equation}
for every $v'\in V'_k$ and $\varphi\in C^\infty_0(\Rn,\wedge^k\Cn)$ 
(when $\Lambda$ is suppressed).
\end{prop}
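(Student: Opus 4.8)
The plan is to establish Proposition~\ref{V'D'-prop} by a straightforward density argument, reducing everything to the already-verified identity \eqref{Hembd-eq} for the images of $L^2$-forms. First I would recall the injection $\Lambda\colon H_k\hookrightarrow\cal D'(\Rn,\wedge^k\Cn)$ and the relation $\dual{\Lambda_f}{\varphi}_{\cal D'\times C^\infty_0}=\dual{\iota' f}{e^{\Phi}\varphi}_{V'_k\times V_k}$, which is \eqref{Hembd-eq}. The crucial point is that the right-hand side depends on $f\in H_k$ only through $\iota' f\in V'_k$ and is manifestly continuous with respect to the $V'_k$-norm on $H_k$ (for each fixed $\varphi$, this is just evaluation of a functional against the fixed element $e^{\Phi}\varphi\in V_k$). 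Hence $\Lambda$, viewed as a map $H_k\to\cal D'$, is continuous when $H_k$ carries the topology induced from the inclusion $\iota'\colon H_k\hookrightarrow V'_k$ of \eqref{VHV'-eq}.

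Next I would invoke the density of $H_k$ in $V'_k$: since $V_k\hookrightarrow H_k$ has dense range and $\iota'$ is its transpose, $\iota'(H_k)$ is dense in $V'_k$ (this is the standard Gelfand-triple fact, already used just above the proposition). Therefore $\Lambda$ extends uniquely by continuity to a linear map $\Lambda\colon V'_k\to\cal D'(\Rn,\wedge^k\Cn)$. Passing to the limit in \eqref{Hembd-eq} along a sequence (or net) $f_m\in H_k$ with $\iota' f_m\to v'$ in $V'_k$ yields, for every fixed $\varphi\in C^\infty_0(\Rn,\wedge^k\Cn)$,
\begin{equation}
  \dual{v'}{e^{\Phi}\varphi}_{V'_k\times V_k}
  =\lim_m\dual{\iota' f_m}{e^{\Phi}\varphi}_{V'_k\times V_k}
  =\lim_m\dual{\Lambda_{f_m}}{\varphi}_{\cal D'\times C^\infty_0}
  =\dual{\Lambda v'}{\varphi}_{\cal D'\times C^\infty_0},
\end{equation}
which is exactly \eqref{Vdual-eq} once $\Lambda$ is suppressed from the notation.

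It remains to check that this extended $\Lambda$ is \emph{injective}, so that it is genuinely an embedding and $V'_k$ may be identified with a subspace of $\cal D'$. Here I would argue that if $\Lambda v'=0$, then $\dual{v'}{e^{\Phi}\varphi}_{V'_k\times V_k}=0$ for all $\varphi\in C^\infty_0(\Rn,\wedge^k\Cn)$ by \eqref{Vdual-eq}; since multiplication by $e^{\Phi}$ maps $C^\infty_0(\Rn,\wedge^k\Cn)$ into $V_k$ and, by Lemma~\ref{dns-lem}, the image $e^{\Phi}C^\infty_0(\Rn,\wedge^k\Cn)$ is dense in $V_k$ (the cut-off--convolution construction produces such test forms, and multiplying by the smooth positive $e^{\Phi}$ is a homeomorphism of $C^\infty_0$), we conclude $v'=0$ in $V'_k$. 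I expect this injectivity step — specifically, pinning down that $e^{\Phi}C^\infty_0(\Rn,\wedge^k\Cn)$ is dense in $V_k$ in the $V_k$-norm — to be the only genuinely delicate point; everything else is formal functional analysis. If one prefers to avoid that density claim, an alternative is to note that $\Lambda$ already embeds $H_k$, that $H_k$ is dense in $V'_k$, and that the $\cal D'$-topology restricted to $\Lambda(H_k)$ is coarser than the $V'_k$-topology, so the continuous extension remains injective precisely when $\Lambda$ separates points of $V'_k$, which \eqref{Vdual-eq} guarantees via the density of $e^{\Phi}C^\infty_0$ in $V_k$.
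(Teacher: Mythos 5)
Your argument is correct and is essentially the paper's own proof: extend $\Lambda$ by continuity from the dense image of $H_k$ in $V'_k$ using the rightmost identity in \eqref{Hembd-eq}, pass to the limit to obtain \eqref{Vdual-eq}, and deduce injectivity from the dense range of $M_{e^{\Phi}}\colon C^\infty_0(\Rn,\wedge^k\Cn)\to V_k$. The only cosmetic slip is calling $e^{\Phi}$ smooth: under \eqref{phi-cnd} it is merely $C^2$, so $e^{\Phi}\varphi$ lies in $C^2_0$ rather than $C^\infty_0$, but the needed density of $e^{\Phi}C^\infty_0(\Rn,\wedge^k\Cn)$ in $V_k$ still follows from Lemma~\ref{dns-lem} together with mollification of $e^{-\Phi}\psi$ for $\psi\in C^\infty_0$.
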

\begin{proof}
By taking closures, \eqref{Vdual-eq} clearly follows from the left- and rightmost parts
of \eqref{Hembd-eq}. Because $M_{e^{\Phi}}$ has dense range in $V'_k$, the
extended $\Lambda$ is an injection.
\end{proof}
The point of this proposition and \eqref{Vdual-eq} is of course to note the factor $e^{\Phi}$.

From the boundedness of $d^*\colon V_k\to H_{k-1}$ follows the existence of a
bounded transpose $d^{*\prime}\colon H_{k-1}\to V'_k$, and by means of
\eqref{Vdual-eq} and \eqref{d*-id} this is seen to be a realisation of the
distributional differential $d$: indeed for $f\in H_{k-1}$ and elements of
the dense subset $C^2_0\subset V_k$ of the form $e^{\Phi}\varphi$ with
$\varphi$ in $C^\infty_0$,
\begin{equation}
  \begin{split}
  \dual{d^{*\prime}f}{e^{\Phi}\varphi}_{V'_k\times V_k}
  &= \int \sum_{J}^{}{}'f_J \sum_{j=1}^n
  (\Phi'_j-\partial_j)(e^{\Phi}\varphi_{jJ}) e^{-\Phi}\,dx
\\
  &= \sum_{j,J} \dual{\partial_j f_J}{\varphi_{jJ}}
   = \dual{d_{k-1}f}{\varphi},  
  \end{split}
\end{equation}
where the last identity uses \eqref{JJ'-id}, \eqref{dd-id}; by
\eqref{Vdual-eq} this means that $d^{*\prime}f=df$.

The space $\tilde V'_k$ is normed by $\norm{\cdot}{V_k}$ and moreover 
a closed subspace of $V'_k$; indeed $\tilde V'_k=P'_k V'_k$
because $P_k\in\B(V_k)$ (the last fact follows from
$d^*(1-P_k)\equiv0$).  

For one thing this gives an embedding 
$\tilde V'_k\hookrightarrow\cal D'$ by the above construction for $V'_k$,
and for another that $d(H_{k-1})\subset \tilde V'_k$. Indeed, $d^*_{k-1}$
restricts to a continuous map $\tilde V_k\to H_{k-1}$, so $d_{k-1}$ extends
to a bounded operator $H_{k-1}\to \tilde V'_k$; however, this is not a
proper extension since $\tilde V'_k\subset V'_k$.
Altogether we have:

\begin{prop}
  \label{dV'-prop}
The distributional differential is bounded $d\colon H_{k-1}\to V'_k$; it is
the transpose of $d^*\colon V_k\to H_{k-1}$, so for $f\in H_{k-1}$ and
$\varphi\in C^\infty_0(\Rn,\wedge^k\Cn)$ 
\begin{equation}
  \dual{df}{\varphi}_{\cal D'\times C^\infty_0}=
  \dual{df}{e^{\Phi}\varphi}_{V'_k\times V_k}=
  \scal{f}{d^*(e^{\Phi}\varphi)}_{\mu}.
\end{equation}
Moreover, the range $d(H_{k-1})$ is contained in the subspace
$(V_k\cap X_k)'=P'_k V'_k$.
\end{prop}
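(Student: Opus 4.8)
The plan is to deduce the whole statement from the boundedness of $d^{*}=d^{*}_{k-1}\colon V_k\to H_{k-1}$ already recorded in Section~\ref{Ak-sect} (a consequence of \eqref{Vnrm-eq}--\eqref{ak-bd}). First I would pass to the Banach-space transpose $d^{*\prime}\colon H_{k-1}\to V'_k$, which is bounded and which, under the Riesz identifications of \eqref{VHV'-eq}, satisfies $\dual{d^{*\prime}f}{v}_{V'_k\times V_k}=\scal{f}{d^{*}v}_{\mu}$ for all $f\in H_{k-1}$ and $v\in V_k$. To recognise $d^{*\prime}$ as the distributional differential $d_{k-1}$ it suffices to test against forms $v=e^{\Phi}\varphi$ with $\varphi\in C^\infty_0(\Rn,\wedge^k\Cn)$: each such $v$ lies in $V_k$ (being $C^2$ with compact support, its $d$ and $d^{*}$ are $C^1$ with compact support, hence in $L^2(\mu)$), so by \eqref{Vdual-eq} of Proposition~\ref{V'D'-prop} the left-hand side becomes $\dual{d^{*\prime}f}{\varphi}_{\cal D'\times C^\infty_0}$. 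On the right-hand side, \eqref{d*-id} shows that the $L$-component of $d^{*}(e^{\Phi}\varphi)$ equals $-e^{\Phi}\sum_{j}\partial_j\varphi_{jL}$ (the $\Phi'$-terms cancel the differentiated exponential), so an integration by parts, justified exactly as after \eqref{d*-eq}, together with the anti-symmetry identities \eqref{JJ'-id}, \eqref{dd-id} of Appendix~\ref{curr-app} turns $\scal{f}{d^{*}(e^{\Phi}\varphi)}_{\mu}$ into $\dual{d_{k-1}f}{\varphi}_{\cal D'\times C^\infty_0}$. Hence $d^{*\prime}f=d_{k-1}f$ in $\cal D'(\Rn,\wedge^k\Cn)$, which is the displayed chain of equalities together with the assertion that $d\colon H_{k-1}\to V'_k$ is bounded and is the transpose of $d^{*}\colon V_k\to H_{k-1}$.

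For the last sentence of the proposition I would first check that the orthogonal projection $P_k$ onto $X_k$ is bounded on $V_k$ with norm $\le 1$: by \eqref{XZ-id} one has $P_kV_k\subset V_k$ (Lemma~\ref{X-lem}), while $d_kP_kv=0$ since $X_k\subset Z(d_k)$ and $d^{*}_{k-1}(1-P_k)v=0$ since $X_k^{\perp}=Z(d^{*}_{k-1})$; therefore $\nrm{P_kv}{V_k}^{2}=\nrm{P_kv}{\mu}^{2}+\nrm{d^{*}_{k-1}v}{\mu}^{2}\le\nrm{v}{V_k}^{2}$. Consequently $P'_k$ is a bounded idempotent on $V'_k$, and since $V_k\cap X_k=P_kV_k$ the dual $(V_k\cap X_k)'$ is canonically the closed subspace $P'_kV'_k$ of $V'_k$, namely the annihilator of $(1-P_k)V_k$. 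Finally, applying the transpose identity of the first paragraph to the argument $(1-P_k)w$ for $w\in V_k$ gives $\dual{df}{(1-P_k)w}_{V'_k\times V_k}=\scal{f}{d^{*}_{k-1}(1-P_k)w}_{\mu}=0$, because $(1-P_k)w\in X_k^{\perp}=Z(d^{*}_{k-1})$; hence $df\in P'_kV'_k=(V_k\cap X_k)'$ for every $f\in H_{k-1}$, as claimed.

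I do not expect a genuinely hard step here: once the identifications of Corollary~\ref{V'-cor}, Proposition~\ref{V'D'-prop} and Appendix~\ref{curr-app} are in place, the argument is transpose bookkeeping. The step needing the most care is the explicit computation identifying $d^{*\prime}f$ with $d_{k-1}f$ in $\cal D'$ --- keeping track of the factor $e^{\Phi}$, of the conjugations in the Riesz identifications \eqref{VHV'-eq}, and of the anti-symmetry relations \eqref{JJ'-id}, \eqref{dd-id} --- but this is precisely the calculation already carried out in the paragraph immediately preceding the statement, so in practice it only needs to be cited.
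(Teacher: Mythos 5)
Your proposal is correct and follows essentially the same route as the paper: take the Banach-space transpose of the bounded map $d^*\colon V_k\to H_{k-1}$, identify it with the distributional $d_{k-1}$ by testing against $e^{\Phi}\varphi$ via \eqref{Vdual-eq} and \eqref{d*-id} (the paper carries out exactly the cancellation-and-integration-by-parts computation you describe, in the display preceding the statement), and deduce the range inclusion from $d^*_{k-1}(1-P_k)\equiv 0$ and $P_k\in\B(V_k)$. Your direct verification that $df$ annihilates $(1-P_k)V_k$ is just a slightly more explicit phrasing of the paper's remark that $d_{k-1}$ already maps boundedly into $\tilde V'_k=P'_kV'_k$.
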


\begin{rem}
Considering $\Rn$ as a manifold, it would be possible
to use the $C^2$-density furnished by the measure $\mu=e^{-\Phi}\,dx$ (see
e.g.\ \cite[Ch.~6]{H} for the notions), but it is preferable to use the
Lebesgue integral, for this gives an extension of the usual embeddings, such
as $L^2(\mu)\subset L^2_{\op{\loc}}(\Rn)\hookrightarrow \cal D'(\Rn)$.
\end{rem}

\subsection{Continuation of Proofs}
  \label{prf-ssect}
When $R(\A)$ is closed the implication
\eqref{R0-cnd}$\implies$\eqref{eqviX-cnd} of Theorem~\ref{ran-thm}
allows a renorming of $\tilde V:=V\cap X$ such that
$\AA|_{X}\subset\tilde{\cal A}_1$; cf.~$2^\circ$ of
Corollary~\ref{V'-cor}. From Proposition~\ref{dV'-prop} it follows that
$d^*\tilde{\cal A}_1^{-1} d$ is bounded 
\begin{equation}
  L^2(\mu)\longrightarrow (\tilde V)'\longrightarrow
  \tilde V \longrightarrow L^2(\mu),
\end{equation}
and it coincides with $d^*\AA^{-1}d$ in $H^1(\mu)$ by the remark after
\eqref{VHV'-eq} and therefore with 
$u\mapsto u-\int u\,d\mu$ by \eqref{P'-eq}; extension by continuity
gives \eqref{P-eq} for all $u\in L^2(\mu)$. Closure of
\eqref{Brascamp--Lieb-eq} similarly yields \eqref{Brascamp--Lieb-id}:
indeed, for $g_j\in H^1(\mu)$ one can take $v=\AA|_{X_1}^{-1}dg_1$ and
$f=d\overline{g_2}$ in formula \eqref{VHV'-eq} so \eqref{Brascamp--Lieb-eq} (and the
obvious interpretation of gradients as differentials) gives
\begin{equation}
  \begin{split}
  \scal{\AA^{-1}\nabla g_1}{\nabla g_2}_{\mu} &=
  \scal{\AA|_{X_1}^{-1}dg_1}{dg_2}_{\mu}= 
  \dual{d\overline{g_2}}{\tilde{\cal A}_1^{-1}dg_1}_{\tilde V'\times \tilde V}
\\ 
  &=  \scal{Pg_1}{Pg_2}_{\mu}.
  \end{split}
\end{equation}
The last equality extends to all $g_j$ in $L^2(\mu)$ in view of the density
of $H^1(\mu)$ and the continuity of $d\colon L^2(\mu)\to \tilde V'$ and
$\tilde{\cal A}_1^{-1}d\colon L^2(\mu)\to \tilde V$; cf.~Proposition~\ref{dV'-prop}.

\bigskip

For the Brascamp--Lieb inequality \eqref{Brascamp--Lieb-eq'} with $f\in
L^2(\mu)\cap H^1_{\loc}$, \eqref{Brascamp--Lieb-lb} still shows that $\AA>0$
in the uniformly strictly convex case; from
\eqref{m1-cnd}$\implies$\eqref{R0-cnd} of Theorem~\ref{ran-thm} it follows 
that \eqref{Brascamp--Lieb-id} is available for $g_j=f$. From
\eqref{m1-cnd}$\implies$\eqref{eqvi-cnd} we see that 
$1^\circ$ of Corollary~\ref{V'-cor} applies. Since $\cal A_1$ is an isometry,
$a_1(\cal A_1^{-1}\cdot,\cal A_1^{-1}\cdot)$
is an inner product on $V'$ inducing the norm, so for $v\in V'$,
\begin{equation}
  \dual{v}{\overline{\cal A_1^{-1}v}}_{V'\times V}=\norm{v}{V'}^2
  =\sup\bigl\{\,\frac{|\dual{v}{\overline{w}}|^2}{
   a_1(w,w)} \bigm| w\in C^\infty_0(\Rn,\wedge^1\Cn)\setminus\{0\}\,\bigr\}.
\end{equation}
Because of the $H^1_{\loc}$-condition, $(\nabla
f)^T(\Phi'')^{-1}\overline{\nabla f}$ is a well defined function belonging
to $L^1_{\loc}$; if it has finite integral, then $v=df$ is in
$D(\Phi'')\subset L^2(\mu,\wedge^1\Cn)$ so that \eqref{Brascamp--Lieb-lb} may be invoked
as in the argument for \eqref{APhi-eq}, which hence also holds in this case.  

When $\Phi$ is merely strictly convex, the reduction to the uniform case
carries over verbatim.

\section{Criteria for Closed Range}   \label{inv-sect}
Because $Z(\A)$ has finite dimension, the closed-range requirement in
\eqref{A-cnd} is satisfied when $0\notin\sigma_{\ess}(\A)$,
which holds when $\Phi(x)$ is well behaved near $\infty$:
\begin{prop}
  \label{AF-prop}
If $\Phi$ in addition to \eqref{phi-cnd} satisfies
\begin{Rmlist}
  \item   \label{infty-cnd}
$\frac12|\Phi'|^2 \mlap \Phi \ge c>0 $ in a neighbourhood of $\infty$,
\end{Rmlist}
then $0\notin \sigma_{\ess}(\A)$, so \eqref{A-cnd} and the projection identity
\eqref{Brascamp--Lieb-eq} hold. 
\end{prop}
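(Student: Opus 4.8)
The plan is to pass from $\A$ to the unitarily equivalent Witten--Laplacian on functions and then read off the bottom of the essential spectrum by a Persson--Agmon argument. By Section~\ref{unit-ssect} the multiplication $U=M_{e^{-\Phi/2}}\colon L^2(\mu)\to L^2(\Rn)$ is unitary and $U\A U^*$ acts as $\W=\mlap+V$ with $V=\tfrac14|\Phi'|^2-\tfrac12\lap\Phi$; since $\Phi\in C^2(\Rn,\R)$ the potential $V$ is continuous, and hypothesis~\pbref{infty-cnd} says precisely that $2V=\tfrac12|\Phi'|^2\mlap\Phi\ge c$ outside some compact set $K_0$, so $V\ge\tfrac{c}{2}$ there and $V$ is bounded below on all of $\Rn$. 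Moreover the form domain of $\W$ equals $UH^1(\mu)$ and contains $C^\infty_0(\Rn)$ (for $\varphi\in C^\infty_0(\Rn)$ one has $e^{\Phi/2}\varphi\in H^1(\mu)$ since $\Phi'$ is continuous and $\varphi$ compactly supported).

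First I would invoke the Persson--Agmon formula for such Schr\"odinger operators (see P.~Bolley, M.~Dauge and B.~Helffer \cite{BoDaHe89}):
\begin{equation*}
  \inf\sigma_{\ess}(\W)=\sup_{K}\,\inf\Set{\|\nabla u\|^2+\int_{\Rn}V(x)|u(x)|^2\,dx}{u\in C^\infty_0(\Rn\setminus K),\ \|u\|=1},
\end{equation*}
the supremum being over compact subsets $K\subset\Rn$. Taking $K\supset K_0$, every admissible $u$ satisfies $\|\nabla u\|^2+\int V|u|^2\ge\tfrac{c}{2}\|u\|^2$, so $\inf\sigma_{\ess}(\W)\ge\tfrac{c}{2}>0$; by unitary equivalence $0\notin\sigma_{\ess}(\A)$. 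As recalled just before the statement, $Z(\A)=\C$ is finite-dimensional, so $0\notin\sigma_{\ess}(\A)$ forces $R(\A)$ to be closed (cf.\ the description of $\sigma_{\ess}$ in Section~\ref{notation-sect}); this is \eqref{A-cnd}, whence Proposition~\ref{Brascamp--Lieb-prop} yields the projection identity \eqref{Brascamp--Lieb-eq}.

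The step I expect to require the most care is the application of the Persson--Agmon formula in this low-regularity setting: because $\Phi$ is only $C^2$ the unitary $e^{-\Phi/2}$ does not preserve $C^\infty_0(\Rn)$, so $\W$ is not literally the customary self-adjoint realisation of $-\Delta+V$ on $C^\infty_0(\Rn)$ but only a form-theoretic one whose form domain happens to contain $C^\infty_0(\Rn)$, and the potential $V$ is merely continuous and bounded below rather than smooth. Checking that the essential-spectrum characterisation above remains valid under exactly these hypotheses is what the cited results of Bolley--Dauge--Helffer supply; alternatively one could argue directly with $\A$ on $L^2(\mu)$ via the Weyl-sequence criterion of Section~\ref{notation-sect} together with a partition-of-unity (IMS-type) localisation adapted to the weight $e^{-\Phi}$, but passing to $\W$ is the more economical route.
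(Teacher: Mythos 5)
Your proposal is correct and takes essentially the same route as the paper: unitary conjugation to $\W=B_0=\mlap+\tfrac14|\Phi'|^2-\tfrac12\lap\Phi$, the Persson--Agmon formula, and the lower bound $c/2$ on the potential outside $K_0$, giving $\inf\sigma_{\ess}\ge c/2>0$ and hence closed range. The only difference is how the "step requiring the most care" is settled: the paper notes that $B_0$ is essentially self-adjoint on $C^\infty_0(\Rn)$ by Kato's theorem \cite{Kat73}, so the Persson--Agmon formula applies without ambiguity about which self-adjoint realisation is meant, whereas you appeal to Bolley--Dauge--Helffer or a direct IMS localisation for the same purpose.
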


\begin{proof} 
$\A$ is equivalent to
$B_0=\mlap+\tfrac{1}{4}|\Phi'|^2-\tfrac{1}{2}\lap\Phi$, cf.\ 
Section~\ref{unit-ssect}. The latter is essentially self-adjoint by Kato's
result \cite{Kat73}, so the Persson--Agmon formula \cite{Per60},
\cite[Thm.~3.2]{Agm78} may without ambiguity be used for the estimate:
\begin{equation}
  \begin{split}
  \inf \sigma_{\ess}(\A)&=
  \sup_{K\subset\subset\Rn} 
  \inf\bigl\{\,\scal{B_0\varphi
    }{\varphi} \bigm| \varphi\in C^\infty_0(\Rn\setminus K),\quad
   \|\varphi\|=1\,\bigr\}
\\
  &\ge  
  \inf\bigl\{\,\scal{(\tfrac{1}{4}|\Phi'|^2-\tfrac{1}{2}\lap\Phi)\varphi
    }{\varphi} \bigm| \varphi\in C^\infty_0(\Rn\setminus K_0),\
   \|\varphi\|=1\,\bigr\}
\\
  &\ge c/2 >0  
  \end{split}
\end{equation}
when \eqref{infty-cnd} holds in $\Rn\setminus K_0$. 
This yields \eqref{A-cnd}.  
\end{proof}

In addition, \eqref{infty-cnd} combined with a growth condition implies
the stronger fact that $\sigma_{\ess}(\A)=\emptyset$, as shown 
below. Note that whenever $0<\eta<1$ and
$\theta\in\,]\eta,1[$, 
\begin{equation}
  \theta|\Phi'|^2\mlap\Phi=  (\eta|\Phi'|^2\mlap\Phi) +(\theta-\eta)|\Phi'|^2,
\end{equation}
so if $|\Phi'(x)|\to\infty$ for $|x|\to\infty$
and the first term on the right hand side is known to have finite infimum,
consequently the left hand side tends to $\infty$ for $x\to\infty$. Taking
$\eta=1/2$, this shows that \eqref{infty-cnd} together with
$\lim_{|x|\to\infty} |\Phi'(x)|=\infty$ implies condition~\eqref{grw-cnd}
below. Similarly one finds that \eqref{grw-cnd} is more general than
the results one would obtain from the techniques of
J.-M.~Kneib and F.~Mignot in \cite[Lem.~5]{KnMi} (where the proof contains a minor flaw). 

\begin{prop}
  \label{ess-prop}
If $\Phi$ satisfies the following condition
\begin{Rmlist} \setcounter{Rmcount}{1}
  \item  \label{grw-cnd}
    $\exists \theta\in\,]0,1[\colon \lim_{|x|\to\infty}
    \theta|\Phi'(x)|^2\mlap\Phi(x) =\infty$, 
\end{Rmlist}
then $H^1(\mu)\hookrightarrow L^2(\mu)$ is compact, and consequently
$\sigma_{\ess}(\A)=\emptyset$.
\end{prop}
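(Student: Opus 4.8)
The plan is to reduce the compactness of $H^1(\mu)\hookrightarrow L^2(\mu)$ to a tail estimate governed by the effective potential $W_\theta:=\theta|\Phi'|^2-\lap\Phi$ appearing in \eqref{grw-cnd}. The key ingredient is a one-parameter integration-by-parts identity: for $u\in C^\infty_0(\Rn)$ and $s\in\,]0,1[$, expanding $\int_{\Rn}|\nabla u-s\Phi'u|^2\,d\mu$ and integrating the cross term by parts via $\dv(\Phi'e^{-\Phi})=(\lap\Phi-|\Phi'|^2)e^{-\Phi}$ gives
\begin{equation}
  \nrm{\nabla u}{\mu}^2=\int_{\Rn}|\nabla u-s\Phi'u|^2\,d\mu
    +(s-s^2)\int_{\Rn}|\Phi'|^2|u|^2\,d\mu-s\int_{\Rn}(\lap\Phi)|u|^2\,d\mu .
\end{equation}
Discarding the non-negative first term on the right and choosing $s=1-\theta$ yields
\begin{equation}
  \nrm{\nabla u}{\mu}^2\ge(1-\theta)\int_{\Rn}W_\theta(x)\,|u(x)|^2\,d\mu(x).
\end{equation}
I expect the only genuinely delicate issue to be the \emph{choice} of this route rather than any later step: the naive strategy of conjugating by $e^{-\Phi/2}$ to reach $\mlap+\tfrac14|\Phi'|^2-\tfrac12\lap\Phi$ (as in Section~\ref{unit-ssect}) and then quoting that this potential tends to $\infty$ fails under \eqref{grw-cnd} alone, since for $\theta>\tfrac12$ one only has $\tfrac14|\Phi'|^2-\tfrac12\lap\Phi=\tfrac12W_\theta-\tfrac12(\theta-\tfrac12)|\Phi'|^2$, which need not blow up; staying in $L^2(\mu)$ with the sharp parameter $s=1-\theta$ — exactly the value flagged by the convex-combination remarks preceding the proposition — circumvents this.

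Next, since $\Phi\in C^2$ the function $W_\theta$ is continuous, and by \eqref{grw-cnd} it tends to $\infty$ at infinity; hence $W_\theta$ is bounded below, say $W_\theta\ge-C_0$ with $C_0\ge0$, so that $V:=W_\theta+C_0$ satisfies $V\ge0$ and $V(x)\to\infty$. The displayed lower bound then gives $\int_{\Rn}V|u|^2\,d\mu\le(1-\theta)^{-1}\nrm{\nabla u}{\mu}^2+C_0\nrm{u}{\mu}^2\le C\,\nrm{u}{H^1(\mu)}^2$ for $u\in C^\infty_0(\Rn)$, and this extends to all $u\in H^1(\mu)$ by the density in Lemma~\ref{dns-lem} (case $k=0$, where $D(d_0)=H^1(\mu)$) together with Fatou's lemma along an a.e.\ convergent subsequence. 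Thus $H^1(\mu)$ embeds continuously into $L^2(V\,d\mu)$.

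Now I would run the classical two-step compactness argument. Let $(u_m)$ be bounded in $H^1(\mu)$. Uniform decay at infinity: $\int_{|x|>R}|u_m|^2\,d\mu\le\bigl(\inf_{|x|>R}V\bigr)^{-1}\int_{\Rn}V|u_m|^2\,d\mu\to0$ as $R\to\infty$, uniformly in $m$, because $V\to\infty$. Local precompactness: on each ball $B_R=\{|x|<R\}$ the weight $e^{-\Phi}$ is bounded above and below (being continuous and positive), so $(u_m)$ is bounded in the unweighted $H^1(B_R)$, hence precompact in $L^2(B_R)\simeq L^2(B_R,\mu)$ by Rellich--Kondrachov. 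A diagonal extraction over $R=1,2,\dots$ combined with the uniform decay estimate produces a subsequence that is Cauchy in $L^2(\mu)$; therefore $H^1(\mu)\hookrightarrow L^2(\mu)$ is compact.

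Finally, $\sigma_{\ess}(\A)=\emptyset$ follows by showing $(\A+I)^{-1}$ is compact on $L^2(\mu)$: if $f\in L^2(\mu)$ and $u=(\A+I)^{-1}f\in D(\A)\subset H^1(\mu)$, then testing $a_0(u,w)+\scal{u}{w}_\mu=\scal{f}{w}_\mu$ with $w=u$ gives $\nrm{u}{H^1(\mu)}^2=\scal{f}{u}_\mu\le\nrm{f}{\mu}\nrm{u}{\mu}$, so $(\A+I)^{-1}\colon L^2(\mu)\to H^1(\mu)$ is bounded; composing with the compact embedding just established shows $(\A+I)^{-1}$ is a compact operator on $L^2(\mu)$, whence $\sigma(\A)$ is discrete and $\sigma_{\ess}(\A)=\emptyset$. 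Apart from the parametrized identity above, every step is routine.
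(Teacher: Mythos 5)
Your proof is correct and follows essentially the same route as the paper: the paper derives the identical weighted estimate $\int(|\Phi'|^2-(1+\varepsilon)\lap\Phi)|u|^2\,d\mu\le C_\varepsilon\nrm{\nabla u}{\mu}^2$ (with $\theta=(1+\varepsilon)^{-1}$) via the vector fields $X_j=\partial_j$, $X_j^*=-\partial_j+\Phi_j'$ and their commutators, which is the same integration by parts as your completing-the-square in $\int|\nabla u-s\Phi'u|^2\,d\mu$, and then runs the same tail-decay-plus-Rellich compactness argument. The only cosmetic difference is at the end, where the paper excludes essential spectrum directly by a singular (Weyl) sequence argument rather than through compactness of $(\A+I)^{-1}$; both are routine.
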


That \eqref{grw-cnd} is sufficient may be proved along the lines of
P.~Bolley, Dauge and Helffer \cite{BoDaHe89}
(even directly, that is without the unitary transformation in
Section~\ref{unit-ssect});
because of this reference's inaccessibility we shall supply the details. 

\begin{proof}
Introducing the vector fields $X_j=\partial_j$ and
their formal adjoints $X^*_j=-\partial_j+\Phi'_j$, one has
when $u\in C^\infty_0(\Rn)$ for their sum and commutator
\begin{equation}
  (X_j+X^*_j)u = \Phi'_j u,\qquad
  [X_j,X^*_j]u = \Phi''_{jj} u.  
\end{equation}
Now it is straightforward to see that 
\begin{align}
  \scal{[X_j,X^*_j]u}{u}_{\mu}&= \nrm{X^*_ju}{\mu}^2- \nrm{X_ju}{\mu}^2  \\
  \nrm{(X_j+X^*_j)u}{\mu}^2&\le
  (1+\tfrac{1}{\varepsilon})\nrm{X_ju}{\mu}^2
  +(1+\varepsilon)\nrm{X^*_ju}{\mu}^2 \quad \forall\varepsilon>0,  
\end{align} 
so that a linear combination of these formulae gives for any $\varepsilon>0$
\begin{equation}
  \scal{(|\Phi'|^2-(1+\varepsilon)\lap\Phi)u}{u}_{\mu}\le
(2+\varepsilon+\varepsilon^{-1}) (\nrm{X_1u}{\mu}^2+\dots+\nrm{X_n u}{\mu}^2).
  \label{comm-eq}
\end{equation}
Because $C^\infty_0(\Rn)$ is dense in $H^1(\mu)$, this inequality is valid
for all $u\in H^1(\mu)$. Indeed, letting
$\mu'=(|\Phi'|^2-(1+\varepsilon)\lap\Phi)\mu$, we infer from 
\eqref{comm-eq} that a fundamental sequence in $H^1(\mu)$ also
converges in $L^2(\Rn,\mu')$, and necessarily to the same limit since
both spaces are embedded into $\cal D'(\Rn)$.

If $u_k\to u$ weakly in $H^1(\mu)$, assumption \eqref{grw-cnd}
implies that $\Psi:=|\Phi'|^2-(1+\varepsilon)\lap\Phi$ is positive in
a neighbourhood of $\infty$ when $\theta=(1+\varepsilon)^{-1}$, so
by \eqref{comm-eq},
\begin{equation}
  \begin{split}
  \int |u_k|^2e^{-\Phi}\,dx &\le \int_{|x|<R} |u_k|^2e^{-\Phi}\,dx +
  \int_{|x|\ge R} \frac{\Psi|u_k|^2}{\inf_{\Rn\setminus B(0,R)}\Psi}e^{-\Phi}\,dx
  \\[2\jot]
  &\le C_\Phi\norm{u_k}{L^2(B(0,R))}^2 +
  \frac{C_\varepsilon \norm{u_k}{H^1(\mu)}^2}{\inf\Set{\Psi(y)}{|y|\ge R}}.
  \end{split}
\end{equation}
Hence \eqref{grw-cnd} and the compactness of
$H^1(B(0,R))\hookrightarrow L^2(B(0,R))$ show that a subsequence of
$u_k$ tends to $0$ in $L^2(\mu)$.

If $\lambda\in\sigma_{\ess}$ there is
$u_k\in D(\A)$ such that $\|u_k\|_{\mu}=1$ while $u_k\to 0$ weakly and
$\|(\A-\lambda)u_k\|_{\mu}\to 0$. Since 
\begin{equation}
  \begin{split}
  \norm{u_k}{H^1(\mu)}&= \norm{u_k}{L^2(\mu)}+ a_0(u_k,u_k)
\\
  &\le 1+\|(\A-\lambda)u_k\|_{\mu}\|u_k\|_{\mu}+|\lambda|\|u_k\|_{\mu}^2
   \le 1+|\lambda| + \cal O(1),
  \end{split}
\end{equation}
the sequence $(u_k)$ is bounded in $H^1(\mu)$, but without convergent
subsequences in $L^2(\mu)$, so the embedding is non-compact. Thus
$\sigma_{\ess}(\A)=\emptyset$ is shown. 
\end{proof}

\section{A Pseudo-differential View Point}
  \label{psd-sect}
As shown in the following, a few extra assumptions on $\Phi(x)$ lead to
domain characterisations,
essential self-adjointness of the $A_j$ and  positivity of $\AA$ (in
addition to closed ranges). 

Actually $C^\infty$-smoothness with a little control of the higher order
derivatives is enough to invoke the calculus in
\cite[18.4--6]{H}, and in this framework $\A$ and $\AA$ are easily
seen to be Fredholm operators if $|\Phi'|$ tends to $\infty$ at
infinity.
Therefore it is assumed in this section that  
\begin{Rmlist}\setcounter{Rmcount}{2}
  \item  \label{Phi8-cnd} $\Phi\in C^\infty(\Rn,\R)$,
  \item  \label{fip-cnd} $|\Phi'(x)|\to\infty$ for $|x|\to\infty$,
  \item  \label{pcp-cnd}  for $|\alpha|\ge1$ there are constants $C_\alpha$
such that  
    \begin{equation}
      |D^\alpha\Phi(x)|\le C_\alpha (1+|\Phi'(x)|^2)^{1/2},
    \end{equation}
  \item  \label{fibnd-cnd} $D^\beta\Phi$ is bounded on $\Rn$ when $\beta$
has a fixed length, say $M\in \N$.
\end{Rmlist} 
This implies that $\Phi(x)$ is slowly increasing, $\Phi\in\cal O_M(\Rn)$, 
so $\Phi$ of, say exponential growth is ruled out; thus the stronger
conclusions of this section have their price.

\subsection{An auxiliary Schr{\"o}dinger operator}
To exploit \eqref{Phi8-cnd}--\eqref{fibnd-cnd} above, we shall henceforth work in the
unweighted space $L^2(\Rn)$ and with the Witten-Laplacians ensuing after
the unitary transformation in Section~\ref{unit-ssect}. That is, we shall consider 
\begin{align}
  \W&=\mlap+\tfrac{1}{4}|\Phi'|^2-\tfrac{1}{2}\lap\Phi
  \label{W0-eq}\\
  \WW&=(\mlap+\tfrac{1}{4}|\Phi'|^2-\tfrac{1}{2}\lap\Phi)
  \otimes I+\Phi'',
  \label{W1-eq}
\end{align}
which act in the distribution sense, and provide them with their maximal domains
in $L^2(\Rn)$ and $L^2(\Rn,\wedge^1\Cn)$, respectively.

For convenience one can here study the auxiliary operator 
\begin{equation}
  P=\mlap+V_0,\quad\text{where}\quad V_0(x)=\tfrac{1}{4}|\Phi'|^2,
\end{equation}
with the domain
\begin{equation}
  D(P)=\Set{u\in H^2(\Rn)}{V_0\cdot u\in L^2(\Rn)}.
  \label{DP-eq}
\end{equation}
To analyse this, let the pseudo-differential operators $p(x,D)$
and $q(x,D)$ have symbols
\begin{equation}
  p(x,\xi)=|\xi|^2+V_0(x),\qquad
q(x,\xi)=((1-\chi(x,\xi))p(x,\xi)+\chi(x,\xi))^{-1} 
\end{equation}
where $\chi\in C^\infty_0(\R^{2n})$ is positive and $\equiv1$ on a
compact set $K$ such that 
\begin{equation}
  (x,\xi)\notin K\implies p(x,\xi)\ge 1.
\end{equation}
This makes $q(x,\xi)$ well defined in $C^\infty(\R^{2n})$.

The calculus in \cite[Ch.~18.4--6]{H} applies to this case with
\begin{equation}
  p(x,\xi)\in S(m^2,g),\qquad q(x,\xi)\in S(m^{-2},g)
\end{equation}
when the weight and metric equal, respectively,
\begin{equation}
  m(x,\xi)=(1+|\xi|^2+|\Phi'(x)|^2)^{1/2},\qquad 
  g=|dx|^2+\frac{|d\xi|^2}{m(x,\xi)^2};
\end{equation}
H{\"o}rmander's notation and terminology is used here and below. 
When applying this theory, condition \eqref{fibnd-cnd} is posed in order to
show that $g$ is $\sigma$-temperate. 

From the calculus we next infer that $q(x,D)$ acts as a parametrix of
$p(x,D)$, i.e.\
\begin{equation}
  p(x,D)q(x,D)=1-K_1(x,D), \qquad q(x,D)p(x,D)=1-K_2(x,D)
  \label{pq-eq}
\end{equation}
where $K_j\in \op{OPS}(m^{-1},g)$. By \cite[Thm.~18.6.6]{H} the $K_j$
are compact in $L^2(\Rn)$ because $m^{-1}$ according to
\eqref{fip-cnd} tends to $0$ at infinity,
for with the choice of $g$ made above we have $g\le g^\sigma$.

To see the latter fact, note that by definition
\begin{equation}
  g^\sigma_{x,\xi}(y,\eta)=
  \sup\Set{\frac{|\dual{\eta}{\hat y}-\dual{y}{\hat
\eta}|^2}{g_{x,\xi}(\hat y,\hat\eta)}}{(\hat y,\hat
\eta)\in\R^{2n}\setminus\{(0,0)\}} 
\end{equation}
so the isometry of the Hilbert space
$(\R^{2n},g_{x,\xi}(\cdot,\cdot))$ onto its dual gives
\begin{equation}
  \begin{split}
  g^\sigma_{x,\xi}(y,\eta)&= \sup_{(\hat y,\hat\eta)\neq 0}
  \left|\frac{g_{x,\xi}((\eta,-m(x,\xi)^2y),(\hat y,\hat\eta))}
           {|(\hat y,\hat\eta)|_{g_{x,\xi}}}\right|^2
  \\
  &=(|(\eta,-m(x,\xi)^2y)|_{g_{x,\xi}})^2
  \\
  &=|\eta|^2+m(x,\xi)^2|y|^2= m(x,\xi)^2g_{x,\xi}(y,\eta).
  \end{split}
\end{equation}

This shows for one thing the claim that $g\le
g^{\sigma}$, because $m\ge1$, and for another that 
\begin{equation}
  h(x,\xi):=\sup(g_{x,\xi}/g^{\sigma}_{x,\xi})^{1/2}=m(x,\xi)^{-1}.
\end{equation}
Using \cite[18.5.10]{H} we can pass to the Weyl calculus and conclude
that
\begin{equation}
  \begin{aligned}
  p(x,D)&=a^W(x,D), &\qquad a(x,\xi)&= e^{-i\dual{D_x}{D_\xi}/2}p(x,\xi)\\
  q(x,D)&=b^W(x,D), &\qquad b(x,\xi)&= e^{-i\dual{D_x}{D_\xi}/2}q(x,\xi)
\end{aligned}
\end{equation}
with the remainder information that, since $h\cdot m^2=m$, 
\begin{align}
  a(x,\xi)&=p(x,\xi)+R_1(p),\quad & a&\in S(m^2,g),\  R_1(p)\in S(m,g)\qquad
\\
  b(x,\xi)&=q(x,\xi)+R_1(q),\quad & b&\in S(m^{-2},g),\  R_1(q)\in S(m^{-3},g)\qquad  .
\end{align}
Moreover, by \cite[18.4.3]{H} and \cite[18.5.4]{H},
\begin{gather}
  R_1(p)q + pR_1(q)\in S(m^{-1},g) \\
  R_1(a,b)\in S(hm^{-2}m^2,g)=S(m^{-1},g).
\end{gather}
This gives finally, by the compact support of $\chi$,
\begin{equation}
  pq=1+\chi (p-1)((1-\chi)p+\chi)^{-1} \in S(m^{-1},g),
\end{equation}
and hence the relations
\begin{equation}
  \begin{split}
  p(x,D)q(x,D)&= (a\# b)^W(x,D)
  \\
  &=(ab + R_1(a,b))^W(x,D)
  \\
  &=(pq+R_1(p)q+pR_1(q)+R_1(a,b))^W(x,\xi)
  \\
  &=1-K_1(x,\xi)
  \end{split}
  \label{pq'-eq}
\end{equation}
with $K_1(x,\xi)$ in $S(m^{-1},g)$ as claimed. Similarly, because
$R_1(b,a)\in S(m^{-1},g)$, one finds
that $q(x,\xi)p(x,\xi)-1\in\op{OPS}(m^{-1},g)$.

Note that in a similar fashion one has:
\begin{prop}
  \label{comp-prop}
Let $g$ be a $\sigma$-temperate metric fulfilling $g\le g^\sigma$ and
$g_{x,\xi}(t,\tau)\equiv g_{x,\xi}(t,-\tau)$, and let $m_1$ and $m_2$ be
$\sigma$, $g$-temperate weights. Then 
\begin{equation}
  \op{OPS}(m_1,g)\cdot \op{OPS}(m_2,g)\subset \op{OPS}(m_1m_2,g).
\end{equation}
\end{prop}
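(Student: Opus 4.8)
The plan is to proceed exactly as in the treatment of $p(x,D)q(x,D)$ above: pass to the Weyl calculus, compose there, and pass back to the Kohn--Nirenberg quantization. So I would start from $A\in\op{OPS}(m_1,g)$ and $B\in\op{OPS}(m_2,g)$, say $A=a(x,D)$ and $B=b(x,D)$ with $a\in S(m_1,g)$ and $b\in S(m_2,g)$. First I would invoke \cite[18.5.10]{H}: under the present hypotheses on $g$ the intertwining operator $e^{-i\dual{D_x}{D_\xi}/2}$ relating the standard and Weyl quantizations maps each $S(m_j,g)$ continuously into itself, so there are Weyl symbols $\tilde a\in S(m_1,g)$ and $\tilde b\in S(m_2,g)$ with $a(x,D)=\tilde a^W(x,D)$ and $b(x,D)=\tilde b^W(x,D)$.

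Next I would apply the composition theorem of the Weyl calculus \cite[18.5.4]{H}. Since $g$ is $\sigma$-temperate with $g\le g^\sigma$, and the product $m_1m_2$ is again a $\sigma,g$-temperate weight --- the defining estimate $m(X)\le C\,m(Y)(1+g^\sigma_Y(X-Y))^N$ for such weights being stable under products, with the exponents simply added --- one obtains $\tilde a^W(x,D)\,\tilde b^W(x,D)=(\tilde a\# \tilde b)^W(x,D)$ with $\tilde a\# \tilde b\in S(m_1m_2,g)$, together with the usual asymptotic expansion whose $N$-th remainder lies in $S(m_1m_2h^N,g)$ for $h=\sup(g/g^\sigma)^{1/2}\le1$. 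Finally, transferring back by \cite[18.5.10]{H} once more (now applying $e^{+i\dual{D_x}{D_\xi}/2}$) produces a Kohn--Nirenberg symbol $c\in S(m_1m_2,g)$ with $(\tilde a\# \tilde b)^W(x,D)=c(x,D)$; hence $AB=c(x,D)\in\op{OPS}(m_1m_2,g)$, which is the assertion.

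The one genuinely delicate point --- the main obstacle --- is the legitimacy of invoking \cite[18.5.10]{H} in \emph{both} directions, and this is precisely what the three hypotheses on $g$ secure: $\sigma$-temperateness and $g\le g^\sigma$ are what make the Weyl composition \cite[18.5.4]{H} close up in $S(m_1m_2,g)$ (they also force $h\le1$, so the remainder expansion is genuinely improving), while the evenness condition $g_{x,\xi}(t,\tau)\equiv g_{x,\xi}(t,-\tau)$ is imposed exactly so that $e^{\pm i\dual{D_x}{D_\xi}/2}$ is bounded on $S(m,g)$. For the metric of interest, $g=|dx|^2+|d\xi|^2/m(x,\xi)^2$, all three properties have already been verified above, so the statement applies in that setting. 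The remaining bookkeeping --- well-definedness of the composite operator and stability of temperateness under products --- is routine and entirely parallel to the computation already carried out for $pq$.
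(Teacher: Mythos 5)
Your proposal is correct and follows essentially the same route as the paper, which likewise proves the proposition by invoking \cite[Th.~18.5.10]{H} to pass between the standard and Weyl quantizations and \cite[18.5.4]{H} to compose in the Weyl calculus. Your version merely spells out the bookkeeping (temperateness of $m_1m_2$, the role of each hypothesis on $g$) that the paper leaves implicit.
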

\begin{proof}
One may use \cite[Th.~18.5.10]{H} and the remark thereafter to express
$p(x,D)q(x,D)$ as $(a\#b)^W(x,D)$ and then apply 18.5.4 and 18.5.10. 
\end{proof}

Obviously the maximal domain of $p(x,D)$ as an operator in $L^2(\Rn)$
is the set $D(p)$ consisting of those $u\in L^2$ for which also
$p(x,D)u\in L^2$. However this equals $D(P)$ in \eqref{DP-eq}: for if
$u\in D(p)$ then \eqref{pq-eq} yields, with $f=p(x,D)u$ in~$L^2$,
\begin{equation}
  u-K_2(x,D)u=q(x,D)f
\end{equation}
and by application of $1+K_2(x,D)$
\begin{equation}
  u=(1+K_2(x,D))q(x,D)f+K_2(x,D)^2 u;
\end{equation}
by the proposition both $K_2(x,D)^2$ and $(1+K_2(x,D))q(x,D)$ are
in $\op{OPS}(m^{-2},g)$, so that $D^\alpha u$ and $V_0u$ are in $L^2$ when
$|\alpha|\le2$ by combined application of Proposition~\ref{comp-prop} and
\cite[18.6.3]{H}. Since also
$(\Phi'_{j_1}(x))^{k_1}(\Phi'_{j_2}(x))^{k_2}\xi^\alpha$ is in $S(m^2,g)$ 
when $k_1+k_2+|\alpha|\le2$, it is seen that $(\Phi')^\beta D^\alpha u\in L^2$,
so that we for later use have the more precise result:

\begin{lem}
  \label{Pdm-lem}
Both $D(P)$ and $D(p(x,D)_{\op{max}})$ coincide with the space of
$u$ for which
\begin{equation}
  (\Phi')^{\beta}D^\alpha u \in L^2(\Rn)
\end{equation}
for all $\alpha$ and $\beta\in\N_0^n$ for which $|\alpha|+|\beta|\le 2$; and
$(\sum_{|\alpha+\beta|\le2} |(\Phi')^{\beta}D^\alpha u|^2)^{1/2}$ is
equivalent to the graph norm of $P$. 
\end{lem}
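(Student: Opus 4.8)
The plan is to read the domain description straight off the parametrix construction set up above, via a short chain of inclusions, and to extract the norm equivalence from the bounds produced along the way. I would not introduce any new machinery: everything needed is the two parametrix relations \eqref{pq-eq}, Proposition~\ref{comp-prop}, and the $L^2$-boundedness of operators in $\op{OPS}(1,g)$ from \cite[18.6.3]{H}.

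First I would dispose of the easy half. If $(\Phi')^\beta D^\alpha u\in L^2(\Rn)$ for all $\alpha,\beta\in\N_0^n$ with $|\alpha+\beta|\le 2$, then the cases $\beta=0$ give $u\in H^2(\Rn)$, and the cases $\alpha=0$, $|\beta|=2$ give $V_0u=\tfrac14\sum_j(\Phi'_j)^2u\in L^2$, so $u\in D(P)$ by \eqref{DP-eq}; moreover $p(x,D)u=\mlap u+V_0u\in L^2$, so $u\in D(p(x,D)_{\op{max}})$, and
\begin{equation}
  \nrm{p(x,D)u}{L^2}\le \nrm{\mlap u}{L^2}+\nrm{V_0u}{L^2}
  \le C\!\!\sum_{|\alpha+\beta|\le 2}\!\!\nrm{(\Phi')^\beta D^\alpha u}{L^2},
\end{equation}
which is one of the two inequalities needed for the equivalence of norms. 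Since $D(P)\subset D(p(x,D)_{\op{max}})$ is immediate, it remains only to show that $u\in D(p(x,D)_{\op{max}})$ forces $(\Phi')^\beta D^\alpha u\in L^2$ with the reverse estimate.

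For that I would take $u\in D(p(x,D)_{\op{max}})$, set $f=p(x,D)u\in L^2$, and use the second relation in \eqref{pq-eq} to write $u-K_2(x,D)u=q(x,D)f$; applying $1+K_2(x,D)$ gives
\begin{equation}
  u=(1+K_2(x,D))q(x,D)f+K_2(x,D)^2 u .
\end{equation}
Because $K_2\in\op{OPS}(m^{-1},g)$ and $q\in S(m^{-2},g)$, Proposition~\ref{comp-prop} (whose hypotheses on $g$ — $\sigma$-temperateness, $g\le g^\sigma$, evenness in $\tau$ — have all been verified above) yields that both $(1+K_2(x,D))q(x,D)$ and $K_2(x,D)^2$ lie in $\op{OPS}(m^{-2},g)$, using $m\ge 1$. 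For any $\alpha,\beta$ with $|\alpha+\beta|\le 2$ the symbol $(\Phi'(x))^\beta\xi^\alpha$ belongs to $S(m^2,g)$ — this is exactly the membership recorded just before the lemma — so composing $(\Phi')^\beta D^\alpha$ with the two order $-2$ operators and invoking Proposition~\ref{comp-prop} once more produces operators in $\op{OPS}(1,g)$, hence bounded on $L^2$ by \cite[18.6.3]{H}. Therefore $(\Phi')^\beta D^\alpha u\in L^2(\Rn)$ and
\begin{equation}
  \nrm{(\Phi')^\beta D^\alpha u}{L^2}\le C\bigl(\nrm{f}{L^2}+\nrm{u}{L^2}\bigr)
  = C\bigl(\nrm{p(x,D)u}{L^2}+\nrm{u}{L^2}\bigr),
\end{equation}
the complementary estimate. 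Combining the two halves closes the chain
$D(p(x,D)_{\op{max}})\subset\{u:(\Phi')^\beta D^\alpha u\in L^2,\ |\alpha+\beta|\le 2\}\subset D(P)\subset D(p(x,D)_{\op{max}})$, so all three sets coincide, $P=p(x,D)$ on them, and the two displayed inequalities give the equivalence of $(\sum_{|\alpha+\beta|\le2}|(\Phi')^\beta D^\alpha u|^2)^{1/2}$ with the graph norm of $P$.

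The only genuine work is the symbol bookkeeping: confirming that $(\Phi')^\beta\xi^\alpha\in S(m^2,g)$ for $|\alpha+\beta|\le 2$ (the $x$-derivative estimates are where \eqref{pcp-cnd} is consumed, and $\sigma$-temperateness of $g$ is where \eqref{fibnd-cnd} enters), and keeping track that composing an order $-2$ operator on each side with this order $2$ symbol lands in $\op{OPS}(1,g)$ and nothing larger. Since $g\le g^\sigma$ was already established, this is routine but is the step that actually uses the hypotheses on $\Phi$; everything else is the elementary algebra of parametrices displayed above.
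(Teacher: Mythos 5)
Your proposal is correct and follows essentially the same route as the paper: the easy inclusion from the weighted derivatives into $D(P)\subset D(p(x,D)_{\op{max}})$, then the parametrix identity $u=(1+K_2(x,D))q(x,D)f+K_2(x,D)^2u$ with both operators in $\op{OPS}(m^{-2},g)$, composed with $(\Phi')^\beta D^\alpha$ (symbol in $S(m^2,g)$) to land in $\op{OPS}(1,g)$ and hence in $\B(L^2)$. The paper leaves the two norm estimates implicit, so your making them explicit to justify the stated equivalence with the graph norm is a welcome but minor elaboration, not a different method.
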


Thereby $P=p(x,D)_{\op{max}}$, so by duality and symmetry of $P$,
\begin{equation}
  p(x,D)_{\op{min}}\supset P^*\supset P \supset p(x,D)_{\op{max}}
\end{equation}
so that $D(P)$ is both the minimal and maximal domain of
$p(x,D)$. Consequently $P$ is essentially self-adjoint on
$C^\infty_0(\Rn)$.

Furthermore $D(P)$ is a Hilbert space in $P$'s graph norm with $P$ and
$q(x,D)$ belonging to $\B(D(P),L^2)$ and $\B(L^2,D(P))$, respectively. For this reason 
$\dim\coker P\le \dim\coker(1-K_1)<\infty$, where the last
inequality is obtained from the compactness of $K_1$, and hence 
$R(P)$ is necessarily closed; cf.~\cite[19.1.1]{H}.

\bigskip

Returning to $\W$, the perturbation $\mlap\Phi(x)$ of $P$
is easily handled; first we show that the domain of $\W$ equals
$D(P)$. Clearly it contains $D(P)$ in view of \eqref{pcp-cnd} (when
this is used to get the elementary estimate $|\lap\Phi|\le
c(1+|\Phi'|^2)^{1/2}\le c(1+|\Phi'|^2)$, which entails $|u\lap\Phi|\le
c(1+|\Phi'|^2)|u|\in L^2$), and if $f:=\W u$ is in $L^2$ for some $u\in
L^2$, one has in $\cal S'(\Rn)$ 
\begin{equation}
  q(x,D)f=(1-K_2(x,D)-\tfrac{1}{2}q(x,D)\lap\Phi(x))u
  \label{DMA-eq}
\end{equation}
where 
\begin{equation}
  K'_2(x,D)=K_2(x,D)+\tfrac{1}{2}q(x,D)\lap\Phi(x)\in\op{OPS}(m^{-1},g). 
  \label{DMA'-eq}
\end{equation}
By application of $1+K'_2(x,D)$
\begin{equation}
  u=(1+K'_2(x,\xi))q(x,D)f+ K'_2(x,D)^2u,
  \label{DMA''-eq}
\end{equation}
so, like for $P$ above, we find that $u$ is in $H^2$ with $|\Phi'|^2u$
in $L^2$. 

For the range we get that 
\begin{equation}
  \W q(x,D)=p(x,D)q(x,D)-\tfrac{1}{2}\lap\Phi(x)q(x,D)
  =1-K'_1(x,D)
  \label{Afrh-eq}
\end{equation}
where $K'_1(x,D)\in \op{OPS}(m^{-1},g)$ since both $K_1(x,D)$ and
$(\lap\Phi)q(x,D)$ are so; the latter fact is by
Proposition~\ref{comp-prop} because $\lap\Phi$ is in 
$S(m,g)$ in view of condition~\eqref{pcp-cnd}. Using \eqref{fip-cnd} as
above $K'_1(x,D)$ is compact in $L^2$, and since $q(x,D)$ maps all of
$L^2(\Rn)$ into
$D(\W)$ by the just shown characterisation of this set, $\W$ has closed range.

Using that $D(\W)=D(P)$, it is straightforward to see that $\W$ is
self-adjoint: for if $u\in D(\W )$, then $d_{\Phi}u$ is in $L^2$ by
\eqref{pcp-cnd} and the just shown characterisation of
$D(\mlap^{(0)}_\Phi)$, and in addition we may by Lemma~\ref{dns-lem}
approximate any $v$ in the domain of $d_{\Phi}$ by
functions $\varphi_k\in C^\infty_0(\Rn)$ and get
\begin{equation}
  \scal{d_{\Phi}u}{d_{\Phi}v}=
  \lim_{k\to\infty} \dual{\mlap u
         +\tfrac{|\Phi'|^2}{4}u-\tfrac{\lap\Phi}{2}u}{\overline{\varphi_k}}
  =\scal{\mlap^{(0)}_\Phi u}{v},
  \label{dom-id}
\end{equation}
so that $\W$ coincides with the self-adjoint Lax--Milgram operator $B_0$
defined in Section~\ref{unit-ssect}.
Because of this it is moreover essentially self-adjoint on $C^\infty_0(\Rn)$.

Altogether we have the following:

\begin{thm}
  \label{psdCR-thm}
Let $\Phi(x)$ have the properties in \eqref{phi-cnd} and
\eqref{Phi8-cnd}--\eqref{fibnd-cnd} above. Then $\A$ has closed range in
$L^2(\Rn,\mu,\C)$, so that the conclusions of Theorem~\ref{Brascamp--Lieb-thm}
are valid, and $\A$ is essentially self-adjoint on $C^\infty_0(\Rn)$.
  
Moreover, $D(\A)$ consists of the functions $u$ for which
$(\Phi')^\beta D^\alpha u$ belongs to $ L^2(\Rn,\mu)$
for all $\alpha$ and $\beta\in\N_0^n$ such that $|\alpha+\beta|\le 2$.
\end{thm}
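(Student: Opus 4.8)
The plan is to transfer everything established in this section for $\W=B_0$ back to $\A$ through the unitary operator $U=M_{e^{-\Phi/2}}\colon L^2(\Rn,\mu)\to L^2(\Rn)$ of Section~\ref{unit-ssect}, for which $U\A U^*=\W$ (here $B_0=\W$ by the preceding analysis). The first observation is that $U$ maps $C^\infty_0(\Rn)$ bijectively onto itself, since $e^{\pm\Phi/2}\in C^\infty(\Rn)$ are strictly positive and multiplication by them preserves compact support; hence $\A$ is essentially self-adjoint on $C^\infty_0(\Rn)$ precisely when $\W$ is, which was verified above via the identification $\W=B_0$ leading to \eqref{dom-id}. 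Similarly $R(\A)=U^*R(\W)$ is closed because $R(\W)$ is; in particular the closed-range hypothesis \eqref{A-cnd} of Theorem~\ref{Brascamp--Lieb-thm} is fulfilled and all its conclusions are available. This settles the first paragraph of the theorem.

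For the domain statement I would start from $D(\A)=U^*D(\W)$ together with the identity $D(\W)=D(P)$ obtained around \eqref{DMA-eq}--\eqref{DMA''-eq} and with Lemma~\ref{Pdm-lem}, according to which $w\in D(P)$ iff $(\Phi')^\beta D^\alpha w\in L^2(\Rn)$ for all $|\alpha+\beta|\le2$. Thus $u\in D(\A)$ iff $(\Phi')^\beta D^\alpha\!\bigl(e^{-\Phi/2}u\bigr)\in L^2(\Rn)$ for all such $\alpha,\beta$, and it remains to see that this is equivalent to $(\Phi')^\beta D^\alpha u\in L^2(\Rn,\mu)$ for $|\alpha+\beta|\le2$. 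I would prove this by Leibniz' rule: writing $D^\alpha\!\bigl(e^{-\Phi/2}u\bigr)=e^{-\Phi/2}\sum_{\gamma\le\alpha}c_{\alpha\gamma}(x)\,D^\gamma u$, each coefficient $c_{\alpha\gamma}=e^{\Phi/2}D^{\alpha-\gamma}e^{-\Phi/2}$ is a fixed polynomial in the derivatives $D^\delta\Phi$ with $1\le|\delta|\le|\alpha-\gamma|$, and the growth hypothesis \eqref{pcp-cnd} gives the pointwise bound $|c_{\alpha\gamma}(x)|\le C\,(1+|\Phi'(x)|^2)^{|\alpha-\gamma|/2}$. Since $\nrm{f}{L^2(\mu)}=\nrm{e^{-\Phi/2}f}{L^2(\Rn)}$, multiplying through by $(\Phi')^\beta$ and $e^{-\Phi/2}$ shows that each summand of $(\Phi')^\beta D^\alpha(e^{-\Phi/2}u)$ is dominated in $L^2(\Rn)$ by finitely many of the quantities $(\Phi')^\delta D^\gamma u$ with $|\delta|+|\gamma|\le|\beta|+|\alpha-\gamma|+|\gamma|=|\beta|+|\alpha|\le2$, measured in $L^2(\mu)$; and since $e^{\Phi/2}$ and $e^{-\Phi/2}$ play symmetric roles (apply the same to $u=e^{\Phi/2}w$), the reverse domination holds as well, so the two families of norms are equivalent.

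The main obstacle I anticipate is precisely this last weight-counting. It is decisive that \eqref{pcp-cnd} bounds every $D^\delta\Phi$ with $|\delta|\ge1$\,---\,notably the entries of $\Phi''$ and $\lap\Phi$, which surface as coefficients when $|\alpha-\gamma|=2$\,---\,by $(1+|\Phi'|^2)^{1/2}$ rather than by something of higher order; this is what keeps the exponent of $1+|\Phi'|^2$ contributed by $c_{\alpha\gamma}$ at $|\alpha-\gamma|/2$, so that after combining with $(\Phi')^\beta$ and $D^\gamma u$ the total weight is exactly $|\alpha|+|\beta|\le2$ and no order-$3$ term ever appears. Once this equivalence of norms is in hand, the characterisation $D(\A)=\{\,u\mid(\Phi')^\beta D^\alpha u\in L^2(\Rn,\mu)\text{ for }|\alpha+\beta|\le2\,\}$ follows, and with it the whole theorem, the spectral assertions having already been reduced to the unitary equivalence $U\A U^*=\W$ and the corresponding facts for $\W$ proved above.
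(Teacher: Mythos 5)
Your proposal is correct and follows essentially the same route as the paper: all the substantive work (the Weyl-calculus analysis of $P$ and $\W$, the domain identification of Lemma~\ref{Pdm-lem}, closed range and essential self-adjointness of $\W$) is done upstream, and the theorem is obtained by transporting these facts through the unitary $U=M_{e^{-\Phi/2}}$ of Section~\ref{unit-ssect}. Your explicit Leibniz-rule verification that $(\Phi')^\beta D^\alpha(e^{-\Phi/2}u)\in L^2(\Rn)$ for $|\alpha+\beta|\le2$ is equivalent to $(\Phi')^\beta D^\alpha u\in L^2(\Rn,\mu)$\,---\,using \eqref{pcp-cnd} to keep the weight exponents at $|\alpha|+|\beta|\le2$\,---\,is a detail the paper leaves implicit, and it is carried out correctly.
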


Observe that the closed range, and even $\sigma_{\ess}(\A)=\emptyset$, is an
immediate consequence of Proposition~\ref{ess-prop} since
\eqref{fip-cnd}--\eqref{pcp-cnd} imply condition~\eqref{grw-cnd} there.

However, the density of $C^\infty_0(\Rn)$ in the graph norm has only been
obtained because the pseudo-differential techniques made an analysis of
the maximal domains possible.

\subsection{Applications to $\boldsymbol{\AA}$}

Using the same line of thought as for $\A$ one finds:

\begin{thm}
  \label{psdAA-thm}
Let $\Phi(x)$ satisfy \eqref{phi-cnd} together with
\eqref{Phi8-cnd}--\eqref{fibnd-cnd} above. Then the domain of $\AA$ is
given by
\begin{equation}
  D(\AA)=\Set{\sum_{j=1}^n v_jdz^j}{\forall j\colon |\alpha+\beta|\le 2
  \implies (\Phi')^{\beta} D^{\alpha}v_j\in L^2(\mu,\Rn)},
  \label{DmA1-eq}
\end{equation}
and $\AA$ is essentially self-adjoint from $C^\infty_0(\Rn,\wedge^1\Cn)$ and
has closed range.
\end{thm}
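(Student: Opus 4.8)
The plan is to reduce everything, exactly as in Theorem~\ref{psdCR-thm}, to the unitarily equivalent operator $\WW$ acting in the \emph{unweighted} space $L^2(\Rn,\wedge^1\Cn)$: by Section~\ref{unit-ssect} one has $\AA=U^*\WW U$, so the domain description \eqref{DmA1-eq}, closed range and essential self-adjointness of $\AA$ all follow once the corresponding facts for $\WW$ are proved (the passage of the domain between the weighted and the unweighted picture being the componentwise version of the one already settled for $\A$ in Theorem~\ref{psdCR-thm}). Identifying $1$-forms with vector functions, write $\WW=(P\otimes I)+R$, where $P=\mlap+\tfrac14|\Phi'|^2$ is the auxiliary Schr\"odinger operator analysed in this section and $R$ is multiplication by the symmetric matrix $-\tfrac12\lap\Phi\,I+\Phi''$. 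The key observation is that, by \eqref{pcp-cnd}, every entry of $R$ lies in $S(m,g)$ exactly as $\lap\Phi$ does, so $R$ is a perturbation of precisely the type already handled for $\W$ in \eqref{DMA-eq}--\eqref{Afrh-eq}, merely matrix-valued.

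For the domain, I would apply $q(x,D)\otimes I$ to an identity $\WW u=f\in L^2$ and use $q(x,D)p(x,D)=1-K_2(x,D)$ to get, with $K_2''=(K_2\otimes I)-(q(x,D)\otimes I)\circ R\in\op{OPS}(m^{-1},g)$ (since $(q(x,D)\otimes I)\circ R$ has symbol in $S(m^{-2},g)\cdot S(m,g)\subset S(m^{-1},g)$ by Proposition~\ref{comp-prop}), the relation $(q(x,D)\otimes I)f=u-K_2''u$; a bootstrap by $1+K_2''$ exactly as in \eqref{DMA-eq}--\eqref{DMA''-eq} then expresses $u$ as the sum of two operators in $\op{OPS}(m^{-2},g)$ applied to $u$ and to $f$. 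Since $\op{OPS}(m^{-2},g)$ maps $L^2$ into the space occurring in Lemma~\ref{Pdm-lem}, this gives $D(\WW)=D(P\otimes I)$, which componentwise is $\{\,v:(\Phi')^\beta D^\alpha v\in L^2,\ |\alpha+\beta|\le2\,\}$; transporting by $U$ yields \eqref{DmA1-eq}.

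For the range I would mirror \eqref{Afrh-eq}: $\WW(q(x,D)\otimes I)=(pq\otimes I)+R\circ(q(x,D)\otimes I)=1-K_1'$ with $K_1'=(K_1\otimes I)-R\circ(q(x,D)\otimes I)\in\op{OPS}(m^{-1},g)$, compact in $L^2(\Rn,\wedge^1\Cn)$ by \eqref{fip-cnd} and \cite[Thm.~18.6.6]{H}. Since $q(x,D)\otimes I$ maps $L^2(\Rn,\wedge^1\Cn)$ into $D(\WW)$, one gets $\dim\coker\WW\le\dim\coker(1-K_1')<\infty$, so $R(\WW)$ is closed (and, using a left parametrix, $\WW$ is Fredholm). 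Finally, for essential self-adjointness, note that for $u\in D(\WW)=D(P\otimes I)$ the bound \eqref{pcp-cnd} and the just obtained domain description give $d_{1,\Phi}u,\,d_{0,\Phi}^*u\in L^2$; approximating any $v$ in the form domain $V_{1,\Phi}=D(d_{1,\Phi})\cap D(d_{0,\Phi}^*)$ by $\varphi_k\in C^\infty_0(\Rn,\wedge^1\Cn)$ in the graph norm (Lemma~\ref{dns-lem}) yields, as in \eqref{dom-id}, that $b_1(u,v)=\lim_k\dual{\WW u}{\overline{\varphi_k}}=\scal{\WW u}{v}$, so $\WW$ is symmetric with its maximal (distributional) domain and hence coincides with the self-adjoint Lax--Milgram operator $B_1$ of Section~\ref{unit-ssect}; then the minimal and maximal realisations agree and $C^\infty_0(\Rn,\wedge^1\Cn)$ is a core. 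Transporting by $U$ gives the assertions for $\AA$.

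I do not expect a genuine new obstacle here beyond the scalar case: the one thing to be careful about is that the Weyl/H\"ormander calculus and the $L^2$-boundedness and compactness results \cite[18.5.4, 18.5.10, 18.6.3, 18.6.6]{H} are now invoked for matrix-valued (system) symbols, which is routine because $m$ and $g$ are unchanged and the entries of $\Phi''$ have exactly the order of $\lap\Phi$; the remaining routine points are the two domain translations — weighted $\leftrightarrow$ unweighted via $U$, and $D(\WW)\leftrightarrow$ the $(\Phi')^\beta D^\alpha$-description — both carried out componentwise precisely as in the proof of Theorem~\ref{psdCR-thm}.
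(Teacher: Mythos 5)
Your proposal is correct and follows essentially the same route as the paper: reduce to $\WW$ by the unitary transformation, treat $\Phi''$ (together with $-\tfrac12\lap\Phi$) as an $S(m,g)$-perturbation of $P\otimes I$, bootstrap with the tensored parametrix $q(x,D)\otimes I$ to get the domain and the closed range, and identify $\WW$ with the Lax--Milgram operator $B_1$ for essential self-adjointness. The only difference is a trivial regrouping of the zeroth-order matrix perturbation, and your remark that the calculus applies to system symbols is exactly the point the paper also relies on.
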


Indeed, that $D(\WW)$ contains $D(\W)\times\dots\times D(\W)$ (cf.\ the
set in \eqref{DmA1-eq}) is clear by \eqref{W1-eq} and \eqref{pcp-cnd};
if conversely $w:=\WW v$ is in $L^2(\Rn,\wedge^1\Cn)$ for some $v$ there,
the procedure in \eqref{DMA-eq}--\eqref{DMA''-eq} gives, when $q(x,D)$ is
tensored with~$I$,
\begin{equation}
  v=((1+\tilde K_2(x,D))q(x,D))\otimes Iw+\tilde K_2(x,D)^2 v,
  \label{K2t-eq}
\end{equation}
where $\tilde K_2(x,D)$ equals $K_2'(x,D)\otimes I-q(x,D)\otimes I\cdot
\Phi''$ and has all of its entries in $\op{OPS}(m^{-1},g)$. Therefore the
inclusion from the left to the right in \eqref{DmA1-eq} follows.

When applying $q(x,\xi)\otimes I$ as a right-parametrix we find
\begin{equation}
  \WW(q(x,D)\otimes I)= (1-K'_1(x,D))\otimes I +
  \Phi''(q(x,D)\otimes I) =: I-\tilde K_1(x,D)
\end{equation} 
where each entry of $\tilde K_1(x,D)$ is in $\op{OPS}(m^{-1},g)$, and hence
compact in $L^2(\Rn)$. Now $q\otimes I$ sends $L^2(\Rn,\wedge^1\Cn)$ into
$D(\WW)$, so this shows that $\WW$ has closed range; cf.~the argument for $P$
above. 

To show the self-adjointness one can identify $\WW$
with $ U\AA U^*$, where $U=U\otimes I$;
cf.~Section~\ref{unit-ssect}. Now $d_{\Phi}$ and $d_{\Phi}^*$ map any $v\in
D(\WW)$ into $L^2$ by the characterisation of $D(\WW)$, so as in
\eqref{dom-id} one finds that $\WW$ is contained in the self-adjoint
Lax--Milgram operator defined from $(L^2(\Rn,\wedge^1\Cn),V_{\Phi},
b_1)$. Hence $\WW$ equals this as well as the minimal realisation of the
expression in \eqref{W1-eq}. Consequently $\WW$ and $\AA$ are essentially
self-adjoint. 

Note that $D(\AA)\hookrightarrow H^1(\mu)^n$ by Lemma~\ref{Pdm-lem}, so that
$\sigma_{\ess}(\AA)$ is empty. 

\bigskip

Injectivity of $\AA$ may be obtained in the set-up above as soon as
\eqref{fip-cnd} is strengthened to a specific growth rate at infinity; that
is when \eqref{fip-cnd} is replaced~by: 

\begin{itemize} \setlength{\itemindent}{2\jot}
    \item[(IV$_\omega$)] There exist $\omega>0$ and $C>0$ such that
\begin{equation}
  x\cdot\Phi'(x)\ge \tfrac{1}{C} |x|^{1+\omega}\quad\text{for}\quad |x|\ge C.
\end{equation}
\end{itemize}
Since $C|\Phi'|\ge |x|^{\omega}$ holds a fortiori, $\AA$ is then moreover
strictly positive because of the closed range obtained in
Theorem~\ref{psdAA-thm}:

\begin{thm}
  \label{AApos-thm}
Let $\Phi$ satisfy \eqref{phi-cnd}, 
\eqref{Phi8-cnd}--\eqref{fibnd-cnd} and {\rm (IV$_\omega$)}. Then $\AA>0$ on
$L^2(\Rn,\mu,\wedge^1\Cn)$. 
\end{thm}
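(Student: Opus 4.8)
The plan is to reduce the strict positivity of $\AA$ to a soft statement about the complex \eqref{dmu-cmplx}.

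First I would record that, by Theorem~\ref{psdAA-thm}, $\AA$ has closed range, and that $\sigma_{\ess}(\AA)=\emptyset$ by the remark following that theorem (so $\AA\ge0$ has compact resolvent and discrete spectrum); hence $\AA>0$ is equivalent to $Z(\AA)=\{0\}$. Now let $v\in D(\AA)$ with $\AA v=0$. Since $\AA$ is the Lax--Milgram operator associated with $a_1$, one has $0=\scal{\AA v}{v}_{\mu}=a_1(v,v)=\nrm{d_1v}{\mu}^2+\nrm{d^*_0v}{\mu}^2$, so $d_1v=0$ and $d^*_0v=0$. The second relation means $v\perp R(d_0)$, i.e.\ $v\in X_1^\perp$, because $X_1=\overline{R(d_0)}$ and $Z(d^*_0)=R(d_0)^\perp$. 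Since $d_1\circ d_0=0$ and $Z(d_1)$ is closed, $X_1=\overline{R(d_0)}\subset Z(d_1)$; it therefore suffices to prove $Z(d_1)=X_1$, for then $Z(\AA)=Z(d_1)\cap X_1^\perp=X_1\cap X_1^\perp=\{0\}$.

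To establish $Z(d_1)=X_1$ I would use the Poincar\'e lemma together with a truncation (this step uses only \eqref{phi-cnd}). Let $v\in Z(d_1)$, i.e.\ an $L^2(\mu)$ $1$-form with $dv=0$ in $\cal D'(\Rn)$; as $\Rn$ is contractible there is $\tilde u\in H^1_{\loc}(\Rn)$ with $\nabla\tilde u=v\in L^2(\Rn,\mu,\wedge^1\Cn)$. We may take $\tilde u$ real-valued (treating real and imaginary parts separately). Pick $\phi_N\in C^1(\R)$ with $\phi_N(t)=t$ on $[-N,N]$, $|\phi_N|\le N+1$, $0\le\phi_N'\le1$ and $\phi_N'\to1$ pointwise. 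Then $\phi_N(\tilde u)$ is bounded, hence in $L^2(\mu)$ since $\mu(\Rn)=1$; the chain rule gives $\nabla\phi_N(\tilde u)=\phi_N'(\tilde u)\,\nabla\tilde u$ with $|\nabla\phi_N(\tilde u)|\le|\nabla\tilde u|$, so $\phi_N(\tilde u)\in H^1(\mu)=D(d_0)$. Finally, dominated convergence with majorant $4|\nabla\tilde u|^2\in L^1(\mu)$ yields
\begin{equation*}
  \nrm{d_0\phi_N(\tilde u)-v}{\mu}^2=\int_{\Rn}|\phi_N'(\tilde u)-1|^2\,|\nabla\tilde u|^2\,d\mu\longrightarrow 0,
\end{equation*}
so $v=\lim_N d_0\phi_N(\tilde u)\in\overline{R(d_0)}=X_1$. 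This proves $Z(d_1)=X_1$, hence $\AA>0$.

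In this route essentially all the substance is already contained in Theorem~\ref{psdAA-thm} (whose proof rests on the Weyl calculus of Section~\ref{psd-sect} and uses \eqref{Phi8-cnd}--\eqref{fibnd-cnd}): given closed range of $\AA$ and discreteness of its spectrum, only the elementary density fact $Z(d_1)=X_1$ remains, and its sole delicate points are that a closed $L^2(\mu)$ $1$-form has an $H^1_{\loc}$ primitive and that truncation does not leave $H^1(\mu)$. The hypothesis {\rm(IV$_\omega$)} enters, as the paper observes, only \emph{a fortiori} through $C|\Phi'(x)|\ge|x|^{\omega}$ for $|x|\ge C$; this is the natural input for the alternative, more hands-on route, in which one proves positivity of $\WW=\W\otimes I+\Phi''$ directly by localising the Schr\"odinger quadratic form of $\W$ and dominating the unbounded $1$-form perturbation $\Phi''$ by the confining potential $\tfrac14|\Phi'|^2$ outside a large ball, using \eqref{pcp-cnd} in the form $|\lap\Phi|\le C(1+|\Phi'|^2)^{1/2}$. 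In that route the estimate of $\Phi''$ against $\tfrac14|\Phi'|^2$ on the exterior region is the main obstacle; in the soft route sketched above there is, in effect, none.
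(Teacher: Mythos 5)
Your proof is correct, and it takes a genuinely different route from the paper's. The paper establishes injectivity of the transformed operator $\WW$ in the \emph{unweighted} space $L^2(\Rn,\wedge^1\Cn)$: it first bootstraps any $v\in Z(\WW)$ into $\cal S(\Rn,\C^n)$ by iterating the parametrix ($v=\tilde K_2(x,D)^Nv$ with entries in $\op{OPS}(m^{-N},g)$), using {\rm(IV$_\omega$)} to convert powers of $|\Phi'|$ into powers of $\ang{x}$, and then exhibits the explicit primitive $f(x)=\int_0^1e^{-(\Phi(x)-\Phi(tx))/2}\,x\cdot v(tx)\,dt$, whose membership of $L^2(\Rn)$ rests on the lower bound $\Phi(x)-\Phi(tx)\ge |x-tx|^{1+\omega}/C(1+\omega)$ — {\rm(IV$_\omega$)} again; then $v=d_\Phi f\perp v$ forces $v=0$. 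Your endgame is the same ($v\in\overline{R(d_0)}\cap\overline{R(d_0)}{}^\perp=\{0\}$), but you reach $v\in\overline{R(d_0)}$ softly: the Poincar\'e lemma gives a primitive in $H^1_{\loc}$, and truncating it \emph{in its range} keeps you inside $H^1(\mu)$ precisely because $\mu$ is a finite measure, so that bounded functions lie in $L^2(\mu)$ — the feature that is unavailable in the unweighted picture and that explains why the paper must instead secure pointwise decay of $v$ and of the exponential weight. Your step is a Liouville/Caccioppoli-type cutoff in the range variable; the delicate points (existence of an $H^1_{\loc}$ primitive of a closed $L^2_{\loc}$ form on the contractible $\Rn$, the chain rule for $C^1$ truncations of $H^1_{\loc}$ functions, dominated convergence with majorant $|v|^2\in L^1(\mu)$) are all in order. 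Two consequences are worth recording. First, as you observe, {\rm(IV$_\omega$)} is never used: your argument shows $Z(d_1)=X_1$, hence $Z(\AA)=\{0\}$, for \emph{every} probability measure $e^{-\Phi}dx$ with $\Phi\in C^2$, and strict positivity then follows from the closed range already supplied by Theorem~\ref{psdAA-thm}; so your route strengthens the theorem by removing a hypothesis. This is consistent with the paper's own remark after Theorem~\ref{ran-thm} that injectivity of $\Ak$ together with closed range yields exactness — you supply the missing injectivity. Second, what the paper's harder route buys in exchange is quantitative information about kernel candidates (Schwartz decay via the Weyl calculus) that your soft argument does not provide.
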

\begin{proof}
As remarked it suffices to show injectivity of $\WW$, and for this it is
enough that 
\begin{equation}
  Z(\WW)\subset \Set{\sum v_j\,dz_j}{ v_j\in\cal S(\Rn)\ \forall j}
  =: \cal S(\Rn,\C^n).
  \label{ZWW-eq}
\end{equation}
Indeed, given $v\in Z(\WW)\cap\cal S(\Rn,\C^n)$, condition (IV$_\omega$) will
imply that 
\begin{equation}
  f(x)=\int_0^1 e^{-(\Phi(x)-\Phi(tx))/2} x\cdot v(tx)\,dt
  \label{f-id}
\end{equation}
defines an element $f(x)$ of $L^2(\Rn)$ for which $d_\Phi f=v$; since $v\in
Z(\WW)\subset Z(d_\Phi^*)=R(d_\Phi)^\perp$ this will give $v=0$ as desired.

It is straightforward to see that $f\in C^0(\Rn)$ for such $v$, and
\eqref{f-id} gives, cf.~\eqref{U-def}, 
\begin{equation}
  U^*f(x)=\int_0^1 x\cdot U^*v(tx) \,dt
\end{equation}
hence that $dU^*f=U^*v$; therefore $f=e^{-\Phi/2}U^*f$ is in
$C^\infty(\Rn)$ by \eqref{ZWW-eq} and \eqref{Phi8-cnd}. This also yields
$v=UdU^*f=d_\Phi UU^*f= d_\Phi f$ as claimed.

That $f$ is in $L^2(\Rn)$ follows if $|x|^kf(x)$ is bounded for large
$k$. Using (IV$_\omega$) we get, for $|x|>C$ and $t|x|>C$,
\begin{equation}
  \Phi(x)-\Phi(tx)=\int_t^1 x\cdot\Phi'(sx)\,ds
  \ge C^{-1}|x|^{1+\omega}\int_t^1 s^\omega\,ds=
  \tfrac{|x|^{1+\omega}}{C(1+\omega)}(1-t^{1+\omega}),
\end{equation}
so the fact that $(1-t)^{1+\omega}\le 1-t^{1+\omega}$ on $[0,1]$ (since these
functions are convex and concave, respectively, on this interval) leads to
the conclusion that
\begin{equation}
  \Phi(x)-\Phi(tx)\ge \frac{|x-tx|^{1+\omega}}{C(1+\omega)}
  \quad\text{for}\quad |x|,\,|tx|\ge C.
  \label{phi-ineq}
\end{equation}
Next one may for each $k\in\N$ deduce the existence of a constant $C_1$ such
that when $|x|\ge C$, then (with $\ang{x}:=(1+|x|^2)^{\tfrac{1}{2}}$ in the
rest of this proof)
\begin{equation}
  \ang{x-y}^k e^{-(\Phi(x)-\Phi(y))/2} \le C_1 \quad\text{for all}
  \quad y\in \ch\{0,x\}.
\end{equation}
when $\op{ch}A$ denotes the convex hull of $A$.
Indeed, if $|y|\ge C$ the inequality in \eqref{phi-ineq} yields that the
left hand side, when $r:=\ang{x-y}\in\R_+$, is estimated by
\begin{equation}
  (r^\beta e^{-r})^\gamma \quad\text{for some}\quad \beta,\,\gamma>0;
\end{equation} 
when $|y|<C$ one can let $z=\tfrac{C}{|x|}x$ and reduce to the case
$|y|\ge C$, using the inequalities 
\begin{gather}
  \ang{x-y}^k\le\ang{x}^k\le\ang{z}^k\ang{x-z}^k\le(1+C^2)^k\ang{x-z}^k
\\
  \exp(\Phi(y)-\Phi(z))\le \exp(\sup_{|y|\le C}\Phi-\inf_{|z|\le C}\Phi)
  <\infty.  
\end{gather}

For $f(x)$ this now gives, since $|x|<C$ is easy,
\begin{equation}
  \begin{split}
    \sup_{x\in\Rn} |x|^k|f(x)| &\le 
    \sup_{x\in\Rn} \int_0^1\ang{x-tx}^{k+1}e^{-(\Phi(x)-\Phi(tx))/2}\,dt
\\[-2\jot]
    &\qquad\qquad\qquad
    \times \sup\Set{\ang{y}^{k+1}|v(y)|}{y\in\Rn} 
    <\infty. 
  \end{split}
\end{equation}

It remains, therefore, to show \eqref{ZWW-eq}. If $v\in Z(\WW)$, then
$v=\tilde K_2(x,D)v$, hence
\begin{equation}
  v=\tilde K_2(x,D)^N v \quad\text{for every}\quad N\in\N;
\end{equation}
cf.~\eqref{K2t-eq} ff. Since $\tilde K_2(x,D)^N$ by
Proposition~\ref{comp-prop} has entries in $\op{OPS}(m^{-N},g)$, it follows
that $|\Phi'(x)|^{N}D^{\beta} v_j(x)$ is in $L^2(\Rn)$ for all multiindices
$\beta$ and all $j$ and $N$. Moreover, 
\begin{equation}
  \ang{x}^N\le C^{-1}|\Phi'(x)|^{N/\omega}\le C^{-1}\ang{\Phi'(x)}^{N'}\le C'
  (1+|\Phi'(x)|^{N'})  
\end{equation}
when $N'\ge N/\omega$; cf.~(IV$_\omega$). It is thus shown that $v$ has
coefficients in $\cal S(\Rn)$, and altogether this shows the theorem.
\end{proof}

\subsection{Example}  \label{exmp-ssect}
Consider the potential, as done by Helffer \cite{Hel96,Hel97'} and many others,
\begin{equation}
  \Phi(x)=\tfrac{1}{h}\sum_{j=1}^n
  (\tfrac{\lambda}{12}x_j^4+\tfrac{\nu}{2}x_j^2) 
  +\tfrac{1}{h}\tfrac{\cal I}{2}\sum_{j=1}^n 
  |x_j-x_{j+1}|^2,
\end{equation}
whereby $x_{n+1}=x_1$ as a convention. Here $h>0$ and $\cal I>0$ while 
\begin{equation}
  \lambda>0>\nu.
\end{equation}
Therefore $\Phi(x)$ is not convex, so the Brascamp--Lieb inequality does not
apply to this case. The condition $\int e^{-\Phi}\,dx=1$ may be fulfilled by
adding an $h$-dependent constant. Moreover, \eqref{Phi8-cnd} and
\eqref{pcp-cnd} clearly hold. Concerning (IV$_\omega$), Euler's formula gives
\begin{equation}
  x\cdot \Phi'(x)
  \ge \tfrac{1}{h}\sum(\tfrac{\lambda}{3}x_j^4+\nu x_j^2)
  =\tfrac{|x|^4}{h}(\tfrac{\lambda}{3}(\tfrac{1}{n})^2-|\nu||x|^{-2})
 \ge C^{-1}|x|^4
\end{equation}
when $|x|\ge C$ for some sufficiently large $C=C(h,n,\lambda,\nu)$; here it
is used that $|x_j|^2\ge|x|^2/n$ for some $j\in\{1,\dots,n\}$. Consequently
(IV$_\omega$) holds for all $\omega\in\,]0,3]$ for the above $\Phi(x)$. (By
comparison, the assumptions in \cite{Sj96} are unfulfilled since the 
$\Phi''_{jk}$ are unbounded on $\Rn$.)

Because of this, the corresponding operators $\A$ and $\AA$ have the
properties given in Theorems~\ref{Brascamp--Lieb-thm}, \ref{psdCR-thm},
\ref{psdAA-thm} and \ref{AApos-thm}, in particular \eqref{Brascamp--Lieb-id}
holds because $\AA>0$. 

The lower bound  $m(\AA)$ can moreover, for certain $h$ and $\cal I$, be
estimated in various ways, see for example \cite{Hel96,Hel97'}. 

\section{Final Remark}      \label{final-sect} 
The essential self-adjointness of $\A$ and $\AA$
(or $\Ak$) holds in a greater generality than that established in
Section~\ref{psd-sect}. For scalar Schr{\"o}dinger operators this is well
known from works of T.~Kato \cite{Kat73} and S.~Agmon \cite{Agm78}, but
especially C.~Simader's note \cite{Sim78} appears useful for an extension to
`systems' like $\lap^{(k)}_{\Phi}$. 
In fact, Simader's argument for $\mlap+V$ specialised to the case $V\in
C^0(\Rn,\R)$ appears in a recent lecture note \cite[Thm.~9.4.1]{Hel99},
and in this form it is straightforward to carry over to $\mlap\otimes I+V$
with $V\in C^0(\Rn,\R^{n^2})$, when this operator is positive on
$C_0^\infty(\Rn,\Rn)$, hence to $\lap^{(1)}_{\Phi}$ and $\AA$. 

However, the domain characterisations and the corollary on the compact
resolvent (in particular of $\AA$) should in any case motivate the given
applications of the Weyl calculus. 

\appendix

\section{Forms with distributions as coefficients}   \label{curr-app}
The general framework for distribution-valued differential forms, 
so-called currents, is given by G.~de~Rham \cite{Rham55} and
L.~Schwartz~\cite{Swz59}. However, the definition of $E$-valued distributions as
continuous linear maps $\cal D(\Omega)\to E$ given in \cite[Ch.~1 \S2]{Swz59}
leads to severe difficulties (cf.\ the introduction of \cite{Swz59}) in the
proof that $\cal D'(\Omega,E)$ is the dual of $C^\infty_0(\Omega,E')$; for
the finite-dimensional example $E=\wedge^k\Cn$, a much more direct
approach is given in Schwartz' book \cite[Ch.~9]{Swz66} where 
differential forms on manifolds are treated.

In the present article where $\Omega=\Rn$ is a flat, oriented manifold,
further simplifications are given below for the reader's sake. The
definition of $\cal D'(\Rn,\wedge^k\Cn)$ as the dual of
$C^\infty_0(\Rn,\wedge^k\Cn)$ is a little unconventional (testfunctions
valued in $\wedge^{n-k}\Cn$ is common), but this choice is consistent with the
made identification of $L^2(\Rn,\mu,\wedge^k\Cn)$ and its dual.

For precision,  $C^\infty_0(\Rn,\wedge^k\Cn)$ denotes the compactly
supported, infinitely differentiable maps $\Rn\to\wedge^k\Cn$ (i.e.\ into
the space of anti-symmetric $k$-linear forms on $\C$). The canonical coordinates
$z_1$,\dots,$z_n$ in $\Cn$ lead to a basis for
$\wedge^k\Cn$ consisting of $dz^J:=dz_{j_1}\wedge\dots\wedge dz_{j_k}$,
where $J=(j_1,\dots,j_k)$ is an increasing $k$-tuple. 
Therefore any $\varphi\in C^\infty_0(\Rn,\wedge^k\Cn)$
equals $\sum'\varphi_J\,dz^{J}$ with unique $\varphi_J\in
C^\infty_0(\Rn)$. Thus there is a bijection
\begin{equation}
  \cal J\colon C^\infty_0(\Rn,\wedge^k\Cn)\to 
  {\textstyle \prod_{1\le j\le\binom{n}{k}}} C^\infty_0(\Rn);
  \label{J-eq}
\end{equation}
and there is a unique topology on the domain which makes $\cal J$ a
homeomorphism, when the codomain has the product topology. (For brevity,
indexation on $\prod$ is suppressed below.)

The dual of $\cal J$'s codomain is isomorphic to $\prod \cal D'(\Rn)$, for
any continuous linear functional $F$ acts on $\varphi=(\varphi_j)$ as
$F(\varphi)= \sum F\circ I_j(\varphi_j)$ where $I_j$ sends $\varphi_j$ into
$(0,\dots,\varphi_j,\dots,0)$; since $F\circ I_j$ is continuous it is in
$\cal D'(\Rn)$. 

Now $\cal D'(\Rn,\wedge^k\Cn)$ may be defined as the dual of
$C^\infty_0(\Rn,\wedge^k\Cn)$; equipping dual spaces with their
$\op{w}^*$-topologies, there is by transposition a linear homeomorphism
\begin{equation}
  \cal D'(\Rn,\wedge^k\Cn) \xleftarrow{\;\,\cal J'\,}
  \prod \cal D'(\Rn),
  \label{J'-eq}
\end{equation}
so for $(u_J)$ in $\prod\cal D'(\Rn)$ and
$\varphi=\sum{}'\varphi_J\,dz^J$ in $C^\infty_0(\Rn\wedge^k\Cn)$,
\begin{equation}
  \dual{\cal J'(u_J)}{\varphi}=\dual{(u_J)}{\cal J\varphi}=
  \sum{}'\dual{u_J}{\varphi_J}.
  \label{JJ'-eq}
\end{equation}
Indeed, $\cal J'$ is surjective because any $u\in \cal
D'(\Rn,\wedge^k\Cn)$ gives rise to the continuous linear functional
$u\circ\cal J^{-1}$, which is in $\prod\cal D'(\Rn)$,
in view of \eqref{J-eq} ff, so that for some $(u_J)$ it holds for all
$(\varphi_J)$ in $\prod C^\infty_0(\Rn)$ that
\begin{equation}
  \dual{u}{\cal J^{-1}(\varphi_J)}=u\circ\cal J^{-1}(\varphi_J)
  =\dual{(u_J)}{(\varphi_J)}=\dual{\cal J' (u_J)}{\cal J^{-1}(\varphi_J)}.
\end{equation}
Therefore $u=\cal J'(u_J)$; the rest of \eqref{J'-eq} is straightforward.

It is natural to write $u=\sum{}'u_J\,dz^J$ instead of $u=\cal J'(u_J)$, and
thereby \eqref{JJ'-eq} attains the following more intuitive form,
\begin{equation}
  \dual{u}{\varphi}=\dual{\sum{}'u_J\,dz^J}{\sum{}' \varphi_J\,dz^J}
  =\sum{}' \dual{u_J}{\varphi_J}.
  \label{JJ'-id}
\end{equation}

The usual denseness of $C^\infty_0$ in $\cal D'$ carries over to the
$k$-form-valued spaces by \eqref{J-eq} and \eqref{J'-eq}, and therefore
multiplication, $M_\psi$, by $\psi\in C^\infty(\Rn)$ and the operators
$\partial_j$ extend in a unique way to $\cal D'(\Rn,\wedge^k\Cn)$ as
usual. More precisely, $M_\psi$ and $\partial_j$ are both continuous on
$C^\infty_0(\Rn,\wedge^k\Cn)$ because their definitions show that they
act on each $\varphi_J$ (i.e.\ they commute with $\cal J$), so the
transposed operators $(M_\psi)'$ and $\partial_j'$ act as $M_\psi$ and
$-\partial_j$ on each $u_J$ in \eqref{JJ'-id}, respectively, and they
are therefore denoted by the latter symbols throughout.

In this way one finds that $\cal D'(\Rn,\wedge^k\Cn)$ is a
$C^\infty(\Rn)$-module and that any differential operator $P(\partial)$ with
coefficients in $C^\infty(\Rn)$ is well defined by its action on each $u_J$
(and independent of the canonical choice of $dz^J$); transposition moreover
follows the usual rule. In particular this hold for the exterior derivative
$d_k$, and for this the identies
\begin{gather}
  d^2:= d_{k+1}\circ d_k\equiv 0 \quad\text{on}\quad \cal
  D'(\Rn,\wedge^k\Cn),
  \\
  d(\sum{}'f_J\,dz^J) = \sum_{J}{}'\sum_{j=1}^n \partial_jf_J\, dz^j\wedge dz^J 
  \label{dd-id}
\end{gather}
are obtained by transposition and by closure from
$C^\infty_0(\Rn,\wedge^k\Cn)$, respectively.

%
\baselineskip=0.8\baselineskip

\providecommand{\bysame}{\leavevmode\hbox to3em{\hrulefill}\thinspace}

\end{document}